\keywords{enumeration, fine-grained complexity, constant delay, union of conjunctive queries, unbalanced triangle detection}
\def\e#1{\emph{#1}}
\def\scs{\mathbf{S}}
\def\tup#1{\vec{#1}}
\def\cqa{\,\mathbin{\mbox{:-}}\,}
\newcommand{\var}{\operatorname{var}}
\newcommand{\free}{\operatorname{free}}
\newcommand{\atoms}{\operatorname{atoms}}
\newcommand{\calH}{{\mathcal H}}
\newcommand{\graph}[1]{\calH({#1})}
\newcommand{\DelayClin}{{\sf DelayC_{lin}}}
\newcommand{\pEnum}[1]{\text{\sc Enum}\langle#1\rangle}
\newcommand{\BMM}{\textsc{BMM}}
\newcommand{\hyperclique}{\textsc{Hyperclique}}
\newcommand{\fourclique}{4\textsc{-Clique}}
\newcommand{\threesum}{\textsc{3SUM}}
\newcommand{\UTL}{\textsc{VUTL}}
\newcommand{\UTD}{\textsc{VUTD}}
\newcommand{\aUTD}[1]{{#1}\textsc{-VUTD}}
\newcommand{\size}[1]{\lvert{#1}\rvert}
\newcommand{\set}[1]{\lbrace{#1}\rbrace}
\newcommand{\connectors}{\operatorname{connectors}}
\newcommand{\calR}{{\mathcal R}}
\theoremstyle{plain}
\theoremstyle{defC}\newtheorem{exaC}[thm]{Example}
\begin{document}
\title[Unbalanced Triangle Detection and Enumeration Hardness for UCQs]{Unbalanced Triangle Detection and Enumeration Hardness for Unions of Conjunctive Queries}

\author[K.~Bringmann]{Karl Bringmann\lmcsorcid{0000-0003-1356-5177}}[a]
\author[N.~Carmeli]{Nofar Carmeli\lmcsorcid{0000-0003-0673-5510}}[b]
\address{Saarland University and Max Planck Institute for Informatics, Saarland Informatics Campus, Saarbrücken, Germany}
\email{bringmann@cs.uni-saarland.de}
\address{Inria, LIRMM, University of Montpellier, CNRS, Montpellier, France}
\email{nofar.carmeli@inria.fr}

\begin{abstract}
  We study the enumeration of answers to Unions of Conjunctive Queries (UCQs) with optimal time guarantees. More precisely, we wish to identify the queries that can be solved with linear preprocessing time and constant delay. Despite the basic nature of this problem, it was shown only recently that UCQs can be solved within these time bounds if they admit free-connex union extensions, even if all individual CQs in the union are intractable with respect to the same complexity measure. Our goal is to understand whether there exist additional tractable UCQs, not covered by the currently known algorithms.

As a first step, we show that some previously unclassified UCQs are hard using the classic 3SUM hypothesis, via a known reduction from 3SUM to triangle listing in graphs. As a second step, we identify a question about a variant of this graph task that is unavoidable if we want to classify all self-join-free UCQs: is it possible to decide the existence of a triangle in a vertex-unbalanced tripartite graph in linear time? We prove that this task is equivalent in hardness to some family of UCQs. Finally, we show a dichotomy for unions of two self-join-free CQs if we assume the answer to this question is negative.

In conclusion, this paper pinpoints a computational barrier in the form of a single decision problem that is key to advancing our understanding of the enumeration complexity of many UCQs. Without a breakthrough for unbalanced triangle detection, we have no hope of finding an efficient algorithm for additional unions of two self-join-free CQs. On the other hand, a sufficiently efficient unbalanced triangle detection algorithm can be turned into an efficient algorithm for a family of UCQs currently not known to be tractable.
\end{abstract}

\maketitle

\section{Introduction}

Answering queries over relational databases is a fundamental problem in data management.
As the available data in the world grows bigger, so grows the importance of finding the best possible complexity for solving this problem.
Since the query itself is usually significantly smaller than the size of the database, it is common to use \emph{data complexity}~\cite{vardi1982complexity}: we treat the query as fixed, and examine the complexity of finding the answers to the given query over the input database.
As the number of answers to a query may be much larger than the size of the database itself, we cannot hope to generate all answers in linear time in the size of the input. Instead, we use \emph{enumeration} measures.
Since we must read the entire input to verify whether the query has answers, and we must print all answers, the measure of linear preprocessing time and constant delay between two successive answers can be seen as the optimal time complexity for answering queries. The class of queries that can be answered within these time bounds is denoted $\DelayClin$, and recent research asks which queries are in this class~\cite{DurandTutorial20, berkholz2020tutorial}. 

Proving that a query belongs to the class $\DelayClin$ can be achieved by a variety of algorithmic techniques, coupled with insights into the query structure. 
However, proving that a query is unconditionally \emph{not} contained in this class is beyond the state-of-the-art lower bound techniques for the RAM model.
Therefore, one must resort to \emph{conditional lower bounds}. That is, start from a hypothesis on the time complexity of a well-studied problem
and design a reduction from your problem of choice; this proves a lower bound that holds conditional on the starting hypothesis.
While such a conditional lower bound is no absolute impossibility result, it identifies an algorithmic breakthrough that is necessary to find better algorithms for your problem of choice, and thus it yields strong evidence that no better algorithm exists.
This paradigm has been successfully applied in the field of \emph{fine-grained complexity theory} to obtain tight conditional lower bounds for many different problems, see, e.g., \cite{williams2018some,Bringmann19}.
When searching for dichotomies (that characterize which problems in a class admit efficient algorithms), research aiming for lower bounds (conditional or not) has another advantage. The reductions showing hardness often succeed only in some cases. This brings out the other cases, directing us to focus our attention where we have hope for finding efficient algorithms without a major computational breakthrough. This approach has been useful for finding tractable cases that were previously unknown~\cite{DBLP:conf/pods/CarmeliK19}.

When considering \emph{Conjunctive Queries} (CQs) without \emph{self-joins}, the tractability of enumerating query answers with respect to $\DelayClin$ is well understood. The queries with a \emph{free-connex} structure are tractable~\cite{bdg:dichotomy}; these are acyclic queries that remain acyclic with the addition of an atom containing the free variables.
This tractability result is complemented by conditional lower bounds forming a dichotomy:
a \emph{self-join-free} CQ is in $\DelayClin$ if and only if it is free-connex~\cite{bb:thesis,bdg:dichotomy}.
The hardness of cyclic CQs assumes the hardness of finding hypercliques in a hypergraph~\cite{bb:thesis}, while the hardness of acyclic non-free-connex CQs assumes the hardness of Boolean matrix multiplication~\cite{bdg:dichotomy}.
This dichotomy assumes that the CQ does not contain self-joins (that is, every relation appears in at most one atom of the query), which enables assigning different atoms with different relations when reducing a hard problem to query answering.
Not much is known regarding the case with self-joins, but we do know that there are cases where self-joins affect the complexity~\cite{berkholz2020tutorial, carmeli2023selfjoins}.

The next natural class of queries to consider is Unions of Conjunctive Queries (UCQs), which is equivalent to positive relational algebra.
A union of tractable CQs is known to be tractable~\cite{csl/DurandS11}. However, when the union contains an intractable CQ, the picture gets much more complex.
Note that a union that contains an intractable CQ may be equivalent to a union of tractable CQs; in which case, the UCQ is tractable~\cite{DBLP:conf/pods/CarmeliK19}. This can happen for example if the union is comprised of an intractable CQ $Q_1$ and a tractable CQ $Q_2$ subsuming it; then the entire union is equivalent to $Q_2$.
Thus, it makes sense to consider non-redundant UCQs.
It was claimed that a non-redundant UCQ that contains an intractable CQ is necessarily intractable~\cite{icdt/BerkholzKS18}. This claim was disproved in a surprising result showing that a UCQ may be tractable even if it consists solely of intractable CQs~\cite{DBLP:conf/pods/CarmeliK19}.
Specifically, Carmeli and Kröll showed that whenever each CQ in a union can become free-connex (and thus tractable) via a so-called \emph{union extension}, then the UCQ is in $\DelayClin$~\cite{DBLP:conf/pods/CarmeliK19}.
Moreover, every UCQ that we currently know to be in $\DelayClin$ has a free-connex union extension.

In the case of a union of two intractable CQs,
known conditional lower bounds show that
these extensions capture all tractable queries~\cite{DBLP:conf/pods/CarmeliK19}. These lower bounds rely on the same hypotheses as those used for CQs, in addition to a hypothesis on the hardness of detecting a $4$-clique in a graph.
The case of a union of a tractable CQ and an intractable CQ is not yet completely classified, and 
Carmeli and Kröll~\cite{DBLP:conf/pods/CarmeliK19} identified
several open examples, that is,
specific unclassified queries for which the current techniques for an algorithm or a conditional lower bound do not apply.

\paragraph*{Our Contribution}

Our aim is to understand whether there exist additional tractable UCQs, not covered by the currently known algorithms.
We start by showing that some examples of UCQs left open in~\cite{DBLP:conf/pods/CarmeliK19} are hard assuming the standard \emph{3SUM conjecture} (given $n$ integers, it is not possible to decide in subquadratic time whether any three of them sum to~0).
Our reductions go through an intermediate hypothesis that we call Vertex-Unbalanced Triangle Listing (\UTL{};
listing all triangles in an unbalanced tripartite graph requires super-linear time in terms of input and output size).
Specifically, building on a reduction by Fischer, Kaliciak, and Polak~\cite{FischerK024}, we show that the \UTL{} hypothesis is implied by the 3SUM conjecture. We then use \UTL{} to show the hardness of some previously unclassified UCQs.

When trying to reduce \UTL{} to further unclassified UCQs, we identified several issues. 
This led us to introduce a similar hypothesis on Vertex-Unbalanced Triangle Detection (\UTD{}; determining whether an unbalanced tripartite graph contains triangles requires super-linear time in terms of input size).\footnote{Our triangle detection instances are \emph{vertex-unbalanced}, in contrast to a recently formulated hypothesis with the same name that is \emph{edge-unbalanced}~\cite{KopelowitzW20}, see \autoref{appendix:otherUTDhypothesis} for a discussion.}
The \UTD{} hypothesis implies the \UTL{} hypothesis, and thus the former is easier to reduce to UCQs. 
For a discussion of why \UTD{} is a reasonable hypothesis, we refer to \autoref{sec:utd-definition}.
We show that \UTD{} exactly captures the hardness of some family of UCQs that do not have free-connex union extensions: The \UTD{} hypothesis holds if and only if no query in this family is in $\DelayClin$. Thus, determining whether the \UTD{} hypothesis holds is unavoidable if we want to classify all self-join-free UCQs.
Next, we focus on unions of two CQs. We show how, assuming the \UTD{} hypothesis, we can conclude the hardness of any union of one tractable CQ and one intractable CQ that does not have a free-connex union extension. Moreover, if \UTD{} holds, previously known hardness results apply without assuming additional hypotheses. This results in a dichotomy, which is our main result: a union of two self-join-free CQs is in $\DelayClin$ if and only if it has a free-connex union extension, assuming the \UTD{} hypothesis. For these UCQs, we conclude that the currently known algorithms cover all tractable cases that do not require a major breakthrough regarding \UTD{}. 

The main conclusion from our paper is that to make progress in understanding the enumeration complexity of UCQs, it suffices to study the single decision problem of detecting triangles in unbalanced graphs.
More precisely, if we ever find a linear-time algorithm for unbalanced triangle detection, we will also get a breakthrough in UCQ evaluation in the form of an algorithm for some UCQs that do not have a free-connex union extension.
If on the other hand, we assume that there is no linear-time algorithm for unbalanced triangle detection, then for a large class of UCQs (namely, unions of two self-join-free CQs) the currently known algorithms cover all tractable cases.

The paper is organized as follows.
\autoref{sec:Preliminaries} provides basic definitions and results that we use throughout the paper.
In \autoref{sec:utd-definition}, we define \UTL{} and \UTD{}, discuss their connections to other hypotheses, and use them to address some examples of UCQs that were previously unclassified.
In \autoref{sec:equivalence}, we present a family of UCQs that is equivalent in hardness to \UTD{}.
Then, \autoref{sec:ucq-classification} shows the classification of UCQs that we can achieve based on the \UTD{} hypothesis; this proves our main dichotomy result.
The next section contains discussions related to alternative hypotheses.
We discuss how we can conclude the hardness for relaxed time requirements (polylogarithmic instead of constant delay) based on a strengthening of the \UTD{} hypothesis in \autoref{sec:super-const}, and in \autoref {appendix:otherUTDhypothesis} we discuss the difference between \UTD{} and a similar hypothesis for edge-unbalanced graphs that was recently introduced. 
We conclude in \autoref{sec:conclusion}.

\section{Preliminaries} \label{sec:Preliminaries}

\subsubsection*{Databases and Queries}

A \e{relation} is a set of tuples of \e{constants}, where each tuple has the same arity (length).
A \e{schema} $\scs$ is a collection of \e{relation symbols} $R$, each with an associated arity.
A \e{database} $D$ (over the schema $\scs$) associates with each relation symbol $R$ a finite relation, which we denote by $R^D$, whose arity is that of $R$ in $\scs$.

A {\em Conjunctive Query} (CQ) $Q$ over a schema $\scs$ is defined by an expression of the form
$
Q(\tup x) \cqa R_1(\tup{t}_1),\dots,R_n(\tup{t}_n)
$,
where each $R_i$ is a relation symbol of $\scs$, each $\tup{t}_i$ is a
tuple of variables and constants with the same arity as $R_i$, and
$\tup{x}$ is a tuple of variables from
$\tup{t}_1,\dots,\tup{t}_n$.
We usually omit the explicit specification of the schema $\scs$, and assume that it consists of the relation symbols
that occur in the query at hand.
We call $Q(\tup x)$ the \e{head} of $Q$,
and $R_1(\tup{t}_1),\dots,R_n(\tup{t}_n)$ the \e{body} of $Q$. Each
$R_i(\tup t_i)$ is an \e{atom} of $Q$, and the set of all atoms of $Q$ is denoted $\atoms(Q)$.
When the order of the variables in an atom is not important for our discussion, we sometimes denote an atom $R_i(\tup t_i)$ by $R_i(T_i)$ where $T_i$ is a set of variables.
We use $\var(Q)$ to denote the set of variables that occur in $Q$.
We say that $Q$ is \e{self-join-free} if every relation symbol occurs in it at most once. If a CQ is self-join-free, we use $\var(R_i)$ to denote the set of variables that occur in the atom containing $R_i$.
The variables occurring in
the head are called the \e{free variables} and denoted by $\free(Q)$. The variables occurring in the body but not in the head are called \e{existential}.
A \e{homomorphism} $h$ from a CQ $Q$ to a database
$D$ is a mapping of the variables in $Q$ to the constants of $D$, such
that for every atom $R_i(\tup t_i)$ of the CQ, we have that $h(\tup t_i)\in R^D$. 
Each such homomorphism $h$ yields an \emph{answer} $h(\tup x)$ to $Q$.
We denote by $Q(D)$ the set of all answers to $Q$ on $D$.

A \emph{Union of Conjunctive Queries (UCQ)} $Q$ is a finite set of CQs, denoted $Q=\bigcup_{i=1}^{\ell} Q_i$, where $\free(Q_{i})$ is the same for all $1\leq i\leq \ell$.
The set of answers to $Q$ over a database $D$ is the union $Q(D)=\bigcup_{i=1}^{\ell} Q_i(D)$.
Let $Q_1$ and $Q_2$ be CQs.
A \emph{body-homomorphism} from $Q_2$ to $Q_1$ is a mapping $h:\var(Q_2)\rightarrow\var(Q_1)$ such that for every atom $R(\vec{v})$ of $Q_2$, we have that $R(h(\vec{v}))\in Q_1$.
A \emph{body-isomorphism} from $Q_2$ to $Q_1$ is a bijective mapping $h$ such that $h$ is a body-homomorphism from $Q_2$ to $Q_1$ and $h^{-1}$ is a body-homomorphism from $Q_1$ to $Q_2$. We say that two CQs are \emph{body-isomorphic} if there is a body-isomorphism between them.
A \emph{homomorphism} from $Q_2$ to $Q_1$ is a body-homomorphism $h$ such that $h(\free(Q_2))=\free(Q_1)$.
It is well known that $Q_1$ is contained in $Q_2$ (i.e., $Q_1(D)\subseteq Q_2(D)$ on every input $D$) iff there exists a homomorphism from $Q_2$ to $Q_1$~\cite{homomorphism/redundancy}.
We say that a UCQ is \emph{non-redundant} if it does not contain two different CQs such that there is a homomorphism from one to the other.
We often assume that UCQs are non-redundant; otherwise, an equivalent non-redundant UCQ can be obtained by removing CQs.

\subsubsection*{Enumeration Complexity}
An \emph{enumeration problem} $P$ is a collection of pairs $(I,Y)$
where $I$ is an \emph{input} and $Y$ is a finite set of \emph{answers}
for $I$, denoted by $P(I)$. An \emph{enumeration algorithm}
$\mathcal{A}$ for an enumeration problem $P$ is an algorithm that
consists of two phases: \e{preprocessing} and \e{enumeration}. During
preprocessing, $\mathcal{A}$ is given an input \emph{I}, and it may build data structures. During the enumeration phase,
$\mathcal{A}$ can access the
data structures built during
preprocessing, and
it emits the answers $P(I)$, one by one, without repetitions. The time
between printing any two answers during the enumeration phase is
called \emph{delay}.
In this paper, an enumeration problem refers to a query $Q$, the input is a database $D$, and the answer set is
$Q(D)$. Such a problem is denoted $\pEnum{Q}$. 
We adopt \emph{data complexity}, where the query is treated as fixed, and the complexity is with respect to the size of the representation of the database.
We work on the common \emph{Random Access Machine} (RAM) model, where each memory cell stores $\Theta(\log |I|)$ bits, and which supports lookup tables of polynomial size that can be queried in constant time.
The enumeration class $\DelayClin$ is defined as the class of all enumeration problems
that have an enumeration algorithm with $O(|I|)$ preprocessing time and $O(1)$ delay.
Note that this class does not impose a restriction on the memory used.\footnote{Not much is known regarding the complexity of UCQ answering when the memory is restricted. See~\cite[Section 6.3]{DBLP:conf/pods/CarmeliK19} for a related discussion.}
To ease notation, we identify the query with its corresponding enumeration problem, and we denote $Q\in\DelayClin$ to mean $\pEnum{Q}\in\DelayClin$.

\subsubsection*{Hypergraphs}
A {\em hypergraph} $\calH=(V,E)$ is a set $V$ of {\em vertices} and a set $E$ of non-empty subsets of $V$ called {\em hyperedges} (sometimes edges).
Given $S\subseteq V$, the induced subgraph $\calH[S]$ is $(S,E')$ where $E'=\{e\cap S\mid e\in E\}\setminus \{\emptyset\}$.
Two vertices in a hypergraph are {\em neighbors} if they appear in a common edge.
A {\em clique} of a hypergraph is a set of vertices that are pairwise neighbors in $\calH$.
If every edge in $\calH$ has exactly $k$ vertices, we call $\calH$ $k$-{\em uniform}; note that $2$-uniform hypergraphs are just graphs.
For any $\ell > k$, an $\ell$-{\em hyperclique} in a $k$-uniform hypergraph $\calH$ is a set $V'$ of $\ell$ vertices,
such that every subset of $V'$ of size $k$ forms a hyperedge.
A {\em path} of $\calH$ is a sequence of vertices such that every two consecutive vertices are neighbors. The {\em length} of a path $v_1,\ldots,v_n$ is $n-1$.
A {\em simple path} of $\calH$ is a path where every vertex appears at most once.
A {\em chordless path} is a simple path in which no two non-consecutive vertices are neighbors.
A {\em cycle} is a path that starts and ends in the same vertex.
A {\em simple cycle} is a cycle of length $3$ or more where every vertex appears at most once (except for the first and last vertex).
A {\em chordless cycle} is a simple cycle such that no two non-consecutive vertices are neighbors and no edge contains all cycle vertices.
A {\em tetra} of size $k$ (where $k\geq 3$) is a set of $k$ vertices such that every $k-1$ of them are contained in an edge, and no edge contains all $k$ vertices.
A hypergraph is {\em cyclic} if it contains a chordless cycle or a tetra. A hypergraph that is not cyclic is called {\em acyclic} (this is known as $\alpha$-acyclicity)~\cite{brault2016hypergraph}.
Note that ``a 2-uniform hypergraph is cyclic'' is a different way of saying that ``a graph has a cycle''.
A hypergraph is {\em connected} if for any two vertices $u,v$ there is a path starting in $u$ and ending in $v$.
A {\em tripartite graph} is comprised of three sets of vertices $(V_1,V_2,V_3)$ and three sets of edges $E_{1,2}\subseteq V_1\times V_2$, $E_{2,3}\subseteq V_2\times V_3$, and $E_{1,3}\subseteq V_1\times V_3$. A \emph{triangle} in a tripartite graph is a triple of vertices $v_1,v_2,v_3$ such that $(v_1,v_2)\in E_{1,2}$, $(v_2,v_3)\in E_{2,3}$, and $(v_1,v_3)\in E_{1,3}$.

\subsubsection*{Query Structure}
We associate a hypergraph $\calH(Q)=(V,E)$ to a CQ $Q$ where the vertices are the variables of $Q$, and every hyperedge is a set of variables occurring in a single atom of $Q$. That is, $E=\{\{v_1,\ldots,v_\ell\} \mid R_i(v_1,\ldots,v_\ell)\in\atoms(Q)\}$. With slight abuse of notation, we identify atoms of $Q$ with hyperedges of $\calH(Q)$.
A CQ $Q$ is said to be {\em acyclic} if $\calH(Q)$ is acyclic.
Given a CQ $Q$ and a set $S\subseteq\var(Q)$, an {\em $S$-path} is a chordless path $(x,z_1,\ldots,z_k,y)$ in $\calH(Q)$ with $k\geq 1$, such that $x,y\in S$, and $z_1,\ldots,z_k\not\in S$.
A CQ $Q$ is {\em $S$-connex} if it is acyclic and it does not contain an $S$-path~\cite{bdg:dichotomy}.
When referring to a CQ $Q$, we say {\em free-path} for $\free(Q)$-path and {\em free-connex} for $\free(Q)$-connex.
To summarize, every CQ is one of the following:
(1) free-connex;
(2) acyclic and not free-connex, and therefore contains a free-path; or
(3) cyclic, and therefore contains a chordless cycle or a tetra.
We call free-paths, chordless cycles, and tetras {\em difficult structures}. Every CQ that is not free-connex contains a difficult structure.

\subsubsection*{CQ Complexity}
Bagan, Durand, and Grandjean showed that the answers to free-connex CQs can be efficiently enumerated~\cite{bdg:dichotomy}.
This result was complemented by conditional lower bounds showing that other CQs are not in $\DelayClin$, assuming the following hypotheses:

\begin{defi}[\BMM{} Hypothesis]
Two Boolean $n\times n$ matrices cannot be multiplied in time $O(n^2)$.
\end{defi}
\begin{defi}[\hyperclique{} Hypothesis]
For all $k\geq 3$, it is not possible to determine the existence of a $k$-hyperclique in a $(k-1)$-uniform hypergraph with $n$ vertices in time $O(n^{k-1})$.
\end{defi}

Note that the \hyperclique{} hypothesis in particular postulates (for $k=3$) that determining the existence of a triangle in a given graph cannot be solved in time $O(n^2)$.

Boolean matrix multiplication can be encoded in free-paths, and thus self-join-free acyclic CQs that are not free-connex are not in $\DelayClin$, assuming the \BMM{} hypothesis~\cite{bdg:dichotomy}.
The detection of hypercliques can be encoded in tetras and chordless cycles, and thus the first answer to self-join-free cyclic CQs cannot be found in linear time, assuming the \hyperclique{} hypothesis~\cite{bb:thesis}.
As \hyperclique{} implies \BMM{} (\autoref{prop:BMM-UTD}), the known dichotomy can be summarized as:

\begin{thmC}[\cite{bdg:dichotomy,bb:thesis}]\label{theorem:CQdichotomy}
	Let $Q$ be a self-join-free CQ.
	\begin{enumerate}
		\item If $Q$ is free-connex, then $Q\in\DelayClin$.
		\item Otherwise, $Q\not\in\DelayClin$, assuming the \hyperclique{} hypothesis.
	\end{enumerate}
\end{thmC}

We call a CQ {\em difficult} if it matches the last case of \autoref{theorem:CQdichotomy}. Note that a difficult CQ is self-join-free and is either acyclic and not free-connex or cyclic. In other words, a CQ is difficult if it is self-join-free and contains a difficult structure.

\subsubsection*{UCQ Complexity}

The results regarding the tractability of CQs carry over to UCQs if we take into account that CQs in the same union can sometimes ``help'' each other. This is formalized as follows.
Let $Q_s$ be a CQ. We say that $Q_s$ \emph{supplies} a set of variables $V_s\subseteq\var(Q_s)$ if there is $V_s\subseteq S \subseteq \free(Q_s)$ such that $Q_s$ is $S$-connex. Note that when $Q_s$ is free-connex, it supplies any subset of its free variables by taking $S=\free(Q_s)$.

An \emph{extension sequence} for a UCQ $Q^1= Q_1^1\cup\ldots\cup Q_n^1$ is a sequence $Q^1,\ldots,Q^N$ of UCQs where, for all $1< j \le N$, the query $Q^{j}=Q_1^{j}\cup\ldots\cup Q_n^{j}$ is obtained from $Q^{j-1}$ as follows. Given a sequence $\vec{v}_j$ of variables supplied by a CQ $Q_{s(j)}^\ell$ with $\ell\le j$ and $1\le s(j)\le n$, and given a relational symbol $R_j$ that does not appear in $Q^{j-1}$, we have that, for all
$i\in\{1,\dots,n\}$, either 
(1) $Q_i^{j}=Q_i^{j-1}$ or
(2) there is a body-homomorphism $h_{{s(j)},i}$ from $Q_{s(j)}^1$ to $Q_i^1$, and $Q_i^{j}$ is obtained by adding the atom $R_j(h_{{s(j)},i}(\vec{v}_j))$ to $Q_i^{j-1}$.
If such an extension sequence exists, we call $Q^N$ a \emph{union extension} of $Q^1$.

To ease the discussion, we also define the following. We call an atom that appears in a union extension but not in the original query a \emph{virtual atom}. Let $Q_1,Q_2$ be CQs. We say that $Q_2$ \emph{provides} a set of variables $V_1\subseteq\var(Q_1)$ \emph{to} $Q_1$ if
there exist $V_2\subseteq\free(Q_2)$ and a body-homomorphism $h$ from $Q_2$ to $Q_1$ such that $Q_2$ supplies $V_2$ and $h(V_2)=V_1$.
Note that when $Q_2$ provides a set of variables to $Q_1$, we are allowed to extend $Q_1$ in a union extension by an atom containing these variables thanks to $Q_2$.
We say that $Q_2$ provides a difficult structure of $Q_1$ if it provides to $Q_1$ the set of variables that appear in this difficult structure.
We call a union extension \emph{free-connex} if all CQs it contains are free-connex.

\begin{thmC}[\cite{DBLP:conf/pods/CarmeliK19}]\label{theorem:UCQtractability}
	If $Q$ is a UCQ that has a free-connex union extension, then $Q\in\DelayClin$.
\end{thmC}

\begin{exa}
    Let $Q=Q_1\cup Q_2$ with 
\begin{align*}
Q_1(x,y,w)&\leftarrow R_1(x,z),R_2(z,y),R_3(y,w)\text{ and }\\ 	
Q_2(x,y,w)&\leftarrow R_1(x,y),R_2(y,w).
\end{align*}
The CQ $Q_1$ is difficult: it is self-join free, acyclic, and not free-connex, as it contains the free-path $(x,z,y)$.
The CQ $Q_2$ is free-connex, and it supplies the variables $\{x,y,w\}\subseteq\free(Q_2)$. Since there is a body-homomorphism $h:\var(Q_2)\rightarrow\var(Q_1)$ 
with $h((x,y,w))=(x,z,y)$, we can say that $Q_2$ provides the variables $\{x,z,y\}$ or the free-path $(x,z,y)$ to $Q_1$.
Thus, $Q$ has the union extension $Q_1^+\cup Q_2$, where $Q_1^+$ is obtained by adding the virtual atom $R(x,z,y)$ to $Q_1$. Even though $Q_1$ is not free-connex, since the extension is free-connex, we conclude that $Q\in\DelayClin$.
\end{exa}

Existing lower bounds for UCQs rely on the hypotheses used for CQs and on the following:

\begin{defi}[\fourclique{} Hypothesis]
Determining whether a given graph with $n$ vertices contains a $4$-clique has no algorithm running in time $O(n^3)$.
\end{defi}

Note that the \fourclique{} hypothesis is not immediately related to the \hyperclique{} hypothesis, since the \hyperclique{} hypothesis considers (1) on graphs only triangle detection, (2) $4$-clique detection only on 3-uniform hypergraphs, and (3) includes $k > 4$.

\begin{thmC}[\cite{DBLP:conf/pods/CarmeliK19}]\label{thm:two-hards-intractability}
Let $Q$ be a union of two difficult CQs. If $Q$ does not admit a free-connex union extension, then $Q\not\in\DelayClin$, assuming the \hyperclique{} and \fourclique{} hypotheses.
\end{thmC}

Consider a union of two CQs $Q = Q_1 \cup Q_2$. If both $Q_1$ and $Q_2$ are free-connex, then trivially $Q$ has a free-connex union extension, and thus $Q \in \DelayClin$ (by Theorem~\ref{theorem:UCQtractability}). If both $Q_1,Q_2$ are difficult, then $Q \in \DelayClin$ iff $Q$ admits a free-connex union extension (by Theorem~\ref{thm:two-hards-intractability}). 
To classify all unions of two self-join-free CQs, it remains to study queries where $Q_1$ is free-connex, $Q_2$ is difficult, and $Q$ does not have a free-connex union extension.

\section{Unbalanced Triangle Detection and Related Problems}\label{sec:utd-definition}

In this section, we introduce the unbalanced triangle detection hypothesis that is central to this work, and we show its connections to other problems. 
We start our exposition with the well-known 3SUM conjecture and show that it implies a certain hypothesis on triangle listing (VUTL). After demonstrating the usefulness of VUTL by an example, we discuss its shortcomings, which leads us to pose an analogous hypothesis on triangle detection (VUTD). Finally, we demonstrate the usefulness of VUTD by an example and discuss related work.

We start with the classic \threesum{} conjecture from fine-grained complexity theory~\cite{gajentaan1995class}:

\begin{defi}[\threesum{} Conjecture]
Given $n$ integers, deciding whether any three of them sum to 0 has no algorithm running in time $O(n^{2-\varepsilon})$ for any $\varepsilon > 0$.
\end{defi}

We show that the \threesum{} conjecture implies that listing all triangles in an unbalanced tripartite graph requires super-linear time in terms of input and output size.

\begin{description}
 \item[Vertex-Unbalanced Triangle Listing (\UTL{}) Hypothesis] For any constant $\alpha\in$\linebreak[4]$(0,1]$, listing all triangles in a tripartite graph with $|V_3|=n$ and $|V_1|=|V_2|=\Theta(n^\alpha)$
has no algorithm running in time $O(n^{1+\alpha}
+ t)$, where $t$ is the total number of triangles.
\end{description}

\begin{prop}\label{prop:UTL-threesum}
If the \UTL{} hypothesis fails, then the \threesum{} conjecture fails.
\end{prop}

Reductions from 3SUM to triangle listing problems have a long history, as they were pioneered by P\v{a}tra\c{s}cu~\cite{Patrascu10} and further developed in~\cite{DBLP:conf/soda/KopelowitzPP16,ChanH20,WilliamsX20,FischerK024}. We build on this work to prove our reduction. Specifically, we use the following result by Fischer, Kaliciak, and Polak~\cite{FischerK024}. Their result is formulated for the Set Intersection problem, but it can also be viewed as a result for triangle listing (as we will see soon).

\begin{defi}[Set Intersection Problem] 
Given sets $S_1,\ldots,S_N \subseteq \{1,\ldots,U\}$ each of size at most $s$ and a set of $q$
queries $Q \subseteq \{1,\ldots,N\}^2$, compute for each query $(i, j) \in Q$ the set intersection $S_i \cap S_j$.
\end{defi}

\begin{thm}[Set Intersection Hardness, see Theorem 1.6 in~\cite{FischerK024}] \label{thm:FischerKP24}
Let $0 \le \gamma < 1$ and $0 \le \delta \le 1 - \gamma$.
Unless the 3SUM conjecture fails, there is no algorithm for Set Intersection with parameters $|U| = O(n^{1+\delta-\gamma})$, $N = O(n^{(1+\gamma+\delta)/2})$, $s = O(n^{1-\gamma})$, $q = O(n^{1+\gamma})$, and total output size $O(n^{2-\delta})$ that runs in time $O(n^{2-\varepsilon})$, for any $\varepsilon > 0$.
\end{thm}
\begin{proof}[Proof of Proposition~\ref{prop:UTL-threesum}]
An instance of Set Intersection can be viewed as a triangle listing instance by considering the vertex sets $V_1 = V_2 = \{1,\ldots,N\}$ and $V_3 = \{1,\ldots,U\}$, connecting $(i,j) \in V_1 \times V_2$ by an edge if and only if $(i,j) \in Q$, connecting $(i,j) \in V_1 \times V_3$ by an edge if and only if $j \in S_i$, and similarly connecting $(i,j) \in V_2 \times V_3$ by an edge if and only if $j \in S_i$. Computing for each query $(i,j) \in Q$ the set intersection $S_i \cap S_j$ is then the same as listing all triangles containing the edge $(i,j)$, and thus answering all queries is the same as listing all triangles in the graph. 
Theorem~\ref{thm:FischerKP24} thus immediately implies the hardness of triangle listing:
by taking $\gamma = \delta$, we get that, assuming the \threesum{} conjecture, no algorithm lists all triangles in graphs with $|V_1|=|V_2| = O(n^{(1+\gamma+\delta)/2}) = O(n^{\frac{1}{2}+\delta})$ and $|V_3| = O(n^{1+\delta-\gamma}) = O(n)$ containing $O(n^{2-\delta})$ triangles in time $O(n^{2-\varepsilon})$ for any $\varepsilon > 0$. 

Fix a constant $\alpha \in(0,1]$, and assume for the sake of contradiction that it is possible to list all $t$ triangles in a tripartite graph with $|V_1|=|V_2|=\Theta(|V_3|^\alpha)$ in time $O(|V_3|^{1+\alpha}+ t)$. Set $\delta = \min\{\frac{\alpha}{3},\frac{1}{6}\}$. In what follows we show that then we can list all triangles in a graph with $|V_1|=|V_2| = O(n^{\frac{1}{2}+\delta})$ and 
$|V_3| = O(n)$ containing $O(n^{2-\delta})$ triangles in time $O(n^{2-\varepsilon})$ for some $\varepsilon > 0$, which by the last paragraph contradicts the \threesum{} conjecture. This proves that if the \UTL{} hypothesis fails then the \threesum{} conjecture fails.

So suppose we are given a triangle listing instance with $|V_1|=|V_2| = O(n^{\frac{1}{2}+\delta})$ and 
$|V_3| = O(n)$ containing $O(n^{2-\delta})$ triangles. First, we add isolated dummy nodes to ensure $|V_1|=|V_2| = \Theta(n^{\frac{1}{2}+\delta})$ and $|V_3| = \Theta(n)$.
Then we split $V_1$ and $V_2$ each
into $\Theta(n^\delta)$ sets of size $\Theta(n^{\frac{1}{2}})$. This yields $\Theta(n^{2\delta})$ subproblems, each
with $|V_1|=|V_2|=\Theta(n^{\frac{1}{2}})$ and $|V_3|=\Theta(n)$, and their total number of triangles is
$O(n^{2-\delta})$.
Listing the triangles of all subproblems yields the same result as listing the triangles of the original construction.
 
In case $\alpha = \frac{1}{2}$ we are done now. Indeed, each subproblem has
$|V_1|=|V_2|=\Theta(|V_3|^\alpha)$, and since each subproblem can be solved in time
$O(|V_3|^{1+\alpha} + t) = O(n^{1+\alpha} + t)$, the total running time to solve all subproblems
is $O(n^{2\delta+1+\alpha} + n^{2-\delta})$, since there are $\Theta(n^{2\delta})$ subproblems and
their total number of triangles is $O(n^{2-\delta})$.
We can simplify this time bound to $O(n^{2\delta+3/2} + n^{2-\delta}) =
O(n^{2-1/6})$ since $\alpha = \frac{1}{2}$ and $\delta = \frac{\alpha}{3} = \frac{1}{6}$. Since this running time is subquadratic, we obtain the desired contradiction to the \threesum{} conjecture.
 
In case $\alpha < \frac{1}{2}$, we further split $V_1$ and $V_2$ each into $\Theta(n^{\frac{1}{2}-\alpha})$ sets
of size $\Theta(n^\alpha)$. Together with the first splitting step (where we split into
$\Theta(n^{2\delta})$ subproblems), this yields $\Theta(n^{2\delta+1-2\alpha})$ subproblems, each with
$|V_1|=|V_2|=\Theta(n^\alpha)$ and $|V_3|=\Theta(n)$, and their total number of triangles is $O(n^{2-\delta})$.
If each subproblem could be solved in time $O(|V_3|^{1+\alpha} + t) = O(n^{1+\alpha} + t)$, then all
subproblems in total could be solved in time
$O( n^{2\delta+1-2\alpha} \cdot n^{1+\alpha} + n^{2-\delta} )$,
since there are $\Theta(n^{2\delta+1-2\alpha})$ subproblems and their total number of
triangles is $O(n^{2-\delta})$.
We can simplify this time bound to $O(n^{2-\alpha+2\delta} + n^{2-\delta}) =
O(n^{2-\frac{\alpha}{3}})$ since $\delta=\frac{\alpha}{3}$. Since this running time is subquadratic for any fixed constant $\alpha\in(0,\frac{1}{2})$, we again obtain the desired contradiction to the \threesum{} conjecture.

In case $\alpha > \frac{1}{2}$, we split $V_3$ into $\Theta(n^{1-\frac{1}{2\alpha}})$ sets of size
$\Theta(n^{\frac{1}{2\alpha}})$. Together with the first splitting step (where we split into
$\Theta(n^{2\delta})$ subproblems), this yields $\Theta(n^{2\delta+1-\frac{1}{2\alpha}})$ subproblems, each with
$|V_1|=|V_2|=\Theta(n^{\frac{1}{2}})$ and $|V_3|=\Theta(n^{\frac{1}{2\alpha}})$, so $|V_1|=|V_2|=\Theta(|V_3|^\alpha)$. If each
subproblem could be solved in time $O(|V_3|^{1+\alpha} + t)$, then all
subproblems in total could be solved in time
$O( n^{2\delta+1-\frac{1}{2\alpha}} \cdot |V_3|^{1+\alpha} + n^{2-\delta} )$,
since there are $\Theta(n^{2\delta+1-\frac{1}{2\alpha}})$ subproblems and their total number of
triangles is $O(n^{2-\delta})$. Plugging in $|V_3|=\Theta(n^{\frac{1}{2\alpha}})$ yields time
$O( n^{2\delta+1-\frac{1}{2\alpha} + \frac{1+\alpha}{2\alpha}} + n^{2-\delta} ) = O(n^{3/2+2\delta} + n^{2-\delta}) =
O(n^{2-1/6})$ since $\delta=\frac{1}{6}$, which again yields the desired contradiction.
\end{proof}

Using \UTL{}, some UCQs that were left open by prior work are not in $\DelayClin$.

\begin{exaC}[{\cite[Example 44]{DBLP:conf/pods/CarmeliK19}}]\label{example:cyclic-guarded-hard}
Let $Q=Q_1\cup Q_2$ with
\begin{align*}
	Q_1(x,z,y,v)&\cqa R_1(x,z,v),R_2(z,y,v),R_3(y,x,v)\text{ and}\\
	Q_2(x,z,y,v)&\cqa R_1(x,z,v),R_2(y,t_1,v),R_3(t_2,x,v).
\end{align*}
Note that $Q_2$ is free-connex (and so tractable on its own), while $Q_1$ is cyclic (and so intractable on its own).
The only difficult structure in $Q_1$ is the cycle $x,y,z$.
If the cycle were provided by $Q_2$, we would be able to eliminate the cycle via an extension by adding to $Q_1$ a virtual atom with the cycle variables. Such an extension would be free-connex, entailing the tractability of $Q$. However, $y$ is not provided (as no free variable of $Q_2$ maps to $y$ in the homomorphism from $Q_2$ to $Q_1$), and so the currently known algorithm cannot be applied.
The existing approach to show the difficulty of a CQ with a cycle is to encode the triangle finding problem to this cycle.
We assign the variables $x$, $y$, and $z$ with the vertices of the graph, while $v$ is always assigned a constant $\bot$.
That is, for every edge $(u,v)$ in the input graph, we include the tuple $(u,v,\bot)$ in all three relations. Then, $Q_1$ returns all tuples $(a,b,c,\bot)$ such that $(a,b,c)$ is a triangle.
However, in our case, such an encoding may result in $n^3$ answers to $Q_2$ given a graph with $n$ vertices. This means that if the input graph has triangles, we are not guaranteed to find one in $O(n^2)$ time by evaluating the union efficiently, and we do not obtain a contradiction to the \hyperclique{} hypothesis.
By using \emph{unbalanced} tripartite graphs (where one vertex set is larger than the other two), we can make use of the fact that $y$ is not provided to show hardness.
We encode triangle finding to our databases similarly to before, except we make sure to assign the large vertex set to $y$, while $x$ and $z$ are assigned vertex sets of size $n^\alpha$. This way, while $Q_1$ finds the triangles in the graph, $Q_2$ has at most $n^{3\alpha}$ answers (this happens because all variables mapping to the large vertex set in $Q_2$ are existential). Assuming $Q\in\DelayClin$, we can compute all answers over such a construction in $O(n^{1+\alpha}+n^{3\alpha}+t)$ time.
If we take $\alpha\leq \frac{1}{2}$, this is time $O(n^{1+\alpha}+t)$, contradicting the \UTL{} hypothesis and thus also the \threesum{} conjecture.
\qed
\end{exaC}

In \autoref{example:cyclic-guarded-hard}, we can use a triangle listing hypothesis because the variables of the cycle in $Q_1$ are free. However, there exist similar examples where some of these variables are existential. In these cases, we can use a similar argument if we start from triangle detection instead of triangle listing. This leads us to introduce the following hypothesis.

\begin{description}
 \item[Vertex-Unbalanced Triangle Detection (\UTD{}) Hypothesis]
For any $\alpha\in(0,1]$, determining whether there exists a triangle in a tripartite graph with $|V_3|=n$ and $|V_1|=|V_2|=\Theta(n^\alpha)$
has no algorithm running in time $O(n^{1+\alpha})$.
\end{description}

\begin{figure}
\centering
\begin{tikzpicture}[every edge quotes/.style = {auto, font=\footnotesize, color=gray, sloped}]
\node (VUTD) {\UTD{}};
\node (VUTL) [above right=of VUTD] {\UTL{}};
\node (3SUM) [above left=of VUTD] {\threesum{}};
\node (sVUTD) [below=of 3SUM] {sVUTD};
\node (Hyperclique) [below right=of VUTL] {\hyperclique{}};
\node (dummy) [above right=of Hyperclique] {};
\node (BMM) [below right=of dummy] {\BMM{}};
\draw (3SUM.east) edge[->, "Proposition~\ref{prop:UTL-threesum}"] (VUTL.west);
\draw (VUTD.east) edge[->, "immediate"] (VUTL.south);
\draw (VUTD.east) edge[->, "Proposition~\ref{prop:hyperclique-UTD}"{xshift=2pt}] (Hyperclique.west);
\draw (Hyperclique.east) edge[->, "Proposition~\ref{prop:BMM-UTD}"] (BMM.west);
\draw (sVUTD.east) edge[->, "immediate"{yshift=2pt}] (VUTD.west);
\end{tikzpicture}
\caption{Connections between the hypotheses mentioned in this paper. An edge $H_1\rightarrow H_2$ means that $H_1$ implies $H_2$. The hypotheses EUTD and \fourclique{} are not in the figure since we do not know of connections between them and the other hypotheses.} \label{fig:hypotheses}
\end{figure}
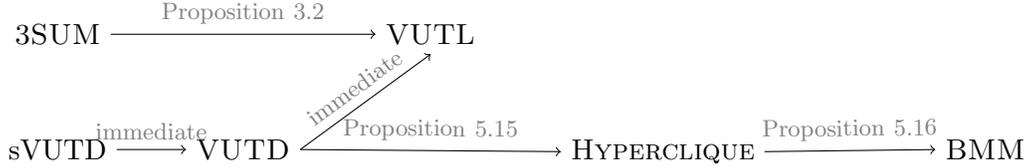

The \UTD{} hypothesis implies the \UTL{} hypothesis. See Figure~\ref{fig:hypotheses} for a summary of the connections between the hypotheses mentioned in this paper.
Unlike \UTL{}, the \UTD{} hypothesis cannot only be used when a CQ in the union contains a cycle, but also when it contains a free-path.
The following example, also left open by prior work, illustrates this case.

\begin{exaC}[{\cite[Example 35]{DBLP:conf/pods/CarmeliK19}}]\label{example:separated}
Let $Q=Q_1\cup Q_2$ with
\begin{align*}
	Q_1(x,y,w)&\cqa R_1(x,z),R_2(z,y),R_3(y,w)\text{ and}\\
	Q_2(x,y,w)&\cqa R_1(x,t_1),R_2(t_2,y),R_3(w,t_3).
\end{align*}
The only difficult structure in $Q_1$ is the free-path $x,z,y$, while $Q_2$ is free-connex.
If the free-path were provided by $Q_2$, we would be able to eliminate the free-path by adding to $Q_1$ a virtual atom with the free-path variables, resulting in a free-connex extension, and entailing the tractability of $Q$. However, $z$ is not provided.
The existing approach to show the difficulty of a CQ with a free-path is to encode the Boolean matrix multiplication problem to this path. However, in our case, such an encoding may result in $n^3$ answers to $Q_2$, so evaluating the union efficiently is not guaranteed to find all non-zero entries in the multiplication result in $O(n^2)$ time, and this would not contradict \BMM{}.
By using \emph{unbalanced} tripartite graphs, we can use the fact that $z$ is not provided to show hardness.
We assign the large vertex set to $z$, while $x$ and $y$ are assigned vertex sets of size $n^\alpha$, and $w$ is assigned a constant $\bot$.
Under this construction, $Q_1$ returns tuples $(a,b,\bot)$ such that some vertex $c$ is a neighbor to both $a$ and $b$. For every such answer, we check whether $a$ and $b$ are neighbors. If they are, we determine that a triangle exists. 
In this way, $Q_1$ finds all candidates for triangles in the graph.
Since no free variable can be assigned values from the large vertex set, $Q_1$ and $Q_2$ have at most $n^{3\alpha}$ answers each. Assuming $Q\in\DelayClin$, we can compute all answers in time $O(n^{1+\alpha}+n^{3\alpha})$. If we take $\alpha\leq\frac{1}{2}$, this time is $O(n^{1+\alpha})$,
contradicting the \UTD{} hypothesis.
\qed
\end{exaC}

\autoref{example:separated} demonstrates that if we assume the \UTD{} hypothesis, we can prove the hardness of previously unclassified UCQs. 
However, unlike the similar listing problem \UTL{}, we are not aware of a complexity conjecture as established as \threesum{} that implies the hardness of \UTD{}. Let us comment on why \threesum{} can be reduced to \UTL{} but not \UTD{}: The reduction from \threesum{} to \UTL{} of Proposition~\ref{prop:UTL-threesum} introduces many false positives, that is, each \threesum{} solution generates a triangle, but also some non-solutions generate a triangle. By listing all triangles we can filter out false positives to then solve \threesum{}. This reduction does not work for \UTD{}, because by only detecting a triangle we cannot remove the false positives.

In the following, we argue that the \UTD{} hypothesis to the very least formalizes a computational barrier that is hard to overcome, and we discuss reasons to suspect the hypothesis holds.
The state of the art for triangle detection relies on matrix multiplication: Compute the matrix product of the adjacency matrix of $V_1 \times V_3$ with the adjacency matrix of $V_3 \times V_2$ to obtain all pairs $(v_1,v_2)$ connected by a 2-path, and then check for each such pair whether it also forms an edge in the graph.
This classic algorithm by Itai and Rodeh~\cite{ItaiR78} has not been improved since 1978, which is not for lack of trying.
For $\alpha=1$ this algorithm runs in time $O(n^\omega)$, where $\omega < 2.373$ is the exponent of matrix multiplication. 
While some researchers believe that $\omega$ should be~2, it was shown that the current matrix multiplication techniques cannot reach this time bound~\cite{AlonSU13,AlmanVW18,AmbainisFG15,VWilliams19}. Thus, if $\omega$ is~$2$, a significant breakthrough is needed to prove that.
Moreover, since $\omega$ is defined as an infimum, even $\omega=2$ does not mean that matrix multiplication is in time $O(n^2)$, for instance, an $O(n^2 \log n)$-time algorithm would also show that $\omega$ is $2$. 
Finally, over the last 35 years $\omega$ has seen only a small improvement from $2.3755$~\cite{CoppersmithW90} to $2.3716$~\cite{WilliamsXXZ24}.
In summary, quadratic-time matrix multiplication seems very far away, if not impossible. Since the best-known algorithm for triangle detection uses matrix multiplication, we see this as a reason to suspect that the \UTD{} hypothesis holds.
Here we focused on the case $\alpha = 1$, but the same discussion also applies to $\alpha < 1$; in this case, the fastest known running time for the corresponding matrix multiplication is of the form $O(n^{1+\alpha+\varepsilon_\alpha})$, where $\varepsilon_\alpha > 0$ is a constant depending only on $\alpha$~\cite{GallU18}.

In this section, we phrased the \UTD{} hypothesis, discussed its connection to related problems, and showed that it can be used in some cases to show the hardness of UCQs. In the next section, we show that determining that the \UTD{} hypothesis does not hold would also have implications for UCQs, as it would identify currently unclassified tractable UCQs. In particular, \autoref{sec:equivalence} proves that some family of UCQs is equivalent to \UTD{}.

\section{Equivalence of \UTD{} and a Family of UCQs}\label{sec:equivalence}

In this section, we show an equivalence between unbalanced triangle detection and the evaluation of a family of UCQs. As a result, we obtain that if the \UTD{} hypothesis does not hold, then free-connex union extensions do not capture all UCQs in $\DelayClin$.
We prove the following theorem.
\begin{thm}\label{thm:exact-example}
  There exists a family of UCQs with no free-connex union extensions such that the \UTD{} hypothesis holds if and only if no query of the family is in $\DelayClin$.
\end{thm}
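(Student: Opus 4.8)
The plan is to exhibit an explicit family $\{Q^{(k)}\}_{k\ge 2}$ in which each member $Q^{(k)}=Q_1^{(k)}\cup Q_2^{(k)}$ is a union of two self-join-free CQs, built as a parametrized generalization of the queries left open in \autoref{example:cyclic-guarded-hard} and \autoref{example:separated}. In $Q^{(k)}$ the CQ $Q_1^{(k)}$ has exactly one difficult structure, $Q_2^{(k)}$ is free-connex, and $Q_2^{(k)}$ provides every variable of that structure except one distinguished existential variable, so that adding the provided variable set to $Q_1^{(k)}$ as a virtual atom turns $Q_1^{(k)}$ into a cyclic CQ. The parameter $k$ is encoded into the difficult structure of $Q_1^{(k)}$ (for instance by spreading its ``long side'' over $k$ variables) in such a way that, on the one hand, the structure can encode an unbalanced triangle instance with $|V_3|=n$ and $|V_1|=|V_2|=\Theta(n^{1/k})$, and, on the other hand, on \emph{every} database the difficult part of the cyclic extension of $Q_1^{(k)}$ is an unbalanced triangle instance whose imbalance exponent is at most $1/k$. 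The first, routine, step is to verify that no $Q^{(k)}$ has a free-connex union extension: by a case analysis over which variable sets $Q_2^{(k)}$ can provide, any union extension either leaves a difficult structure inside $Q_1^{(k)}$ --- because the distinguished variable is never provided --- or introduces a chordless cycle into $Q_1^{(k)}$, and in both cases \autoref{theorem:CQdichotomy} rules out free-connexity.

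For the direction ``\UTD{} $\Rightarrow$ no $Q^{(k)}$ is in $\DelayClin$'' I would argue the contrapositive, separately for each $k$, generalizing \autoref{example:separated}. Given $Q^{(k)}\in\DelayClin$ and an unbalanced triangle instance at imbalance $1/k$, encode it into a database $D$ of size $O(n^{1+1/k})$ in linear time; run the preprocessing of $Q^{(k)}$ in time $O(|D|)$, building in passing a constant-time lookup table for the one ``missing'' edge relation; then enumerate the answers with constant delay. The crux of the bookkeeping is that there are only $O(n^{1+1/k})$ answers in total --- the difficult structure of $Q_1^{(k)}$ produces only the ``candidate'' $2$-paths of the instance, and $Q_2^{(k)}$, being free-connex, produces only a Cartesian product of small projections --- so the enumeration finishes in time $O(n^{1+1/k})$, and a triangle exists iff one of the constant-time lookups performed along the way succeeds. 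This contradicts \UTD{} at $\alpha=1/k$.

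For the converse direction I would first record a self-improvement property of unbalanced triangle detection, via the same splitting idea used in the proof of \autoref{prop:UTL-threesum}: if triangle detection runs in time $O(n^{1+\alpha_0})$ at some imbalance $\alpha_0>0$, then partitioning $V_3$ into $\Theta(n^{1-\alpha/\alpha_0})$ blocks of size $\Theta(n^{\alpha/\alpha_0})$ shows that triangle detection runs in time $O(n^{1+\alpha})$ at every imbalance $\alpha\le\alpha_0$. Hence, if \UTD{} fails at all it fails for all sufficiently small $\alpha$, in particular at $\alpha=1/k$ for some $k$; for that $k$ I would then show $Q^{(k)}\in\DelayClin$. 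The evaluation algorithm computes, in linear time, the virtual atom provided by $Q_2^{(k)}$ (using that $Q_2^{(k)}$ is $S$-connex for the relevant $S$, exactly as in the algorithm behind \autoref{theorem:UCQtractability}), enumerates the free-connex $Q_2^{(k)}$ directly, and enumerates the remaining answers of $Q_1^{(k)}$ through its cyclic extension, whose difficult part is --- by the structural guarantee above --- an unbalanced triangle instance of imbalance at most $1/k$ that is handled by the assumed $O(n^{1+1/k})$ algorithm plus a bounded amount of work per output tuple. Combining the two directions gives: some $Q^{(k)}$ is in $\DelayClin$ iff \UTD{} fails, i.e.\ \UTD{} holds iff no query of the family is in $\DelayClin$.

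The size and answer-count estimates for the encoding, and the linear-time computation of the virtual atom, are routine. The main obstacle is the converse direction, and specifically the structural guarantee it depends on: the family must be designed so that on \emph{arbitrary} databases --- not only those produced by the encoding --- the cyclic extension of $Q_1^{(k)}$ reduces to unbalanced triangle \emph{detection} at imbalance bounded by $1/k$, rather than to balanced triangle detection, to Boolean matrix multiplication, or to triangle \emph{listing}, while at the same time no free-connex union extension exists. Making all of these coexist for every $k$ is the delicate point of the construction.
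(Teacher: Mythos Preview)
Your high-level structure is right, and the hardness direction together with the monotonicity of \aUTD{$\alpha$} (your splitting argument) match the paper. The gap is exactly where you flag it: the converse direction. You say that on an \emph{arbitrary} database the cyclic extension of $Q_1^{(k)}$ must reduce to unbalanced triangle detection at imbalance at most $1/k$, and you call this the ``delicate point'' --- but you supply no mechanism that actually forces it. On an arbitrary input nothing prevents the domain of your distinguished variable from having the same size as the others, in which case the triangle instance is balanced and the assumed $O(n^{1+1/k})$ algorithm is useless. And even when the instance \emph{is} imbalanced, you do not say what pays for the superlinear detection cost: with a union of only two CQs, if $Q_1^{(k)}$ happens to have no answers (no triangle) and $Q_2^{(k)}$'s output involves only the small-domain variables, the total output can be far below $n^{1+1/k}$, so the detection time cannot be hidden inside the enumeration.

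The paper resolves both issues not by forcing imbalance but by a \emph{case split}, and this requires a union of \emph{four} CQs rather than two. In $Q_1$ the triangle $x,y,z$ is fully existential (pure detection; the answers of $Q_1$ are simply the contents of a separate relation $R_4$), and $c$ unary atoms $R_{X,i}(x)$ and $R_{Y,i}(y)$ bound $|V_1|,|V_2|$. An auxiliary CQ $Q_2$ is the Cartesian product of these $2c$ unary relations, so $|Q_2(D)|\ge(|V_1||V_2|)^c$; two further auxiliary CQs $Q_3,Q_4$ guarantee $|Q_3(D)|\ge|V_1||V_3|$ and $|Q_4(D)|\ge|V_2||V_3|$. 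Now: if $|V_3|\le\max\{|V_1|,|V_2|\}^{c-1}$, brute-force over all assignments to $x,y,z$ costs $|V_1||V_2||V_3|\le|Q_2(D)|$; otherwise the imbalance $\alpha=\log_{|V_3|}\max\{|V_1|,|V_2|\}<1/(c-1)$ is small enough to invoke the assumed detection algorithm, whose cost $O(|V_3|^{1+\alpha})=O(|V_1||V_3|+|V_2||V_3|)\le O(|Q_3(D)|+|Q_4(D)|)$. Either way the detection cost is dominated by the output of the auxiliary CQs, and the Cheater's Lemma absorbs the constant number of long delays. Your two-CQ template has neither the analogue of $Q_3,Q_4$ nor the $2c$ unary copies driving $Q_2$, so neither branch of this argument is available to you.
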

To prove \autoref{thm:exact-example}, we need to be more specific about the values of $\alpha$ for which we assume that \UTD{} holds. For this reason, we define the following hypothesis for a fixed $\alpha$.

\begin{description}
\item[\aUTD{\boldmath$\alpha$} Hypothesis] Determining whether there exists a triangle in a tripartite graph with $|V_1|=|V_2|=n^\alpha$ and $|V_3|=n$ has no algorithm running in time $O(n^{1+\alpha})$.
\end{description}

Then, the \UTD{} hypothesis is that \aUTD{$\alpha$} holds for every constant $\alpha\in(0,1]$.
We next show that \aUTD{$\alpha$} is ``monotone'' in the sense that it implies \aUTD{$\beta$} for larger values of $\beta$.

\begin{prop}\label{prop:UTD-monotone}
If \aUTD{$\alpha$} holds then \aUTD{$\beta$} holds for all $\beta \ge \alpha$.
\end{prop}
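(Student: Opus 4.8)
The plan is to prove the contrapositive: assuming \aUTD{$\beta$} fails for some $\beta\ge\alpha$, I will design an algorithm that solves \aUTD{$\alpha$}-instances in time $O(n^{1+\alpha})$, contradicting \aUTD{$\alpha$}. The reduction is a self-reduction that chops the large vertex set into many smaller blocks, so that each block together with the (unchanged) small sides becomes an instance with the aspect ratio required by \aUTD{$\beta$}. Note that for $\beta = \alpha$ the statement is trivial, so I may assume $\beta > \alpha$.

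Concretely, given an \aUTD{$\alpha$}-instance with $|V_1| = |V_2| = n^\alpha$ and $|V_3| = n$, set $m := \lceil n^{\alpha/\beta}\rceil$ and partition $V_3$ into $k = \lceil n/m\rceil = \Theta(n^{1-\alpha/\beta})$ blocks $B_1,\dots,B_k$, padding the last one with isolated vertices so each block has size exactly $m$. For each block $B_j$, consider the tripartite graph on $(V_1,V_2,B_j)$ with the induced edge sets, and pad $V_1$ and $V_2$ with isolated vertices up to size $m^\beta$; this is possible because $\beta\ge\alpha$ forces $\alpha/\beta\le 1$, hence $m^\beta\ge n^\alpha$, and isolated vertices create no new triangles. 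Each padded graph is a legal \aUTD{$\beta$}-instance with size parameter $m$, and the original graph has a triangle iff some $B_j$-instance does. So I run the assumed fast \aUTD{$\beta$} algorithm on all $k$ instances and return the disjunction.

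For the running time, each of the $\Theta(n^{1-\alpha/\beta})$ instances is solved in time $O(m^{1+\beta}) = O(n^{(\alpha/\beta)(1+\beta)}) = O(n^{\alpha/\beta+\alpha})$, so the total solving time is $O(n^{1-\alpha/\beta}\cdot n^{\alpha/\beta+\alpha}) = O(n^{1+\alpha})$. The overhead of constructing the blocks is also within budget: distributing the $O(n^{1+\alpha})$ edges incident to $V_3$ costs $O(n^{1+\alpha})$, and copying the at most $n^{2\alpha}$ edges of $E_{1,2}$ into all $k$ blocks costs $O(n^{1-\alpha/\beta+2\alpha})$, which is $O(n^{1+\alpha})$ precisely because $\beta\le 1$. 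This contradicts \aUTD{$\alpha$}.

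There is no deep obstacle here; the only point requiring care is that the hypotheses are stated with exact vertex-set sizes ($n^\alpha$ and $n$, not $\Theta$-bounds), so one must ensure the blocks and the padded small sides hit the prescribed sizes exactly. This is handled by the padding with isolated vertices described above, together with the observation that $\beta\ge\alpha$ guarantees the padding is always ``upward'' (from $n^\alpha$ up to $m^\beta$); the rounding in $m = \lceil n^{\alpha/\beta}\rceil$ only perturbs the block count and block size by constant factors and is absorbed into the $O(\cdot)$.
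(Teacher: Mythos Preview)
Your proof is correct and follows essentially the same approach as the paper: both argue the contrapositive via a self-reduction that splits $V_3$ into $\Theta(n^{1-\alpha/\beta})$ blocks of size $\Theta(n^{\alpha/\beta})$, so that each block together with $V_1,V_2$ has the aspect ratio required by \aUTD{$\beta$}, and the total running time telescopes to $O(n^{1+\alpha})$. You are in fact more careful than the paper about padding to hit exact sizes and about the overhead of replicating $E_{1,2}$ across blocks (which does require $\beta\le 1$, as you note).
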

\begin{proof}
We show a self-reduction that splits the set $V_3$. 
Let $0<\alpha<\beta\le 1$, and assume that determining whether there exists a triangle in a tripartite graph with $|V_1| = |V_2| = \Theta(n^\beta)$ and $|V_3| = n$  has an $O(n^{1+\beta})$-time algorithm. That is, we assume that \aUTD{$\beta$} fails and want to prove that \aUTD{$\alpha$} fails. 
To this end, let $G = (V_1 \cup V_2 \cup V_3,E)$ be a tripartite graph with $|V_1| = |V_2| = \Theta(n^\alpha)$ and $|V_3| = n$. Split $V_3$ into $\Theta(n^{1-\alpha/\beta})$ subsets of size
$\Theta(n^{\alpha/\beta})$. This splits $G$ into $\Theta(n^{1-\alpha/\beta})$ subgraphs
$G_1,\ldots,G_t$. Each subgraph is tripartite with parts $V_1,V_2,V_3'$ with $|V_1|=|V_2|=\Theta(n^\alpha)=\Theta({|V_3'|}^\beta$). Therefore, the
assumed algorithm determines whether $G_i$ has a triangle in time
$O({|V_3'|}^{1+\beta})$. Running this algorithm on each graph $G_i$ takes total time
$O(n^{1-\alpha/\beta} {|V_3'|}^{1+\beta}) = O(n^{1-\alpha/\beta + \alpha/\beta + \alpha})
= O(n^{1+\alpha})$. Thus, we can solve the given \aUTD{$\alpha$} instance $G$ in time
$O(n^{1+\alpha})$.
\end{proof}

We prove \autoref{thm:exact-example} with the following family of UCQs.

\begin{exa}\label{example:exact}
For any integer $c \ge 1$, consider the union $Q_{[c]}$ containing the following CQs.
\begin{align*}
Q_1(v_1,\ldots,v_{2c})\cqa& R_1(x,y),R_2(y,z),R_3(x,z),R_4(v_1,\ldots,v_{2c}),\\
&\;S_1(x),\ldots,S_c(x),T_1(y),\ldots,T_c(y)\\
Q_2(v_1,\ldots,v_{2c})\cqa& S_1(v_1),\ldots,S_c(v_c),T_1(v_{c+1}),\ldots,T_c(v_{2c})\\
Q_3(v_1,\ldots,v_{2c})\cqa& R_1(v_1,t_1),R_2(t_2,v_2),R_4(t_3,t_4,v_3,\ldots,v_{2c})\\
Q_4(v_1,\ldots,v_{2c})\cqa& R_1(t_1,v_1),R_2(t_2,v_2),R_4(t_3,t_4,v_3,\ldots,v_{2c})
\end{align*}

Note that $Q_{[c]}$ does not have a free-connex union extension.
Indeed, $Q_1$ contains a cycle $x,y,z$. Since no other CQ in the union provides all three cycle variables (in the body-homomorphisms from $Q_2$, $Q_3$ and $Q_4$ to $Q_1$, no free variables are mapped to $z$, $y$ and~$x$, respectively), any union extension of $Q_1$ preserves this cycle.

\begin{clm}
 If \UTD{} does not hold, then $Q_{[c]}\in\DelayClin$ for all sufficiently large $c$.
\end{clm}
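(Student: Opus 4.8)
The plan is to reduce the evaluation of $Q_{[c]}$ to a single triangle-detection question and to settle that question using the failure of \UTD{}. Note first that $Q_2,Q_3,Q_4$ are free-connex, so by \autoref{theorem:UCQtractability} the union $Q_2\cup Q_3\cup Q_4$ --- and likewise the UCQ obtained by also adding the free-connex CQ $Q_1'(\vec v)\cqa R_4(\vec v)$ --- is in $\DelayClin$. Moreover, $Q_1(D)$ equals $R_4^D$ if $D$ contains a \emph{guarded triangle}, i.e.\ vertices $x\in V_x:=\bigcap_i R_{X,i}^D$, $y\in V_y:=\bigcap_i R_{Y,i}^D$, and $z$ with $(x,y)\in R_1^D$, $(y,z)\in R_2^D$, $(x,z)\in R_3^D$; otherwise $Q_1(D)=\emptyset$. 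If any of $R_1^D,\dots,R_4^D$ or any $R_{X,i}^D,R_{Y,i}^D$ is empty, no guarded triangle exists and $Q_{[c]}(D)$ is the answer set of a union of free-connex CQs, hence $Q_{[c]}\in\DelayClin$; so assume all these relations are nonempty. In that case, once we know whether the guarded triangle exists, $Q_{[c]}(D)$ is the answer set of $Q_2\cup Q_3\cup Q_4$ or of $Q_2\cup Q_3\cup Q_4\cup\{Q_1'\}$, both in $\DelayClin$; since $|(Q_2\cup Q_3\cup Q_4)(D)|\ge\frac13(|Q_2(D)|+|Q_3(D)|+|Q_4(D)|)$, interleaving the detection with this enumeration yields constant delay as long as the detection runs in time $O(|D|+|Q_2(D)|+|Q_3(D)|+|Q_4(D)|)$.

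The heart of the proof is detecting the guarded triangle within that budget. As \UTD{} fails, fix $\alpha_0\in(0,1]$ with \aUTD{$\alpha_0$} false; by \autoref{prop:UTD-monotone}, \aUTD{$\beta$} is false for every $\beta\le\alpha_0$, so we have an $O(m^{1+\beta})$-time triangle-detection algorithm whenever the tripartite graph has part sizes $m^{\beta},m^{\beta},m$. I would pass to the \emph{relevant} subgraph, keeping only $V_z:=\pi_2(R_2^D)\cap\pi_2(R_3^D)$, $V_x':=V_x\cap\pi_1(R_1^D)\cap\pi_1(R_3^D)$, $V_y':=V_y\cap\pi_2(R_1^D)\cap\pi_1(R_2^D)$ (which still contains every guarded triangle), and write $p=|V_x'|$, $q=|V_y'|$, $n=|V_z|$, all computable in $O(|D|)$. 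Two inequalities drive everything: $|Q_2(D)|=\prod_i|R_{X,i}^D|\cdot\prod_i|R_{Y,i}^D|\ge p^cq^c$ (since $|R_{X,i}^D|\ge|V_x|\ge p$ and $|R_{Y,i}^D|\ge|V_y|\ge q$ for all $i$), and $|Q_3(D)|=|\pi_1(R_1^D)|\cdot|\pi_2(R_2^D)|\cdot|\pi_{3,\dots,2c}(R_4^D)|\ge pn$ together with the symmetric $|Q_4(D)|\ge qn$ (since $V_x'\subseteq\pi_1(R_1^D)$, $V_y'\subseteq\pi_2(R_1^D)$, $V_z\subseteq\pi_2(R_2^D)$, and $R_4^D\ne\emptyset$). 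Then split into cases. If $\min(p,q,n)\le1$, pivot on the (unique, if any) relevant vertex of a part of size $\le1$ and detect in $O(|D|)$. If $\min(p,q,n)\ge2$ and $\max(p,q)\le n^{\alpha_0}$, the relevant graph has imbalance $\beta:=\log_n\max(p,q)\in(0,\alpha_0]$, so the \aUTD{$\beta$} algorithm detects in $O(n^{1+\beta})=O(n\max(p,q))=O(|Q_3(D)|+|Q_4(D)|)$. If $\min(p,q,n)\ge2$ and $\max(p,q)>n^{\alpha_0}$, then $n<\max(p,q)^{1/\alpha_0}\le(pq)^{1/\alpha_0}$, so iterating over the $\le pq$ relevant $R_1$-edges and, for each, scanning its $\le n$ candidate values of $z$ detects in $O(|D|+pqn)=O(|D|+(pq)^{1+1/\alpha_0})$, which is $O(|D|+|Q_2(D)|)$ whenever $c\ge1+1/\alpha_0$. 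Hence for every $c\ge\lceil1+1/\alpha_0\rceil$ the detection, and therefore the enumeration, stays within budget.

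The step I expect to be the main obstacle is this case-based running-time estimate: verifying that the ``intermediate'' regime --- where $V_x',V_y'$ are super-constant but neither large nor comparable to $|V_z|$ --- is always covered by exactly one of the two nontrivial charges (the \aUTD{$\alpha_0$} algorithm, or the $p^cq^c\le|Q_2(D)|$ bound, which is what forces $c$ to be large), and, on the bookkeeping side, confirming that interleaving the detection with the $\DelayClin$ enumeration of the free-connex part --- and, when the triangle exists, afterwards streaming $R_4^D$ while discarding duplicates against $(Q_2\cup Q_3\cup Q_4)(D)$ --- is implementable with genuinely constant delay by the standard techniques for combining and filtering constant-delay enumeration algorithms.
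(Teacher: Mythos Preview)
Your proposal is correct and takes essentially the same approach as the paper: reduce $Q_1$ to a single Boolean triangle-detection question, split on whether $\max(p,q)$ is small relative to $n$, charge the unbalanced-detection branch to $|Q_3(D)|+|Q_4(D)|\ge(p+q)n$ and the brute-force branch to $|Q_2(D)|\ge(pq)^c$, and interleave with the free-connex enumeration of $Q_2\cup Q_3\cup Q_4$. The paper phrases the threshold as $|V_3|\le\max\{|V_1|,|V_2|\}^{c-1}$ (equivalently $\alpha<\tfrac1{c-1}$) and explicitly invokes the Cheater's Lemma for the bounded-duplicates/delay bookkeeping you flag in your last paragraph, but these are cosmetic differences.
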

\begin{proof}
 If \UTD{} does not hold, then \aUTD{$\beta$} does not hold for some $\beta\in(0,1)$.  According to \autoref{prop:UTD-monotone}, \aUTD{$\alpha$} does not hold for all $\alpha<\beta$.
 That is,
for all $\alpha\in(0,\beta)$, determining whether there exists a triangle in a tripartite graph with $|V_1|=|V_2|=n^\alpha$ and $|V_3|=n$ can be done in time $O(n^{1+\alpha})$.
 Let $c\geq \frac{1}{\beta}+1$.
 We show how, given a database instance $D$, we can enumerate $Q_{[c]}(D)$ with linear preprocessing and constant delay. 

First note that in each of $Q_2$, $Q_3$, and $Q_4$, every variable only appears in one atom, and so they are free-connex.
Thus, we can compute $Q_2(D)$, $Q_3(D)$ and $Q_4(D)$ with linear preprocessing and constant delay each. In the following, we show how to find $Q_1(D)$ with constant delay after $O(|D|+|Q_2(D)|+|Q_3(D)|+|Q_4(D)|)$ preprocessing time. This means that by interleaving the computation of the preprocessing of $Q_1$ with the evaluation of the other CQs, we can enumerate the answers to $Q_1$ with constant delay directly after the end of the enumeration of the other CQs.
More formally, according to the ``Cheater's Lemma''~\cite[Lemma 7]{DBLP:conf/pods/CarmeliK19}, since the delay between answers is constant except for at most four times where it is linear (corresponding to the preprocessing required for each CQ), and since there are at most four duplicates per answer (as an answer can be obtained from each of the CQs), the algorithm we present here can be adjusted to work with linear preprocessing time and constant delay with no duplicates.

Note that if one of the relations of $Q$ is empty, then $Q_1(D)=\emptyset$, and we can finish the evaluation of $Q_1(D)$ immediately. In the following, we assume that no relation is empty.
Consider the Boolean query $Q_1'()$ with the same body as $Q_1$.
Note that $Q_1(D)$ is exactly $R_4^D$ if $Q_1'$ evaluates to true, and it is empty otherwise.
To evaluate $Q_1'$, we can first filter (in linear time) the relations $R_1^D$, $R_2^D$, and $R_3^D$ by performing semi-joins with $S_i^D$ and $T_i^D$ for all $i$.
Formally, we set
 \begin{align*}
 E_{1,2}&=\set{(a,b)\mid R_1^D(a,b)\wedge \forall{i\in [c]}: S_i^D(a)\wedge T_i^D(b)}\text{,}
 \\
 E_{2,3}&=\set{(b,d)\mid R_2^D(b,d)\wedge \forall{i\in [c]}: T_i^D(b)}\text{, and}
 \\
 E_{1,3}&=\set{(a,d)\mid R_3^D(a,d)\wedge \forall{i\in [c]}: S_i^D(a)}\text{.} 
 \end{align*}
Now it is enough to evaluate $Q_1''()\cqa E_{1,2}(x,y),E_{2,3}(y,z),E_{1,3}(x,z)$ since $Q_1'()=Q_1''()$.
Denote
 \begin{align*}
 V_1&=\set{a\mid \exists{b}: E_{1,2}(a,b)}\text{,}\\
 V_2&=\set{b\mid \exists{a}: E_{1,2}(a,b)}\text{, and}\\
 V_3&=\set{d\mid \exists{b}: E_{2,3}(b,d)}\text{.}
 \end{align*}
Note that every $a \in V_1$ satisfies $S_i^D(a)$ for all $i \in [c]$. Thus, we have $|V_1| \le |S_i^D|$,  and similarly $|V_2| \le |T_i^D|$, for all $i \in [c]$. Since $|Q_2(D)| = \prod_{i \in [c]} |S_i^D| \cdot |T_i^D|$ we obtain $|Q_2(D)| \ge (|V_1||V_2|)^c$. Moreover, since we assume all relations to be non-empty, we have $|Q_3(D)| \ge |\{ a \mid \exists b\colon R_1^D(a,b) \}| \cdot |\{ u \mid \exists v\colon R_2^D(v,u) \}| \ge |V_1| \cdot |V_3|$, and similarly $|Q_4(D)| \ge |V_2| \cdot |V_3|$.

If $|V_3|\leq \max\{|V_1|,|V_2|\}^{c-1}$, then we evaluate $Q_1''$ in $O(|V_1||V_2||V_3|)$ time by checking all possible assignments to $x$, $y$ and $z$. Since $|V_1||V_2||V_3|\leq (|V_1||V_2|)^c\leq |Q_2(D)|$, this takes time $O(|Q_2(D)|)$.
The second case is $|V_3|>\max\{|V_1|,|V_2|\}^{c-1}$.
We set $n=|V_3|$ and $\alpha=\log_n{\max\{|V_1|,|V_2|\}}$; note that $\alpha<\frac{1}{c-1}\leq\beta$.
We fill up the smaller of $V_1,V_2$ with dummy vertices to ensure $|V_1| = |V_2| = \Theta(|V_3|^\alpha)$.
Applying an $O(n^{1+\alpha})$-time triangle detection algorithm to the graph given by the edges $E_{1,2}$, $E_{2,3}$ and $E_{1,3}$ answers $Q_1''$ in time $O(|V_3|^{1+\alpha}) = O(|V_3| \cdot |V_1| + |V_3|\cdot |V_2|)$.
Note that $|Q_3(D)|\geq \size{V_1}\size{V_3}$ and $|Q_4(D)|\geq\size{V_2}\size{V_3}$, so this running time is $O(|Q_3(D)| + |Q_4(D)|)$. 
If $Q_1''$ evaluates to false, $Q_1$ returns no answers, and we are done; otherwise, we output $R_4^D$ with constant delay.
In total, this finds $Q_1(D)$ with constant delay after $O(|D|+|Q_2(D)|+|Q_3(D)|+|Q_4(D)|)$ preprocessing time.
\end{proof}

Note that as part of the proof of this claim, we showed that $Q_{[c]}$ is in $\DelayClin$ in case $|V_3|\leq \max\{|V_1|,|V_2|\}^{c-1}$ without relying on any assumption. This demonstrates that $z$ must have a large domain for this query not to be in $\DelayClin$.
That is, $Q_{[c]}$ is not in $\DelayClin$ (assuming the \UTD{} hypothesis) only when we can make no additional assumptions on the instance; if the domain of $z$ is limited, the query may become easy.
This also shows that in any construction that proves a lower bound for $Q_{[c]}$, we must assign $z$ with a larger domain than that of the other variables. Indeed, we do this in the proof of the following claim.

 \begin{clm}
If \UTD{} holds, then $Q_{[c]}\not\in\DelayClin$ for all $c$.
 \end{clm}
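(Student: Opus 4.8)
The plan is to prove the contrapositive: assuming $Q_{[c]} \in \DelayClin$ for some $c$, we construct an algorithm that, for every $\alpha \le \frac{1}{c-1}$, detects a triangle in a vertex-unbalanced tripartite graph with $|V_1| = |V_2| = \Theta(n^\alpha)$ and $|V_3| = n$ in time $O(n^{1+\alpha})$. Since $c$ is fixed, this refutes \aUTD{$\alpha$} for all small enough $\alpha$, and hence the \UTD{} hypothesis.

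\textbf{Construction.} Given a tripartite graph $G = (V_1 \cup V_2 \cup V_3, E)$ with $|V_1| = |V_2| = n^\alpha$ and $|V_3| = n$, I build a database $D$ over the schema of $Q_{[c]}$. The idea mirrors \autoref{example:cyclic-guarded-hard}: encode the triangle-finding cycle of $Q_1$ onto $G$, using the domain of $V_3$ for $z$ (the large, unprovided variable) and the domains of $V_1, V_2$ for $x, y$. Concretely, set $R_1^D = E_{1,2}$, $R_2^D = E_{2,3}$, $R_3^D = E_{1,3}$ (identifying the edge sets of $G$ with binary relations in the natural orientation matching $R_1(x,y), R_2(y,z), R_3(x,z)$). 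For every vertex $a \in V_1$ put $a$ into all of $R_{X,1}^D, \ldots, R_{X,c}^D$, and for every $b \in V_2$ put $b$ into all of $R_{Y,1}^D, \ldots, R_{Y,c}^D$. Finally let $R_4^D$ be a single dummy tuple $(\bot, \ldots, \bot)$ of arity $2c$ (or any small fixed nonempty relation). Then $|D| = O(|E| + n^\alpha) = O(n^{1+\alpha})$, since $|E| \le 2 \cdot n^\alpha \cdot n + (n^\alpha)^2 = O(n^{1+\alpha})$ for $\alpha \le 1$.

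\textbf{Correctness.} On this database, $Q_1$ returns $R_4^D$ (the dummy tuple) if $G$ has a triangle and $\emptyset$ otherwise, exactly as argued in the tractability claim: the semi-join filtering is vacuous here since every $V_1$-vertex is in all $R_{X,i}$ and every $V_2$-vertex in all $R_{Y,i}$, so $E_{1,2} = R_1^D$, etc., and $Q_1'()$ holds iff there is a triangle. It remains to bound the answers of $Q_2, Q_3, Q_4$. For $Q_2$: its answers are $c$-tuples from $V_1$ concatenated with $c$-tuples from $V_2$, so $|Q_2(D)| \le (n^\alpha)^c \cdot (n^\alpha)^c = n^{2c\alpha}$, which is $O(n^{1+\alpha})$ when $\alpha(2c - 1) \le 1$, i.e. $\alpha \le \frac{1}{2c-1}$. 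For $Q_3$ and $Q_4$: each projects two edge-relation atoms onto $v_1 \in V_1$ (resp.\ $v_1 \in V_2$) and $v_2 \in V_2$, and $R_4$ onto the dummy; since $R_4^D$ has one tuple, $|Q_3(D)|, |Q_4(D)| = O(|V_1| \cdot |V_2|) = O(n^{2\alpha}) = O(n^{1+\alpha})$. Thus all four CQs together have $O(n^{1+\alpha})$ answers. Running the assumed $\DelayClin$ algorithm for $Q_{[c]}$ with $O(|D|)$ preprocessing and $O(1)$ delay, and reading at most $|Q_{[c]}(D)| = O(n^{1+\alpha})$ answers, we learn in total time $O(n^{1+\alpha})$ whether the dummy tuple appears among the answers, which tells us whether $G$ has a triangle. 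This contradicts \aUTD{$\alpha$}; since $\alpha$ ranges over $(0, \frac{1}{2c-1}]$, a nonempty interval, the \UTD{} hypothesis fails.

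\textbf{Main obstacle.} The delicate point is not the construction but making the reduction truly linear-time and duplicate-aware: the enumeration algorithm may output an answer of $Q_1$ that also arises from $Q_2$, $Q_3$, or $Q_4$, so we cannot simply wait to ``see the dummy tuple from $Q_1$.'' I expect to handle this by instead designing $R_4^D$ (and, if needed, the encoding) so that the $R_4$-tuple output by $Q_1$ is distinguishable — e.g.\ choosing fresh constants for $\bot$ that cannot appear in the $v_3, \ldots, v_{2c}$ positions of $Q_3, Q_4$ answers, and observing that $Q_2$'s head tuples live entirely in $V_1 \cup V_2$ and hence never equal the dummy. Alternatively, one checks directly whether $Q_{[c]}(D) \neq \emptyset$ after reading all answers, noting the total output size is $O(n^{1+\alpha})$. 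A secondary bookkeeping point is verifying that reading $O(n^{1+\alpha})$ answers at $O(1)$ delay costs only $O(n^{1+\alpha})$ time, which is immediate, and that the choice $c \ge \frac{1}{\gamma} + 1$ used in the tractability direction is consistent with the range $\alpha \le \frac{1}{2c-1}$ needed here — both merely require $\alpha$ sufficiently small, so no conflict arises.
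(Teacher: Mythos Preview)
Your approach is essentially identical to the paper's: same database construction, same choice $\alpha \le \frac{1}{2c-1}$, and the same resolution of the distinguishability issue via a fresh constant $\bot$ (the paper simply observes that $(\bot,\ldots,\bot)$ is not an answer to $Q_2,Q_3,Q_4$). One small miscount worth fixing: in $Q_3$ and $Q_4$ the free variable $v_2$ occupies the second slot of $R_2(t_2,v_2)$, and since $R_2^D = E_{2,3} \subseteq V_2 \times V_3$, the variable $v_2$ ranges over $V_3$, not $V_2$; hence $|Q_3(D)| \le |V_1|\cdot|V_3|$ and $|Q_4(D)| \le |V_2|\cdot|V_3|$, both $O(n^{1+\alpha})$ rather than your $O(n^{2\alpha})$ --- but this larger bound is still exactly the target, so the argument goes through unchanged.
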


\begin{proof}
Assume by contradiction that $Q_{[c]}\in\DelayClin$ for some $c$.
We start with a tripartite graph $G$ with $V_1$,$V_2$,$V_3$, $E_{1,2}$, $E_{2,3}$ and $E_{1,3}$, where $|V_1|=|V_2|=n^\alpha$ and $|V_3|=n$ for some $n\in\mathbb{N}$ and $\alpha\leq\frac{1}{2c-1}$.
We construct a database instance $D$ as follows:
We assign $R_1^D=E_{1,2}$, $R_2^D=E_{2,3}$, $R_3^D=E_{1,3}$, and $R_4^D=\set{(\bot,\ldots,\bot)}$. For all $i\in[c]$, we assign $S_i^D=V_1$ and $T_i^D=V_2$.
The answers $Q_1(D)$ consist of $(\bot,\ldots,\bot)$ if there is a triangle in $G$ and no answers otherwise.
As for the other CQs, 
$\size{Q_2(D)}=(|V_1||V_2|)^c$, $\size{Q_3(D)}=|V_1||V_3|$, and $\size{Q_4(D)}=|V_2||V_3|$.
The tuple $(\bot,\ldots,\bot)$ is not an answer to CQs other than $Q_1$, so $(\bot,\ldots,\bot)\in Q_{[c]}(D)$ if and only if there is a triangle in $G$.
If $Q_{[c]}\in\DelayClin$, then we can compute all of $Q_{[c]}(D)$ in time $O((|V_1||V_2|)^c+|V_1||V_3|+|V_2||V_3|)$, and determine the existence of a triangle in $G$ within this time.
This contradicts the \UTD{} hypothesis since $(|V_1||V_2|)^c+|V_1||V_3|+|V_2||V_3|= O(n^{2\alpha c}+n^{1+\alpha}) = O(n^{1+\alpha})$, where we used that our assumption $\alpha\leq\frac{1}{2c-1}$ is equivalent to $2\alpha c \le 1+\alpha$.
\end{proof}

\end{exa}

In this section, we showed that if free-connex union extensions capture all UCQs in $\DelayClin$, then the \UTD{} hypothesis holds.
The next section inspects the opposite direction: assuming the \UTD{} hypothesis, we prove the hardness of a large class of UCQs that do not admit free-connex union extensions.

\section{UCQ Classification Based on VUTD}\label{sec:ucq-classification}

In this section, we show the hardness of a large class of UCQs that do not admit a free-connex union extension, assuming the \UTD{} hypothesis.
First, we prove this for unions of a free-connex CQ and a difficult CQ.
Then, we show how \UTD{} can be used instead of hypotheses previously used to show the hardness of UCQs.
Finally, we conclude a dichotomy for a union of two self-join-free CQs.

\subsection{The General Reduction}
\label{sec:gen-reduction}

The following lemma identifies cases in which we can perform a reduction from unbalanced triangle detection to UCQ evaluation.
The reduction requires identifying variable sets in the UCQ that conform to certain conditions. These conditions allow us to encode the tripartite graph in the relations of the query by assigning variables from the same set with the same values.

Given a function $h:X\rightarrow Y$ and a set $S\subseteq Y$, we denote $h^{-1}(S)=\{x \in X \mid h(x) \in S\}$.
Given a CQ $Q$ and sets $X_1,...,X_\ell\subseteq\var(Q)$, we define \[ \connectors_{X_1,...,X_\ell}(Q) = \begin{cases*}
                    \{V\mid R(V)\in\atoms(Q)\}\cup \{\free(Q)\} & if  $\exists X_i : \free(Q)\cap X_i = \emptyset$  \\
                     \{V\mid R(V)\in\atoms(Q)\} & otherwise
                 \end{cases*} \]
When it is clear from the context, we omit the subscript and simply write $\connectors(Q)$.

\begin{lem}[Reduction Lemma]\label{lemma:reduction}
Let $Q=Q_1\cup Q_2$ be non-redundant where $Q_1$ is self-join-free. Suppose that there exist non-empty and disjoint sets $X_1,...,X_\ell\subseteq\var(Q_1)$ with $\ell\geq 3$ such that:
\begin{enumerate}
\item\label{cond:notall}
For every atom $R(V)$ in $Q_1$, there exists $X_i$ s.t. $V\cap X_i = \emptyset$.
\item\label{cond:connected}
$\graph{Q_1}[X_i]$ is connected for all $i$.
\item\label{cond:connectors}
For every $S\in\{\{1,2\},\{1,3,\ldots,\ell\},\{2,3,\ldots,\ell\}\}$, there exists $V\in\connectors(Q_1)$ s.t. $V \cap X_i\neq\emptyset$ for all $i\in S$.
\item\label{cond:otherquery}
For every body-homomorphism $h$ from $Q_2$ to $Q_1$, if we have that $\free(Q_2)\cap h^{-1}(X_\ell) \neq \emptyset$, then $\size{\free(Q_2)\cap h^{-1}(X_\ell)}=1$ and $\size{\free(Q_2)\cap h^{-1}(\bigcup_{1\leq i \leq \ell-1}{X_i})}\leq \ell - 2$.
\end{enumerate}
Then $Q\not\in\DelayClin$, assuming the \UTD{} hypothesis.
\end{lem}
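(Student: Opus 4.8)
The plan is to reduce from the \aUTD{$\alpha$} hypothesis for a sufficiently small constant $\alpha\in(0,1)$ (depending only on $Q$, on $\ell$, and on the arities): a $\DelayClin$ algorithm for $Q$ will then solve the \aUTD{$\alpha$} problem in time $O(n^{1+\alpha})$, which contradicts the \UTD{} hypothesis. Given a tripartite graph $G$ with $|V_1|=|V_2|=n^\alpha$ and $|V_3|=n$, I build a database $D$ over the schema of $Q$ as follows. Fix a fresh constant $\bot$ and assign it to every variable of $Q_1$ outside $X_1\cup\dots\cup X_\ell$; identify $X_1$ with $V_1$ and $X_2$ with $V_2$; and write $V_3$ as a product of $\ell-2$ coordinate sets, identifying $X_i$ with the $i$-th coordinate for $3\le i\le\ell$. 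This is exactly what makes the three index families singled out in condition~\ref{cond:connectors}, namely $\{1,2\}$, $\{1,3,\dots,\ell\}$ and $\{2,3,\dots,\ell\}$, correspond to the three edge sets $E_{1,2}$, $E_{1,3}$, $E_{2,3}$ of $G$. Now populate each atom $R(V)$ of $Q_1$ with all tuples in which (i) variables from the same $X_i$ carry the same value, (ii) variables outside all $X_i$ are set to $\bot$, and (iii) if $R(V)$ is one of the three connectors guaranteed by condition~\ref{cond:connectors}, the remaining positions encode the corresponding edge relation of $G$; every other atom gets all tuples satisfying (i) and (ii). Relations occurring only in $Q_2$ are made empty. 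Condition~\ref{cond:notall} is what keeps $D$ small: an atom touching both $X_1$ and $X_2$ must, by that condition, miss some $X_i$ with $i\ge 3$ and is therefore tiny, while any other atom has at most $|V_1|\cdot|V_3|=n^{1+\alpha}$ tuples; hence $|D|=O(n^{1+\alpha})$.

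For correctness, consider any homomorphism $h$ from $Q_1$ to $D$. By condition~\ref{cond:connected}, $\graph{Q_1}[X_i]$ is connected, so the equalities baked into $D$ force all of $X_i$ to take a single value $\sigma_i$; thus $\sigma_1\in V_1$, $\sigma_2\in V_2$, and $w:=(\sigma_3,\dots,\sigma_\ell)$ is a vertex of $V_3$. The three connectors of condition~\ref{cond:connectors} then force $(\sigma_1,\sigma_2)\in E_{1,2}$, $(\sigma_1,w)\in E_{1,3}$ and $(\sigma_2,w)\in E_{2,3}$, so $(\sigma_1,\sigma_2,w)$ is a triangle of $G$; conversely, every triangle yields such an $h$ because all non-connector atoms were filled with full products. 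Hence $Q_1(D)\neq\emptyset$ iff $G$ has a triangle. A short case analysis — on whether $\free(Q_1)$ meets every $X_i$ and whether some free variable of $Q_1$ lies outside all $X_i$ — upgrades this to: $G$ has a triangle iff $Q_1(D)\not\subseteq Q_2(D)$, and simultaneously $|Q_1(D)|=O(n^{1+\alpha})$. When some free variable of $Q_1$ sits outside all $X_i$, its answer-coordinate is always $\bot$, and non-redundancy of $Q$ together with condition~\ref{cond:otherquery} shows that $Q_2$ never outputs such an answer; when $\free(Q_1)$ misses some $X_i$, the projection already bounds $|Q_1(D)|$ by $O(n^{1+\alpha})$; the remaining corner case is absorbed by a mild padding of the vertex classes.

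The crux is to bound $|Q_2(D)|$ by $O(n^{1+\alpha})$. Since $Q_1$ is self-join-free, every relation symbol $R$ occurs in a single atom of $Q_1$, so all tuples of $R^D$ respect one fixed classification of the positions of $R$ into the sets $X_1,\dots,X_\ell$ (or into $\{\bot\}$). Therefore any homomorphism from $Q_2$ to $D$ induces a body-homomorphism $h$ from $Q_2$ to $Q_1$, and each free variable $v$ of $Q_2$ is sent to a value lying in the set named by the $X_i$ containing $h(v)$ (or to $\bot$). Condition~\ref{cond:otherquery} now bites: at most one free variable of $Q_2$ lands in $X_\ell$, and at most $\ell-2$ of them land in $X_1\cup\dots\cup X_{\ell-1}$. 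Since all variables landing in the same $X_i$ are forced equal, an answer of $Q_2$ is determined by the values of the hit sets, whose combined size is at most one large coordinate of $V_3$ times a bounded product of small factors and $V_3$-coordinates, which is $O(n^{1+\alpha})$ once $\alpha$ is chosen small enough. Hence $|Q_2(D)|=O(n^{1+\alpha})$. Assuming $Q\in\DelayClin$, we then build $D$, run the enumeration algorithm for $Q$, and stop as soon as $|Q_2(D)|+1$ answers have been emitted or the enumeration halts. If more than $|Q_2(D)|$ answers appear, one of them is outside $Q_2(D)$, hence in $Q_1(D)$, so $G$ has a triangle; otherwise $Q(D)=Q_2(D)\supseteq Q_1(D)$, so by the correctness step $G$ has no triangle. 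The whole procedure runs in time $O(|D|+|Q_2(D)|)=O(n^{1+\alpha})$, contradicting the \aUTD{$\alpha$} hypothesis and hence the \UTD{} hypothesis.

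The step I expect to be the main obstacle is the bound on $|Q_2(D)|$: turning a homomorphism from $Q_2$ into $D$ into the body-homomorphism $h$ to which condition~\ref{cond:otherquery} applies — this is where self-join-freeness of $Q_1$ is genuinely needed — and then carefully bookkeeping which vertex classes $\free(Q_2)$ can hit, and with what multiplicity, so that the product stays $O(n^{1+\alpha})$. A secondary, lower-risk point is establishing ``$G$ has a triangle iff $Q_1(D)\not\subseteq Q_2(D)$'': the case split of condition~\ref{cond:connectors} on whether $\free(Q_1)$ meets every $X_i$, the role of the fresh constant $\bot$ and of non-redundancy, and the choice of $\alpha$ (and any padding of $V_1,V_2,V_3$) small enough to swallow all query-dependent polynomial slack.
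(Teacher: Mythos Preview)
Your overall architecture matches the paper's proof: choose a small $\alpha$, encode $V_1,V_2$ into $X_1,X_2$, split $V_3$ over $X_3,\ldots,X_\ell$, use Condition~\ref{cond:notall} to bound $|D|$, Condition~\ref{cond:connected} to synchronize the $X_i$, Condition~\ref{cond:connectors} to encode the three edge sets, and Condition~\ref{cond:otherquery} to bound $|Q_2(D)|$. But two concrete devices from the paper are missing, and without them your argument has real gaps.

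\textbf{Per-variable disjoint domains.} The paper tags every constant with the $Q_1$-variable it instantiates (e.g.\ by concatenating the variable name), not merely with the block $X_i$. This is what makes your sentence ``any homomorphism from $Q_2$ to $D$ induces a body-homomorphism $h$ from $Q_2$ to $Q_1$'' actually true: from a value you can read off a variable, and self-join-freeness of $Q_1$ makes the resulting map a body-homomorphism. With your coarser encoding (one domain per $X_i$ plus $\bot$), a homomorphism from $Q_2$ to $D$ need not induce any body-homomorphism to $Q_1$ at all, so you cannot invoke Condition~\ref{cond:otherquery}. The same trick is what lets the paper distinguish $Q_1$-answers from $Q_2$-answers: non-redundancy gives $h(\free(Q_2))\neq\free(Q_1)$, and with per-variable domains this is visible in every answer tuple. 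Your distinguishing argument (``some free variable of $Q_1$ sits outside all $X_i$, so its coordinate is $\bot$'') only covers one subcase and does not rule out that a $Q_2$-answer has $\bot$ in the same coordinate.

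\textbf{The free-connector case.} When $\free(Q_1)$ is used as a connector, one of the three edge relations is \emph{not} encoded in any atom, so $Q_1(D)$ contains \emph{candidate} triangles, not triangles. Hence ``$G$ has a triangle iff $Q_1(D)\not\subseteq Q_2(D)$'' is simply false in this case. The paper instead enumerates, recognizes each $Q_1$-answer (again via the per-variable domains), and checks the missing edge in $O(1)$; the bound $|Q_1(D)|=O(n^{1+\alpha})$ then controls the total work. Your counting strategy (stop after $|Q_2(D)|+1$ answers) cannot do this post-check because you cannot tell which answers came from $Q_1$.

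A smaller point: you apply Condition~\ref{cond:otherquery} as if it always bounds both $|h^{-1}(X_\ell)\cap\free(Q_2)|$ and $|h^{-1}(\bigcup_{i<\ell}X_i)\cap\free(Q_2)|$, but the condition only gives these bounds \emph{when} some free variable lands in $X_\ell$. The paper handles the other branch by choosing $\alpha\le 1/|\free(Q_2)|$ so that even if all free variables of $Q_2$ land in small blocks, $|Q_2(D)|\le n^{\alpha|\free(Q_2)|}\le n$; make this explicit (the paper sets $\alpha=\max\{|\free(Q_2)|,\ell-2\}^{-1}$).
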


This lemma provides a uniform way of proving the (conditional) hardness of many UCQs. Each set $X_i$ can be seen as giving a role to its variables. In the simple case of $\ell=3$, the three roles correspond to the three vertex sets of the tripartite graph, where $X_3$ is the large set.
As an example, this lemma can be used for Example~\ref{example:separated} by setting $X_1=\{x\}$, $X_2=\{y\}$, and $X_3=\{z\}$. 
In case $\ell>3$, the information regarding the large vertex set is split between $X_3,\ldots,X_\ell$.
The first three conditions of the lemma ensure that we can construct the relations of $Q_1$ so that it detects triangles in the graph, as explained next.
The first condition requires that no atom contains variables of all sets, which restricts the size of the relations and allows for efficient construction. More precisely, it ensures that we do not need to compute all of the triangles already at preprocessing.
The second condition requires that each set is connected, which ensures variables from the same set are assigned the same value in every answer, thus allowing us to give variables of the same set the same role.
Note that the second condition trivially holds when $\size{X_i}=1$.
The third condition ensures that the atoms can encode all three edge sets. The edge sets are usually encoded in the relations, but if we know that the number of answers is restricted, we can also use the free variables as a connector and test whether query answers contain the third edge (as we did in Example~\ref{example:separated}).
The fourth condition restricts the free variables of the other CQ in the union, which ensures that it does not have too many answers, so that the enumeration of the answers of the entire union does not take too long. This easily happens if no free variable of $Q_2$ maps to the large vertex set $X_\ell$, but it can also happen in other cases if we are careful about how many free variables of $Q_2$ map to the vertex sets. 

\begin{proof}[Proof of the Reduction Lemma (\autoref{lemma:reduction})]
We set $\alpha$ to be ${\max\set{{\size{\free(Q_2)}},{\ell-2}}}^{-1}$.
Assume we are given a tripartite graph with vertex sets $V_1$,$V_2$,$V_3$ and edge sets $E_{1,2}$, $E_{2,3}$, $E_{1,3}$ where $|V_1|=|V_2|=\Theta(n^\alpha)$ and $|V_3|=n$.
We set $U_1=V_1$, $U_2=V_2$, and we encode the vertices of $V_3$ as $U_3\times\cdots\times U_\ell$ such that $|U_3|=\ldots=|U_{\ell-1}|=\Theta(n^\alpha)$ and $|U_\ell|=\Theta(n^{1-(\ell-3)\alpha})$. We can do this by putting $\lceil\alpha\log{n}\rceil$ bits of the binary representation of a vertex into each of $U_3,\ldots U_{\ell-1}$ and the remaining bits into $U_\ell$.

We now construct a database instance $D$.
We leave every relation that does not appear in $Q_1$ empty. We next define the other relations according to the atoms of $Q_1$.
Denote by $\calR_{1,2}$ the atoms that contain a variable of $X_1$ and a variable of $X_2$; denote by $\calR_{1,3}$ the atoms that contain at least one variable of $X_i$ for each $i\in\set{1,3,\ldots,\ell}$; and similarly for $\calR_{2,3}$ and $\set{2,3,\ldots,\ell}$. According to condition~\ref{cond:notall}, these sets are disjoint.
We encode the edge sets $E_{1,2},E_{1,3},E_{2,3}$ in the relations that appear in $\calR_{1,2},\calR_{1,3},\calR_{2,3}$, respectively.
Specifically, given an atom $R(\vec{v})$ in $\calR_{1,3}$, for every edge $(v_1,v_3)\in E_{1,3}$, insert a tuple $\tau(\vec{v})$ to $R^D$ as follows: denote by $u_3,\ldots,u_\ell$ the representation of $v_3$ and set $u_1=v_1$; the mapping $\tau$ replaces every variable of the set $X_i$ with the value $u_i$; every variable that does not appear in such a set $X_i$ is replaced with the constant $\bot$.
The construction of relations in $\calR_{2,3}$ proceeds along the same lines.
For atoms in $\calR_{1,2}$ we have a similar construction, except if they contain a variable of $X_i$ for $i>2$, we duplicate each edge and insert it with all possible values in $U_i$. If there are variables of several such sets, we apply all combinations of possible values.
More precisely, consider an atom $R(\vec{v})$ in $\calR_{1,2}$, and let $J$ be the indices $i \in \{3,\ldots,\ell\}$ such that a variable of $X_i$ occurs in $\vec{v}$. For every edge $(u_1,u_2)\in E_{1,2}$ and every value $u_i\in U_i$ for every $i\in J$, insert a tuple $\tau(\vec{v})$ to $R^D$ as follows: the mapping $\tau$ replaces every variable of the set $X_i$ with the value~$u_i$, for each $i \in \{1,2\} \cup J$; every variable that does not appear in such a set $X_i$ is replaced with the constant $\bot$.
Similarly, for atoms that do not belong to the sets $\calR_{1,2},\calR_{1,3},\calR_{2,3}$, we assign variables of $X_i$ with all values of $U_i$ and other variables with $\bot$.
That is, given such an atom $R(\vec{v})$, for every value $u_i\in U_i$ for every $i\in \{1,\ldots,\ell\}$ such that a variable of $X_i$ occurs in $\vec{v}$, insert a tuple $\tau(\vec{v})$ to $R^D$ as follows: the mapping $\tau$ replaces every variable of the set $X_i$ with the value $u_i$ and replaces remaining variables with the constant $\bot$. 
On top of the construction described above, we ensure that each variable has a disjoint domain (e.g., we can concatenate variable names, i.e., if in the above construction we assigned a value $v$ in a position matching a variable $x$ in the corresponding atom, we instead assign the value $\langle v,x\rangle$).
Since no atom contains variables of all $\ell$ sets (Condition~\ref{cond:notall}), each relation size is at most $O(\max\{(n^\alpha)^{\ell-1}, n^{1 - (\ell-3)\alpha} \cdot (n^\alpha)^{\ell-2}\}) \leq O(n^{1+\alpha})$, and so the construction can be done in time $O(n^{1+\alpha})$.
Note that whenever two variables from the same set $X_i$ appear together in the same atom, we assign both with the same value.
Note also that each relation is defined only once since $Q_1$ is self-join-free.

We first claim that the answers to $Q_1$ detect triangles in the graph.
In every answer, for every set $X_i$, all variables of the set are assigned the same value: Variables of the same set that appear in the same atom have the same value by construction, values of $X_i$ that are connected in $\graph{Q_1}$ get the same value by transitivity, and due to Condition~\ref{cond:connected} this applies to all variables of the set.
If we do not use $\free(Q_1)$ as a connector,
Condition~\ref{cond:connectors} ensures that $\calR_{1,2}$, $\calR_{1,3}$, and $\calR_{2,3}$ are non-empty, so the answers are filtered by at least one atom that corresponds to each of the three edge sets, and so answers correspond to triangles. That is, $Q_1$ has an answer if and only if the graph has a triangle.
If we do use $\free(Q_1)$ as a connector, one of the sets $\calR_{1,2}$, $\calR_{1,3}$, and $\calR_{2,3}$ may be empty, and so some edge is not verified. This means that the answers to $Q_1$ are candidates for triangles, and we need to check every answer for the missing edge. In this case, by definition of the connectors set, there exists $i$ such that $\free(Q_1)\cap X_i=\emptyset$ (which is why $\free(Q_1)$ can only contain one of the three sets of Condition~\ref{cond:connectors}, and therefore at most one of the sets $\calR_{1,2}$, $\calR_{1,3}$, and $\calR_{2,3}$ may be empty). If $i=\ell$, the number of answers to $Q_1$ is at most $O(n^{1-(\ell-3)\alpha}(n^\alpha)^{\ell-2}) = O(n^{1+\alpha})$. If $i<\ell$, it is at most $O((n^\alpha)^{\ell-1}) \leq O(n^{1+\alpha})$. Thus, this check that takes constant time for each answer can be done in time $O(n^{1+\alpha})$ in total.

We now show that the answers to $Q_2$ do not interfere with detecting the triangles efficiently.
First note that we can distinguish the answers of $Q_1$ from those of $Q_2$ since we assigned different variables with disjoint domains: Since we assume $Q_1$ is not contained in $Q_2$, any body-homomorphism $h$ from $Q_2$ to $Q_1$ is not a full homomorphism, so $\free(Q_1)\neq h(\free(Q_2))$, and answers of different CQs contain different domains.
We show next that, due to Condition~\ref{cond:otherquery}, $Q_2$ does not have too many answers.
If no free variable of $Q_2$ maps to a variable of $X_\ell$, then the domain of any free variable in $Q_2$ is of size $O(n^\alpha)$. 
Since we defined $\alpha$ such that $\size{\free(Q_2)} \le 1/\alpha$, $Q_2$ has at most $n$ answers.
Otherwise, exactly one free variable of $Q_2$ maps to a variable of $X_\ell$, and at most $\ell-2$ free variables of $Q_2$ map to variables of the other sets. In this case, the number of answers to $Q_2$ is $O(n^{1-(\ell-3)\alpha}(n^\alpha)^{\ell-2}) = O(n^{1+\alpha})$.

Using this construction, enumerating $O(n^{1+\alpha})$ answers to the union allows determining whether the given graph contains a triangle. If $\free(Q_1)$ is not used as a connector, there is a triangle if and only if $Q_1$ has answers. After enumerating $O(n^{1+\alpha})$ answers to the union, we will either get an answer to $Q_1$ or the enumeration will terminate signaling that there are no such answers. If $\free(Q_1)$ is used as a connector, the union has $O(n^{1+\alpha})$ answers. We can enumerate all answers and test for every answer to $Q_1$ whether it corresponds to a triangle. This way, if $Q\in\DelayClin$, we detect triangles in the given graph in time $O(n^{1+\alpha})$, contradicting the \UTD{} hypothesis.
\end{proof}

\subsection{A Non-Provided Difficult Structure}\label{sec:lower-bound}

We want to show that we can use this reduction to show the hardness of some UCQs that contain one free-connex CQ and one difficult CQ.
The difficult CQ is self-join-free, and we first notice that there is at most one body-homomorphism mapping to a self-join-free CQ. 

\begin{prop}\label{prop:one-homo}
Let $Q=Q_1\cup Q_2$ where $Q_1$ is self-join-free.
There is at most one body homomorphism from $Q_2$ to $Q_1$.
\end{prop}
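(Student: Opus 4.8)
The goal is to show that if $Q_1$ is self-join-free, then there is at most one body-homomorphism from $Q_2$ to $Q_1$. The plan is to argue that any such homomorphism is completely pinned down on every variable, because $Q_1$ being self-join-free forces a unique ``placement'' of each atom of $Q_2$.

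First I would fix a variable $v \in \var(Q_2)$ and note that, since we may assume $Q_2$ has no isolated variables (any variable of $Q_2$ occurs in some atom — and if $Q_2$ had a variable occurring in no atom, it would not be a well-formed CQ), $v$ occurs in some atom $R(\vec{t})$ of $Q_2$, say at position $j$. If $h$ is a body-homomorphism from $Q_2$ to $Q_1$, then $R(h(\vec{t}))$ must be an atom of $Q_1$. Because $Q_1$ is self-join-free, the relation symbol $R$ occurs in at most one atom of $Q_1$; if it occurs in none, then no body-homomorphism exists at all (and the statement ``at most one'' holds vacuously). So assume $R$ occurs in exactly one atom $R(\vec{s})$ of $Q_1$. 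Then $R(h(\vec{t})) = R(\vec{s})$ as atoms, which forces $h(\vec{t})$ to equal $\vec{s}$ position-by-position; in particular $h(v) = s_j$, the variable of $Q_1$ sitting at position $j$ of the unique $R$-atom. Thus $h(v)$ is determined independently of $h$, and since $v$ was arbitrary, any two body-homomorphisms from $Q_2$ to $Q_1$ agree on all of $\var(Q_2)$, hence are equal.

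The main point to be careful about — and the only real obstacle — is the degenerate bookkeeping: one should note explicitly that if some relation symbol of $Q_2$ has no occurrence in $Q_1$ then there are zero body-homomorphisms, so ``at most one'' still holds; and that the argument uses each variable of $Q_2$ occurring in at least one atom, which is part of the definition of a CQ. There is no genuine difficulty here beyond unwinding definitions: self-join-freeness of $Q_1$ is exactly what removes the ambiguity of which $Q_1$-atom a given $Q_2$-atom maps onto, and once that choice is forced, the images of all variables are forced as well.
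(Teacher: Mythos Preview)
Your proof is correct and follows essentially the same approach as the paper: both use that self-join-freeness of $Q_1$ forces the unique $R$-atom of $Q_1$ to be the image of any $R$-atom of $Q_2$, hence pins down the image of every variable. The paper phrases this as a short proof by contradiction (two distinct homomorphisms would yield two distinct $R$-atoms in $Q_1$), while you argue directly that each $h(v)$ is determined; the content is the same.
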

\begin{proof}
Consider body-homomorphisms
$h_1$ and $h_2$ from $Q_2$ to $Q_1$.
If $h_1\neq h_2$, there exists a variable $v\in\var(Q_2)$ such that $h_1(v)\neq h_2(v)$.
Consider an atom $R(\vec{v})$ in $Q_2$ such that $v\in\vec{v}$.
Since they are body homomorphisms, $R(h_1(\vec{v}))$ and $R(h_2(\vec{v}))$ are in $Q_1$.
This contradicts the fact that $Q_1$ is self-join-free.
\end{proof}

We show that the reduction from the Reduction Lemma can be applied whenever the free-connex CQ does not provide all variables of some difficult structure in the difficult CQ.

\begin{lem}\label{lemma:original-apply}
Consider a UCQ $Q=Q_1\cup Q_2$ where $Q_1$ is difficult and $Q_2$ is free-connex.
If there exists a difficult structure in $Q_1$ that is not provided by $Q_2$, then the Reduction Lemma can be applied, and thus $Q\not\in\DelayClin$ assuming the \UTD{} hypothesis.
\end{lem}

\begin{proof}
We separate to cases according to the type of difficult structure.
In all cases, we show how to select sets $X_i$ such that the first three conditions hold and $X_\ell$ consists of a single unprovided variable $v$.
Since $v$ is not provided and $Q_2$ is free-connex, either there is no body-homomorphism $h$ from $Q_2$ to $Q_1$, or $v\not\in h(\free(Q_2))$.
In both cases, Condition~\ref{cond:otherquery} holds.

In case of a tetra, denote its variables by $\{x_1,\ldots,x_k\}$ such that $x_k$ is not provided, and set $X_i=\{x_i\}$ for $1 \le i \le k$, that is, $\ell = k$.
Since no edge contains all tetra variables, Condition~\ref{cond:notall} holds.
Condition~\ref{cond:connected} trivially holds since the sets $X_i$ are of size one. Condition~\ref{cond:connectors} holds since the tetra hyperedges form the connectors of $\{x_1,x_2\}$, $\{x_2,\ldots,x_k\}$, and $\{x_1,x_3,\ldots,x_k\}$.

In case of a chordless cycle, denote it as $x_1,\ldots,x_k,x_1$ such that $x_k$ is not provided. Set $X_1=\{x_1,..,x_{k-2}\}$, $X_2=\{x_{k-1}\}$, and $X_3=\{x_k\}$, that is, $\ell = 3$.
As the cycle is chordless, Condition~\ref{cond:notall} holds.
Condition~\ref{cond:connected} holds due to the path $x_1,..,x_{k-2}$ that lies on the cycle. Condition~\ref{cond:connectors} holds due to the three hyperedges containing $\{x_{k-2},x_{k-1}\}$, $\{x_{k-1},x_{k}\}$ and $\{x_{k},x_{1}\}$ on the cycle.

In the case of a free-path, we split into two cases.
If an end variable of the path is not provided,
denote the path by $x,z_1,\ldots,z_k,y$ such that $y$ is not provided. We set $X_1=\{x\}$, $X_2=\{z_1,\ldots,z_k\}$ and $X_3=\{y\}$, that is, $\ell = 3$.
Otherwise, if both end variables are provided, a middle variable is not provided. Denote this variable by $z$, and the path by $x_1,\ldots,x_k,z,y_1,\ldots,y_m$. We set $X_1=\{x_1,\ldots,x_k\}$, $X_2=\{y_1,\ldots,y_m\}$ and $X_3=\{z\}$.
In both cases,
Condition~\ref{cond:notall} holds since the path is chordless and so no atom contains both a variable with $x$ in the name and a variable with $y$.
Condition~\ref{cond:connected} holds due to the relevant subpaths. For Condition~\ref{cond:connectors}, the connection between the sets containing the end variables is done through the connector $\free(Q_1)$; we can use this connector since the interior of the path holds no free variables, so $X_3\cap\free(Q_1)=\emptyset$. The other two connectors appear on the path.
\end{proof}

Note that the Reduction Lemma and \autoref{lemma:original-apply} do not require $Q_2$ to be self-join-free. Consider as an example the following modification of \autoref{example:separated} with
$Q_1(x,y,w)\cqa R_1(x,z),R_2(z,y),R_3(y,w)$ and $Q_2(x,y,w)\cqa R_1(x,t_1),R_3(y,t_2),R_3(w,t_3)$.
The reduction can be applied here with $X_1=\{x\}$, $X_2=\{y\}$, and $X_3=\{z\}$, and we conclude that this query is not in $\DelayClin$ assuming the \UTD{} hypothesis.

\subsection{Completeness for Binary Relations}\label{sec:completeness}

Following the previous section, it is left to handle the case that all difficult structures in $Q_1$ are provided by $Q_2$.
For exposition purposes, we start with the case where the difficult CQ contains only binary relations, and we show that, in this case, if the UCQ is not covered by \autoref{lemma:original-apply}, then the query necessarily has a union extension. We then conclude that, when considering such UCQs with binary relations, union extensions exactly capture $\DelayClin$, assuming the \UTD{} hypothesis. Since the main result of this section is generalized in the next section to relations of all arities, the proofs are excluded from the paper body, and they appear in \autoref{sec:binary-proof}.

Recall that if a UCQ has a free-connex union extension, then it is in $\DelayClin$.
We define a process of generating a union extension of a difficult CQ by repeatedly adding virtual atoms that correspond to difficult structures (thus eliminating the difficult structures).

\begin{defi}\label{def:resolution}
Let $Q = Q_1 \cup Q_2$ be a union of a difficult CQ $Q_1$ and a free-connex CQ $Q_2$. We define a \emph{resolution step} over $Q$: if there is a difficult structure in $Q_1$ with variables~$V$, and $V$ is provided by $Q_2$, extend $Q_1$ with a new atom with the variables $V$. \emph{Resolving} the UCQ $Q$ means applying resolution steps to $Q_1$ until it is no longer possible. We denote the resulting UCQ by $Q^+$, and we say that $Q^+$ is \emph{resolved}.
\end{defi}

Note that resolution describes a special case of union extensions,
and that several resolution steps may be required in case $Q_1$ contains several difficult structures or in case a resolution step introduces a new difficult structure.
We can show that for binary relations, when all variables that participate in difficult structures are provided, the resolution process given in \autoref{def:resolution} results in a free-connex CQ.

\begin{lem}\label{lemma:binary-free-connex}
Let $Q=Q_1\cup Q_2$ where $Q_1$ is difficult and consists of binary atoms, $Q_2$ is free-connex,
and $Q_2$ provides all difficult structures in $Q_1$.
Then the resolved $Q_1^+$ is free-connex.
\end{lem}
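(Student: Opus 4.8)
The plan is to analyze the structure of the resolved CQ $Q_1^+$ directly, using the fact that all binary atoms make the hypergraph $\calH(Q_1)$ an ordinary graph, so its cyclic structures are genuinely graph cycles, and the only other obstruction to free-connexity is a free-path. First I would recall the trichotomy: since $Q_1$ is difficult, it is either cyclic (contains a chordless cycle; with only binary atoms there are no tetras, so this means an induced cycle of length $\ge 3$ in the graph $\calH(Q_1)$) or acyclic and not free-connex (contains a free-path). I would argue that a resolution step, which adds a fresh atom $R(V)$ where $V$ is the vertex set of a chordless cycle or a free-path, ``triangulates'' that structure: after adding the clique on $V$, the former chordless cycle is no longer chordless (every pair in $V$ is now a neighbor), and the former free-path $(x, z_1, \dots, z_k, y)$ with $x, y \in \free(Q_1)$ is destroyed because $x$ and $y$ become neighbors via the new atom, and the new atom also lies in $\connectors$ so it guards the free variables. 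The key claim is that this process terminates with a CQ that has no chordless cycle, no tetra, and no free-path, i.e.\ is free-connex. Termination follows because each resolution step strictly increases the edge set of $\calH(Q_1^+)$ (the added clique on $V$ is not already present, since $V$ was the vertex set of a \emph{chordless} cycle or of a free-path whose endpoints are non-neighbors), and the graph on a fixed vertex set has boundedly many edges.

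The heart of the argument is showing that when resolution terminates, no difficult structure remains, and for this I need the hypothesis that $Q_2$ provides \emph{all} difficult structures in $Q_1$ — but I must be careful: ``provides all difficult structures'' is a statement about $Q_1$, not about the partially resolved $Q_1^+$, so I need to argue that any difficult structure appearing in an intermediate $Q_1^+$ either was already (the variable set of) a difficult structure in $Q_1$, or its variables are a subset of the variables of such a structure, hence still provided (since ``provides'' is monotone: if $Q_2$ provides a set $W$ via body-homomorphism $h$ with the $S$-connex witness, then it provides any subset of $W$ that is the image of a subset of the corresponding $V_2 \subseteq \free(Q_2)$ — here I would lean on condition (2) in the definition of ``provides'' and the fact that $S$-connexity is inherited). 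The cleanest route is: suppose for contradiction that the resolved $Q_1^+$ still contains a difficult structure with variable set $V$. Since all atoms added during resolution are cliques and $Q_1$ had only binary atoms, I would show $\calH(Q_1^+)$ is a chordal graph (every resolution step adds a clique on the vertices of what was a chordless cycle or an induced path, which is precisely a chordal-completion move) — actually the right invariant is that $\calH(Q_1^+)$ stays a subgraph of some fixed chordal graph, or more simply that no \emph{new} chordless cycles are created. Then a remaining chordless cycle or tetra in $Q_1^+$ must already have existed in $Q_1$ (as a structure on the same vertex set), contradicting termination; and a remaining free-path $(x, z_1, \ldots, z_k, y)$ in $Q_1^+$ is a chordless path between free variables, which — tracing back — corresponds to a free-path already present in $Q_1$ (new edges only remove chordlessness, never create a fresh induced path between free variables that was absent before), again contradicting termination.

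I expect the main obstacle to be the bookkeeping around the claim ``every difficult structure encountered during the resolution process is provided by $Q_2$.'' The subtlety is that resolution is defined only in terms of difficult structures of $Q_1$ (Definition~\ref{def:resolution} says ``difficult structure in $Q_1$''), whereas after one step we are looking at $Q_1^+$; so either (a) Definition~\ref{def:resolution} should be read as allowing difficult structures of the current extension, in which case I must show each such structure's variable set is provided — using monotonicity of ``provides'' together with the claim that new difficult structures have variable sets contained in old ones — or (b) it is read literally, in which case I must show that resolving all difficult structures \emph{of $Q_1$} already suffices to kill every difficult structure of the resulting $Q_1^+$, which is the chordality argument above. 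I would go with interpretation (b) and make the chordality/no-new-chordless-structures lemma precise: adding, for each chordless cycle and each free-path of $Q_1$, the clique on its vertices, yields a graph in which (i) every cycle has a chord or is guarded by an atom, killing chordless cycles and tetras, and (ii) every induced path between two free variables has been shortcut by a clique edge, killing free-paths. Proving (i) and (ii) is where the real graph-theoretic work lies; once they are in hand, free-connexity of $Q_1^+$ is immediate from the trichotomy.
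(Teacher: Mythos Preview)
Your plan has a genuine gap: the central claim that ``no new chordless structures are created'' during resolution is false. Adding a virtual atom on the vertex set of a chordless cycle introduces a large hyperedge, and two such hyperedges can overlap so as to create a brand-new chordless triangle. Concretely, take $Q_1$ with binary atoms forming the $5$-cycle $a,b,c,d,e$ together with edges $\{c,f\},\{f,a\}$. The chordless cycles of $Q_1$ include $a,b,c,d,e$, the $4$-cycle $a,b,c,f$, and the $5$-cycle $a,f,c,d,e$. After resolving all three you have hyperedges $\{a,b,c,d,e\}$, $\{a,b,c,f\}$, $\{a,c,d,e,f\}$; now $b,d,f$ are pairwise neighbors (each pair lies in one of the new hyperedges) but no single edge contains all three, so $b,d,f$ is a chordless triangle that did not exist in $Q_1$. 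Thus your route (b) --- ``resolve exactly the difficult structures of $Q_1$ and argue nothing new appears'' --- cannot work, and route (a) fails for the same reason, since the new structure's vertex set $\{b,d,f\}$ is not contained in any single old difficult structure. (Resolution, incidentally, is meant to be read iteratively on the current extension; the paper's proof uses exactly that reading.)

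The idea you are missing is the one the paper pivots on: when $Q_1^+$ is fully resolved but still has a difficult structure, that structure must contain some vertex $v\notin h(\free(Q_2))$ (otherwise another resolution step would apply). Every virtual atom has variables inside $h(\free(Q_2))$, so every edge of $Q_1^+$ incident to $v$ is an \emph{original} binary edge of $Q_1$. This locality around $v$ is what lets you pull the structure back: replace the virtual edges of the bad cycle or free-path by original paths through $h(\free(Q_2))$ (a lemma of the form ``a path in $Q_1^+$ yields a chordless path in $Q_1$ through the same or provided vertices''), obtaining a simple cycle through $v$ in $Q_1$, and then use the purely graph-theoretic fact (valid because $Q_1$ has only binary atoms) that a vertex on a simple cycle lies on a chordless cycle. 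The free-path case is handled similarly, tracing both sides of the path back into $Q_1$ and either closing a cycle through $v$ or exhibiting a free-path in $Q_1$ containing $v$. In either case you get a difficult structure of $Q_1$ containing the unprovided vertex $v$, contradicting the hypothesis that $Q_2$ provides all difficult structures of $Q_1$. Your proposal never isolates this unprovided vertex and never exploits that edges touching it are original; that is the missing key.
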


By combining \autoref{lemma:binary-free-connex} with \autoref{lemma:original-apply} and the Reduction Lemma, we get that free-connex union extensions capture all UCQs in $\DelayClin$ that contain one free-connex CQ and one difficult CQ when the relations are binary.

\begin{thm}\label{thm:binary-hardness}
Let $Q=Q_1\cup Q_2$ be a non-redundant UCQ where $Q_1$ is difficult and comprises of binary atoms and $Q_2$ is free-connex.
Assuming the \UTD{} hypothesis, the following are equivalent:
\begin{itemize}
    \item $Q\in\DelayClin$.
    \item $Q$ admits a free-connex union extension.
    \item The resolution process of Definition~\ref{def:resolution} turns $Q_1$ free-connex. 
\end{itemize}
\end{thm}

\subsection{Completeness for General Arity}
\label{sec:general-arity}

\autoref{lemma:binary-free-connex} no longer holds when we allow general arities. The current proof fails since, unlike for graphs, the existence of a simple cycle in a hypergraph does not imply the existence of a chordless cycle.
However, this does not mean that \autoref{thm:binary-hardness} does not hold for general arity or that our techniques cannot be used in this case.
Here is an example (left open by prior work) for when the generalization of \autoref{lemma:binary-free-connex} for general arity does not hold, but we can still use the Reduction Lemma to show that the UCQ is hard assuming the \UTD{} hypothesis.

\begin{exaC}[{\cite[Example 45]{DBLP:conf/pods/CarmeliK19}}]
Let $Q=Q_1\cup Q_2$ with:
\begin{align*}
	Q_1(x_2,\ldots,x_k)\cqa& \{R_i (x_1,...,x_{i-1},x_{i+1},...,x_k)\mid 1\leq i\leq k-1\}\\
	Q_2(x_2,\ldots,x_k)\cqa& R_1(x_2,\ldots,x_{k-1},x_1),R_2(x_k,x_3,\ldots,x_{k-1},v).
\end{align*}
If $k\geq 4$, the query $Q_1$ is cyclic and $Q_2$ is free-connex. Although $Q_2$ provides the tetra $\{x_1,\ldots,x_{k-1}\}$, adding a virtual atom with these variables does not result in a free-connex extension, as this extension is exactly a tetra $\{x_1,\ldots,x_{k}\}$.
The Reduction Lemma can be applied here by setting $X_i=\{x_i\}$:
Condition~\ref{cond:notall} holds since no edge contains $\{x_1,\ldots,x_k\}$;
Condition~\ref{cond:connected} holds trivially since the sets are of size one;
Condition~\ref{cond:connectors} holds due to the hyperedges
$\{x_1,\ldots,x_k\}\setminus\{x_1\}$,
$\{x_1,\ldots,x_k\}\setminus\{x_2\}$, and $\{x_1,\ldots,x_k\}\setminus\{x_3\}$; and
Condition~\ref{cond:otherquery} holds since $x_k\not\in h(\free(Q_2))$, where $h$ is the unique homomorphism from $Q_2$ to $Q_1$.
Thus, assuming the \UTD{} hypothesis, $Q\not\in\DelayClin$.
\qed
\end{exaC}

We next generalize \autoref{thm:binary-hardness} to general arity using the Reduction Lemma. As the full proof is involved, it is deferred to \autoref{sec:general-arity-proof}.

\begin{lem}\label{lemma:general-arity-reduction-applies}
Let $Q=Q_1\cup Q_2$ be a non-redundant UCQ where $Q_1$ is difficult and $Q_2$ is free-connex.
If the resolved $Q_1^+$ is not free-connex,
then the Reduction Lemma can be applied, and thus $Q\not\in\DelayClin$ assuming the \UTD{} hypothesis.
\end{lem}

\begin{proof}[Proof Sketch]
Since $Q_1$ is difficult, it is self-join-free, and so there is at most one body-homomorphism from $Q_2$ to $Q_1$.
If no such homomorphism exists, then in particular, $Q_2$ does not provide any difficult structure in $Q_1$, and according to \autoref{lemma:original-apply}, the Reduction Lemma can be applied.
Now we can assume that there is one such body-homomorphism $h$.

Let $Q^+$ be the fully resolved $Q$ according to \autoref{def:resolution}.
Since $Q_1^+$ is not free-connex,
there is a difficult structure in $Q_1^+$, and since $Q_1$ is fully resolved, $h(\free(Q_2))$ does not contain all of the variables in this structure. We prove that the reduction can be applied in this case by induction on the extension steps.
Specifically, we prove the following claim by induction on $t$: if there exists a variable $v\not\in h(\free(Q_2))$ in a difficult structure in an extension of $Q_1$ obtained after $t$ resolution steps, then the Reduction Lemma can be applied.

The base case is given by \autoref{lemma:original-apply}: If $Q_2$ does not provide some difficult structure in $Q_1$, then the Reduction Lemma can be applied.
We next show the induction step.
Assume there exists a variable $v\not\in h(\free(Q_2))$ in a difficult structure $S$ in an extension of $Q_1$. If all edges in this structure appear in $Q_1$, then by \autoref{lemma:original-apply}, the Reduction Lemma can be applied.
Otherwise, take the last extension $Q_1'$ in the sequence of resolution steps where this structure does not appear. This means that $Q_1'$ contains all edges of $S$ except one, and this missing edge comprises the nodes of some difficult structure in $Q_1'$, where all these nodes are in $h(\free(Q_2))$.
Note that if we show that $v$ is in a difficult structure in $Q_1'$, we can use the induction assumption to show that the Reduction Lemma applies.
\autoref{sec:general-arity-proof} contains a rigorous case distinction: we first separate according to the type of difficult structure of $S$, and then by the type of difficult structure in $Q_1'$ that causes the addition of the last edge of $S$.
To show the induction step, we sometimes use the induction assumption and sometimes directly identify structures in $Q_1$ on which we can apply our reduction.
\end{proof}

By combining \autoref{lemma:general-arity-reduction-applies} with the Reduction Lemma, we get that free-connex union extensions capture all UCQs in $\DelayClin$ that contain one free-connex CQ and one difficult CQ. Moreover, such union extensions can always be obtained using the resolution process of \autoref{def:resolution}.

\begin{thm}\label{thm:general-hardness}
Let $Q=Q_1\cup Q_2$ be a non-redundant UCQ where $Q_1$ is difficult and $Q_2$ is free-connex.
Assuming the \UTD{} hypothesis, the following are equivalent:
\begin{itemize}
    \item $Q\in\DelayClin$.
    \item $Q$ admits a free-connex union extension.
    \item The resolution process of Definition~\ref{def:resolution} turns $Q_1$ free-connex. 
\end{itemize}
\end{thm}

\begin{proof}
If the resolution process of Definition~\ref{def:resolution} turns $Q_1$ free-connex, the resolved UCQ is a free-connex union extension of $Q$, and according to \autoref{theorem:UCQtractability}, $Q\in\DelayClin$.
If $Q\in\DelayClin$, the resolved UCQ is free-connex, according to \autoref{lemma:general-arity-reduction-applies} and assuming the \UTD{} hypothesis.
\end{proof}

\subsection{Proof of \autoref{lemma:general-arity-reduction-applies}}
\label{sec:general-arity-proof}
\pagebreak
\subsubsection*{Preparations}

We first name some of the structures that will be used in the proof.

\begin{defi}
Consider a hypergraph describing a CQ.
\begin{itemize}
    \item Nodes $v,u_1,\ldots,u_k$ form a \emph{hand-fan (from $u_1$ to $u_k$ centered in~$v$)} if (1) $u_1,\ldots,u_k$ is a chordless path with $k\ge 3$, (2)~for every $1 \le i < k$ the hypergraph has an edge containing $\{v,u_i,u_{i+1}\}$, and (3) $v \ne u_i$ for all $i$.
    \item A \emph{free-hand-fan} is a hand-fan where $u_1,\ldots,u_k$ is a free-path.
    \item Nodes $v,u_1,\ldots,u_k$ form a \emph{flower (centered in $v$)} if (1) $u_1,\ldots,u_k$ is a chordless cycle with $k\ge 3$, and (2) for every $1 \le i < k$  the hypergraph has an edge containing $\{v,u_i,u_{i+1}\}$, and the hypergraph has an edge containing $\{v,u_k,u_1\}$.
\end{itemize}
\end{defi}

We define the \emph{chordless shortening} of a path $v_1\ldots,v_\ell$ to be the chordless path obtained by the following process: start with $i=1$, let $j$ be the largest index such that $v_j$ is neighbor of $v_i$, mark all vertices $v_k$ with $i<k<j$, and repeat with $j$ instead of $i$ until reaching the end of the path. Finally, remove all marked nodes.

\begin{lem}[Implications of a Simple Cycle]\label{lemma:simple-cycle-implications}
If a node $v$ is in a simple cycle $v,v_2\ldots,v_\ell,v$, then one of the following holds:
\begin{itemize}
    \item $v$ appears in a chordless cycle (possibly a triangle).
    \item There is an edge containing $v$ and its two cycle neighbors $v_2,v_\ell$.
    \item There is a hand-fan from $v_2$ to $v_k$ centered in $v$, given by the chordless shortening of the path $v_2,\ldots,v_\ell$.
\end{itemize}
\end{lem}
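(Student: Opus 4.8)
The plan is to follow the template of the binary argument in \autoref{lemma:binary-cycle-to-chordless}, while paying attention to the one feature that distinguishes hypergraphs from graphs: a simple cycle with no chords can still fail to be a chordless cycle because a single hyperedge may contain all of its variables. First I would fix notation. Since $v = v_1,v_2,\ldots,v_\ell,v_1$ is a simple cycle, $\ell \ge 3$ and $v_1,\ldots,v_\ell$ are pairwise distinct, so in particular $v \notin \{v_2,\ldots,v_\ell\}$. I would then replace the sub-path $v_2,\ldots,v_\ell$ by a shortest path $u_1 = v_2,\ldots,u_k = v_\ell$ inside the induced subgraph $\calH[\{v_2,\ldots,v_\ell\}]$. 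Such a path exists (the sub-path itself lies in this induced subgraph), it is automatically chordless (a chord would shorten it), it keeps the endpoints $v_2,v_\ell$, it avoids $v$, and it has $k \ge 2$ vertices; this is the ``chordless shortening'' in the statement. Because $v$ is a cycle-neighbor of $v_2 = u_1$ and of $v_\ell = u_k$, it is a neighbor of $u_1$ and $u_k$.

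The key step is a case distinction on the set $S = \{\, i : v \text{ neighbors } u_i\,\}$ (with $i$ ranging over $1,\ldots,k$), which contains $1$ and $k$. If $S$ has a gap --- consecutive elements $a < b$ of $S$ with $b - a \ge 2$ --- then I claim $v,u_a,u_{a+1},\ldots,u_b,v$ is a chordless cycle: it is simple and has length $\ge 3$; its only candidate chords are among the $u_i$'s (excluded since $u_1,\ldots,u_k$ is chordless) or between $v$ and an interior $u_i$ (excluded since such $i \notin S$); and no hyperedge can contain all its variables, because such an edge would place $v$ and $u_{a+1}$ in a common edge and force $a+1 \in S$. This yields the first alternative. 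Otherwise $S = \{1,\ldots,k\}$, i.e.\ $v$ neighbors every $u_i$. If $k = 2$, then $v,v_2,v_\ell$ are pairwise neighbors; if some hyperedge contains all three we are in the second alternative, and otherwise $v,v_2,v_\ell,v$ is a chordless triangle. If $k \ge 3$, then either every consecutive pair $u_i,u_{i+1}$ admits a hyperedge containing $\{v,u_i,u_{i+1}\}$, in which case $v,u_1,\ldots,u_k$ is precisely a hand-fan from $v_2$ to $v_\ell$ centered in $v$ (the third alternative), or some consecutive pair does not, in which case $v,u_i,u_{i+1}$ are pairwise neighbors with no common hyperedge and hence form a chordless triangle.

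The main obstacle, and the reason \autoref{lemma:binary-cycle-to-chordless} does not suffice here, is the ``no hyperedge contains all cycle variables'' clause of chordlessness. The observation that carries the whole argument is that a single hyperedge joining $v$ to two non-consecutive vertices of the $u$-path forces $v$ to be adjacent to everything between them; so as soon as $S$ has a gap (or a consecutive triple fails the fan condition) the offending all-containing edge is automatically ruled out. Beyond this, I would only need to sanity-check the degenerate boundaries --- $k = 2$, where the ``path'' is the single edge $\{v_2,v_\ell\}$, and $S = \{1,k\}$ with $k \ge 3$ --- both of which are already subsumed by the case split above. There is no real computation involved; the proof is entirely this structural case analysis.
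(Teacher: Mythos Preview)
Your proof is correct and follows essentially the same approach as the paper's: take the chordless shortening $u_1,\ldots,u_k$ of $v_2,\ldots,v_\ell$, case-split on whether $v$ neighbors every $u_i$, and in the all-neighbors case further split on whether every consecutive triple $\{v,u_i,u_{i+1}\}$ lies in an edge. Your presentation is slightly more explicit than the paper's in verifying the ``no edge contains all cycle variables'' clause, and you reorganize the $k=2$ boundary, but the logic is identical.
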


\begin{proof}
Denote by $v_2=u_1,\ldots,u_k=v_\ell$ the chordless shortening of the path $v_2,\ldots,v_\ell$.
The first case is that some $u_i$ is not a neighbor of $v$.
Let $u_s$ be the last vertex before $u_i$ which is a neighbor of~$v$, and let $u_t$ be the first vertex after $u_i$ which is a neighbor of $v$. As we took a chordless path, $u_s$ and $u_t$ are not neighbors, and so we know there is no edge containing $\{v,u_s,u_t\}$.
Thus, the cycle $v,u_s,\ldots,u_t,v$ is chordless.
The second case is that all variables on the cycle $v,u_1,\ldots,u_k,v$ are neighbors of $v$. If there exists $1\leq j <k$ such that there is no edge containing $\{v,u_j,u_{j+1}\}$, then $v,u_j,u_{j+1}$ form a chordless cycle of length three.
Otherwise, for any $1 \le j < k$ there is an edge containing $\{v,u_j,u_{j+1}\}$. Then, if $k=2$, there is an edge containing $\{v,u_1,u_2\}$ (that is, $v$ and its two cycle neighbors), and if $k\ge 3$, nodes $v,u_1,\ldots,u_k$ form a hand-fan.
\end{proof}

We show that the Reduction Lemma can be applied in the following three cases.

\begin{lem}[Reduction for Free-Hand-Fan]\label{lemma:free-hand-fan-reduction}
Let $Q=Q_1\cup Q_2$ be a non-redundant UCQ where $Q_1$ is self-join-free and $h$ is a body-homomorphism from $Q_2$ to $Q_1$.
If $v\not\in h(\free(Q_2))$ is the center of a free-hand-fan in $Q_1$, then the Reduction Lemma can be applied, and thus $Q\not\in\DelayClin$ assuming the \UTD{} hypothesis.
\end{lem}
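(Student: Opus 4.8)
The plan is to apply the Reduction Lemma (\autoref{lemma:reduction}) with the singleton sets $X_i = \{u_i\}$ for $1 \le i < k$ together with $X_\ell = \{v\}$, where $v, u_1, \ldots, u_k$ is the free-hand-fan, $u_1, \ldots, u_k$ is the free-path, and $v \notin h(\free(Q_2))$. So I take $\ell = k$, with $X_i = \{u_i\}$ for $1 \le i \le k-1$ and $X_k = \{v\}$; note these sets are nonempty, and they are disjoint because the $u_i$ are distinct on a chordless path and $v \ne u_i$ for all $i$ by the definition of a hand-fan. I would then verify the four conditions of the Reduction Lemma in order.

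First I would check Condition~\ref{cond:connected}: each $X_i$ is a singleton, so $\graph{Q_1}[X_i]$ is trivially connected. Next, Condition~\ref{cond:notall}: I must show no atom of $Q_1$ contains variables from all $\ell = k$ sets, i.e., no edge contains $\{u_1, \ldots, u_{k-1}, v\}$. This follows because $u_1, \ldots, u_{k-1}$ together with $u_k$ lie on a \emph{chordless} path of length $\ge 2$ (as $k \ge 3$), so no single edge can contain both $u_1$ and $u_{k-1}$ (they are non-consecutive on the chordless path when $k \ge 3$; for $k=3$ they are $u_1,u_2$ consecutive, but then the path is $u_1,u_2,u_3$ and $u_1,u_3$ are non-adjacent, and I should argue no edge carries $u_1,u_2,u_3$ and $v$ simultaneously — actually I only need no edge to carry all of $u_1,\ldots,u_{k-1},v$, and since a free-path is chordless, $u_1$ and $u_k$ are non-neighbors, but I need this for $u_1, u_{k-1}$). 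The cleanest line: since $u_1, \ldots, u_k$ is a chordless path with $k \ge 3$, any edge contains at most two consecutive $u_i$'s, hence cannot contain all of $u_1, \ldots, u_{k-1}$ once $k - 1 \ge 3$; and for the small cases $k = 3, 4$ I would check by hand using chordlessness of the path. Then Condition~\ref{cond:connectors}: I need a connector for each of $\{1,2\}$, $\{1,3,\ldots,k\}$, $\{2,3,\ldots,k\}$. For $\{1,2\} = \{u_1, u_2\}$, the hand-fan gives an edge containing $\{v, u_1, u_2\} \supseteq \{u_1, u_2\}$. For the other two, which involve $v$ plus many $u_i$'s, I should reindex: the free-hand-fan structure lets me relabel so that the hand-fan edges $\{v, u_i, u_{i+1}\}$ serve as connectors; more precisely I would choose the ordering of the $X_i$'s so that the two ``large'' index sets $\{1,3,4,\ldots,k\}$ and $\{2,3,4,\ldots,k\}$ each have only two of the $u$-indices outside $\{3,\ldots,k\}$... here the real point is that $\{3,\ldots,\ell\}$ must be \emph{all but two} of the sets, so the sets $S$ in the lemma are essentially ``two singletons plus everything else''. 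Since $\ell = k$ and $X_k = \{v\}$, the set $\{1, 3, \ldots, k\}$ is $\{u_1, u_3, u_4, \ldots, u_{k-1}, v\}$ and $\{2,3,\ldots,k\} = \{u_2, u_3, \ldots, u_{k-1}, v\}$. I would need a single atom meeting $u_1, v$ and all $u_3, \ldots, u_{k-1}$ — this is where I must be careful and possibly choose a different assignment of the path vertices to the index sets, grouping a whole subpath into one $X_i$ rather than using all singletons, exactly as in the free-path case of \autoref{lemma:original-apply}.

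Given this subtlety, I expect the honest approach mirrors the chordless-path argument in the proof of \autoref{lemma:original-apply}: rather than $k$ singletons, set $X_1 = \{u_1, \ldots, u_j\}$, $X_2 = \{u_{j+1}, \ldots, u_{k}\}$ (split the free-path at its interior), and $X_3 = \{v\}$, so $\ell = 3$. Then Condition~\ref{cond:notall} holds since the free-path is chordless; Condition~\ref{cond:connected} holds via the two subpaths; and Condition~\ref{cond:connectors}: the connector for $\{1,2\}$ comes from a hand-fan edge $\{v, u_j, u_{j+1}\}$ — wait, that contains $v$, so it's actually a connector for $\{1,2,3\}$, hence for $\{1,2\}$, $\{1,3\}$, $\{2,3\}$ all at once; but we also genuinely need a connector for $\{1,2\}$ \emph{not} necessarily avoiding $v$, so this single edge suffices for all three required sets $\{1,2\}, \{1,3\}, \{2,3\}$ when $\ell = 3$. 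Finally Condition~\ref{cond:otherquery}: since $X_\ell = X_3 = \{v\}$ and $v \notin h(\free(Q_2))$ (and $h$ is the unique body-homomorphism by \autoref{prop:one-homo}), we get $\free(Q_2) \cap h^{-1}(X_3) = \emptyset$, so the hypothesis ``$\free(Q_2) \cap h^{-1}(X_\ell) \ne \emptyset$'' fails and the condition holds vacuously; if no body-homomorphism exists at all, the condition is again vacuous.

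The main obstacle is Condition~\ref{cond:connectors}: I must pick the split point $j$ and verify that the hand-fan genuinely supplies edges connecting the chosen $X_i$'s across the required index sets. For $\ell = 3$ this is easy because one hand-fan edge $\{v, u_j, u_{j+1}\}$ touches all three sets at once, but I need to double-check that \emph{some} $j$ gives a legitimate split where both $X_1, X_2$ are nonempty (requires $k \ge 2$, true since $k \ge 3$) and where $\graph{Q_1}[X_1], \graph{Q_1}[X_2]$ remain connected — which holds because they are contiguous subpaths of the chordless path $u_1, \ldots, u_k$. One more check: that $v \notin X_1 \cup X_2$, i.e., $v \ne u_i$ for all $i$, which is part of the hand-fan definition. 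With all four conditions verified, the Reduction Lemma yields $Q \notin \DelayClin$ under the \UTD{} hypothesis, which is exactly the claim.
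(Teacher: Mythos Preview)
Your final $\ell=3$ approach has a fatal flaw that you actually stumble upon but do not recognize: the hand-fan edge containing $\{v,u_j,u_{j+1}\}$ meets all three of your sets $X_1,X_2,X_3$, and therefore \emph{violates Condition~\ref{cond:notall}} of the Reduction Lemma, which requires that every atom miss at least one $X_i$. You cannot use this edge as a universal connector precisely because its existence breaks the first condition. Moreover, no choice of split point $j$ avoids this: for every $1\le j\le k-1$ the hand-fan guarantees an edge containing $\{v,u_j,u_{j+1}\}$, so with $X_3=\{v\}$ and the path split into two contiguous blocks there is always an atom hitting all three sets.

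The paper's proof uses $\ell=4$: it takes $X_1=\{u_1\}$, $X_2=\{u_k\}$, $X_3=\{u_2,\ldots,u_{k-1}\}$, and $X_4=\{v\}$. Condition~\ref{cond:notall} now holds because $u_1$ and $u_k$ are non-consecutive on the chordless path and hence not neighbors, so no atom contains both. For Condition~\ref{cond:connectors}, the hand-fan edges containing $\{u_1,u_2,v\}$ and $\{u_{k-1},u_k,v\}$ serve as connectors for $\{X_1,X_3,X_4\}$ and $\{X_2,X_3,X_4\}$ respectively; the crucial point you missed is that the connector for $\{X_1,X_2\}$ is $\free(Q_1)$ itself (not an atom), which is admissible because $X_3$ consists of interior free-path vertices and is therefore disjoint from $\free(Q_1)$. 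Condition~\ref{cond:otherquery} holds since $X_\ell=X_4=\{v\}$ with $v\notin h(\free(Q_2))$.
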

\begin{proof}
Denote the free-hand-fan nodes by $v,u_1,\ldots,u_k$.
Set $X_1=\{u_1\}$, $X_2=\{u_k\}$, $X_3=\{u_2,\ldots,u_{k-1}\}$ and $X_4=\{v\}$.
Since $u_1,\ldots,u_k$ is a chordless path, no edge contains $4$ of the hand-fan variables, and so
Condition~\ref{cond:notall} holds.
Condition~\ref{cond:connected} holds since $u_2,\ldots,u_{k-1}$ is a path.
Condition~\ref{cond:connectors} holds as
$\{u_{k-1},u_k,v\}$ connects $\{X_2,X_3,X_4\}$, 
$\{u_1,u_2,v\}$ connects $\{X_1,X_3,X_4\}$, and
the free variables connect $\{X_1,X_2\}$, as $u_1,u_2$ are free in $Q_1$ and none of the variables in $X_3$ are free.
Condition~\ref{cond:otherquery} holds since $v\not\in h(\free(Q_2))$.
\end{proof}

\begin{lem}[Reduction for Flower]\label{lemma:flower-reduction}
Let $Q=Q_1\cup Q_2$ be a non-redundant UCQ where $Q_1$ is self-join-free and $h$ is a body-homomorphism from $Q_2$ to $Q_1$.
If $v\not\in h(\free(Q_2))$ is the center of a flower in $Q_1$, then the Reduction Lemma can be applied, and thus $Q\not\in\DelayClin$ assuming the \UTD{} hypothesis.
\end{lem}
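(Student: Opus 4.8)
The plan is to mirror the proof of the Free-Hand-Fan Reduction Lemma (\autoref{lemma:free-hand-fan-reduction}) and of the chordless-cycle case inside \autoref{lemma:original-apply}: I will exhibit explicit sets $X_1,\ldots,X_\ell\subseteq\var(Q_1)$ and verify the four conditions of the Reduction Lemma (\autoref{lemma:reduction}). Write the flower nodes as $v,u_1,\ldots,u_k$ with $k\ge 3$, so that $u_1,\ldots,u_k$ is a chordless cycle and $Q_1$ has an edge containing each of $\{v,u_1,u_2\},\ldots,\{v,u_{k-1},u_k\}$ as well as an edge containing $\{v,u_k,u_1\}$. I set $\ell=4$ and
\[
X_1=\{u_2,\ldots,u_{k-1}\},\qquad X_2=\{u_1\},\qquad X_3=\{u_k\},\qquad X_4=\{v\}.
\]
These sets are non-empty (when $k=3$ we simply have $X_1=\{u_2\}$) and pairwise disjoint, since $u_1,\ldots,u_k$ are distinct and $v$ differs from every $u_i$.

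The verification then proceeds along the familiar lines. For Condition~\ref{cond:notall}, an atom meeting all four sets would contain $v$, $u_1$, $u_k$ and some $u_j$ with $2\le j\le k-1$: for $k=3$ it would then contain all of $u_1,u_2,u_3$, and for $k\ge 4$ at least two of $u_1,u_j,u_k$ are non-consecutive on the cycle, hence (by chordlessness) not neighbors --- either way a contradiction. Condition~\ref{cond:connected} is immediate: $X_2,X_3,X_4$ are singletons and $X_1$ is a subpath of the cycle. For Condition~\ref{cond:connectors} I use the flower edges: an edge containing $\{v,u_1,u_2\}$ meets $X_1$ and $X_2$ (case $S=\{1,2\}$), an edge containing $\{v,u_{k-1},u_k\}$ meets $X_1,X_3,X_4$ (case $S=\{1,3,4\}$), and an edge containing $\{v,u_k,u_1\}$ meets $X_2,X_3,X_4$ (case $S=\{2,3,4\}$). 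Finally, Condition~\ref{cond:otherquery} holds vacuously: $X_\ell=X_4=\{v\}$ and $v\notin h(\free(Q_2))$ by hypothesis, so $\free(Q_2)\cap h^{-1}(X_\ell)=\emptyset$ and the premise of the condition never fires; by \autoref{prop:one-homo}, $h$ is moreover the only body-homomorphism that needs to be checked. Having verified all four conditions, the Reduction Lemma yields $Q\notin\DelayClin$ assuming the \UTD{} hypothesis.

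I do not expect a genuine obstacle here: this is essentially the same bookkeeping already carried out for the hand-fan, with the cycle ``cut'' at $u_1$ so that the chordless subpath $u_2,\ldots,u_{k-1}$ plays the role of the path interior and the edge on $\{v,u_k,u_1\}$ supplies the third connector (this is precisely why $\ell=4$ rather than $\ell=3$ is needed --- with three sets the edge $\{v,u_k,u_1\}$ would already meet all of them). The only place demanding care is Condition~\ref{cond:notall}, where the degenerate triangle case $k=3$ must be handled alongside $k\ge 4$, and where one must invoke the exact definition of a chordless cycle: no edge contains all cycle vertices, and non-consecutive cycle vertices are not neighbors.
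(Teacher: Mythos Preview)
Your proof is correct and follows essentially the same approach as the paper. The only cosmetic difference is the choice of partition: the paper takes $X_1=\{u_1\}$, $X_2=\{u_2\}$, $X_3=\{u_3,\ldots,u_k\}$, $X_4=\{v\}$, whereas you take the symmetric variant $X_1=\{u_2,\ldots,u_{k-1}\}$, $X_2=\{u_1\}$, $X_3=\{u_k\}$, $X_4=\{v\}$; the verification of the four conditions is then the same bookkeeping with the roles of the three flower edges permuted accordingly.
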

\begin{proof}
Denote the flower nodes by $v,u_1,\ldots,u_k$.
Set $X_1=\{u_1\}$, $X_2=\{u_2\}$, $X_3=\{u_3,\ldots,u_k\}$ and $X_4=\{v\}$.
Condition~\ref{cond:notall} holds since no edge contains $4$ of the flower nodes, as $u_1,\ldots,u_k$ is a chordless cycle.
Condition~\ref{cond:connected} holds since $u_3,\ldots,u_k$ is a path.
Condition~\ref{cond:connectors} holds as
$\{u_1,u_2,v\}$ connects $\{X_1,X_2\}$,
$\{u_2,u_3,v\}$ connects $\{X_2,X_3,X_4\}$, and
$\{u_1,u_k,v\}$ connects $\{X_1,X_3,X_4\}$.
Condition~\ref{cond:otherquery} holds since $v\not\in h(\free(Q_2))$.
\end{proof}

\begin{lem}[Reduction for Almost Tetra]\label{lemma:almost-tetra-reduction}
Let $Q=Q_1\cup Q_2$ be a non-redundant UCQ where $Q_1$ is self-join-free and $h$ is a body-homomorphism from $Q_2$ to $Q_1$.
If there are variables $x_1,\ldots x_k$ with $k\geq 4$ such that $x_k\not\in h(\free(Q_2))$ and $Q_1$ has an edge containing $\{x_1,\ldots,x_k\}\setminus\{x_i\}$ for every $1\leq i\leq k-1$, but no edge containing all of $\{x_1,\ldots,x_k\}$,
then the Reduction Lemma can be applied, and thus $Q\not\in\DelayClin$ assuming the \UTD{} hypothesis.
\end{lem}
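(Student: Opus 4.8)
The plan is to apply the Reduction Lemma (\autoref{lemma:reduction}) directly, taking the singleton sets $X_i=\{x_i\}$ for $1\le i\le k$, so that $\ell=k\ge 4$; these sets are non-empty, and pairwise disjoint since $x_1,\ldots,x_k$ are distinct. It then remains to verify the four conditions, and the guiding observation is that an ``almost tetra'' on $k$ nodes already carries enough edges to serve as connectors, even though we are not promised an edge omitting $x_k$.

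First I would check Condition~\ref{cond:notall}: every atom $R(V)$ of $Q_1$ must miss some $X_i$. This is immediate, because by hypothesis no edge of $Q_1$ contains all of $\{x_1,\ldots,x_k\}$, so any atom $R(V)$ omits some $x_i$, giving $V\cap X_i=\emptyset$. Condition~\ref{cond:connected} holds trivially, since each $X_i$ is a singleton.

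For Condition~\ref{cond:connectors}, with $\ell=k$ the three index sets to cover are $\{1,2\}$, $\{1,3,\ldots,k\}$, and $\{2,3,\ldots,k\}$. Since $k\ge 4$ we have $1,2,3\le k-1$, so $Q_1$ contains edges $e_1,e_2,e_3$ with $e_i\supseteq\{x_1,\ldots,x_k\}\setminus\{x_i\}$. Then $e_3$ meets $X_1$ and $X_2$, covering $S=\{1,2\}$; $e_2$ meets $X_1,X_3,\ldots,X_k$, covering $S=\{1,3,\ldots,k\}$; and $e_1$ meets $X_2,X_3,\ldots,X_k$, covering $S=\{2,3,\ldots,k\}$. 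In particular we never need to invoke $\free(Q_1)$ as a connector, nor an edge omitting $x_k$ --- which is exactly why the hypothesis only demands the edges $e_1,\ldots,e_{k-1}$. Finally, for Condition~\ref{cond:otherquery} note $X_\ell=X_k=\{x_k\}$; by \autoref{prop:one-homo} the body-homomorphism $h$ is the unique one from $Q_2$ to $Q_1$, and by hypothesis $x_k\notin h(\free(Q_2))$, so $\free(Q_2)\cap h^{-1}(X_k)=\emptyset$ and the condition holds vacuously.

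With all four conditions verified, the Reduction Lemma applies. I do not anticipate a genuine obstacle here; the only point needing care is making the edge bookkeeping of Condition~\ref{cond:connectors} explicit, in particular that $k\ge 4$ is precisely what supplies the edge $e_3$ omitting $x_3$ (for $k=3$ the structure would be a chordless triangle, i.e.\ a difficult structure handled directly elsewhere).
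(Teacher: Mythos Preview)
Your proof is correct and matches the paper's own argument essentially line for line: set $X_i=\{x_i\}$ and verify the four conditions of the Reduction Lemma, using the edges $\{x_1,\ldots,x_k\}\setminus\{x_1\}$, $\{x_1,\ldots,x_k\}\setminus\{x_2\}$, $\{x_1,\ldots,x_k\}\setminus\{x_3\}$ as the three connectors and the hypothesis $x_k\notin h(\free(Q_2))$ for Condition~\ref{cond:otherquery}. Your added remarks (on why $k\ge 4$ is needed and why $\free(Q_1)$ is never invoked as a connector) are accurate commentary but not required for the argument.
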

\begin{proof}
Set $X_i=\{x_i\}$.
Condition~\ref{cond:notall} holds since no edge contains $\{x_1,\ldots,x_k\}$.
Condition~\ref{cond:connected} holds trivially since the sets are of size one.
Condition~\ref{cond:connectors} holds due to the edges
$\{x_1,\ldots,x_k\}\setminus\{x_1\}$,
$\{x_1,\ldots,x_k\}\setminus\{x_2\}$, and $\{x_1,\ldots,x_k\}\setminus\{x_3\}$.
Condition~\ref{cond:otherquery} holds since $x_k\not\in h(\free(Q_2))$.
\end{proof}

\subsubsection*{Proof Setup}

Let $Q = Q_1 \cup Q_2$ be non-redundant where $Q_1$ is difficult and $Q_2$ is free-connex, and assume that the resolved $Q_1^+$ is not free-connex. We want to show that the Reduction Lemma can be applied, to prove \autoref{lemma:general-arity-reduction-applies}. 
Since $Q_1$ is difficult, it is self-join-free, and so there is at most one body-homomorphism from $Q_2$ to $Q_1$.
If no such homomorphism exists, then in particular, $Q_2$ does not provide any difficult structure in $Q_1$, and according to \autoref{lemma:original-apply}, the Reduction Lemma can be applied.
So we can assume that there is one such body-homomorphism $h$.
Since the resolved $Q_1^+$ is not free-connex, there is a difficult structure in $Q_1^+$, and since $Q_1$ is fully resolved, $h(\free(Q_2))$ does not contain all of the variables in this structure. We prove that the reduction can be applied in this case by induction on the extension steps.
Specifically, we prove the following claim by induction on $t$: \emph{if there exists a variable $v\not\in h(\free(Q_2))$ in a difficult structure in an extension of $Q_1$ obtained after $t$ resolution steps, then the Reduction Lemma can be applied}.
Since the precondition of this claim is satisfied for $Q_1^+$, \autoref{lemma:general-arity-reduction-applies} follows after proving this claim.

The base case of the induction is given by \autoref{lemma:original-apply}: If $Q_2$ does not provide some difficult structure in $Q_1$, then the Reduction Lemma can be applied.
It remains to show the induction step.

\subsubsection*{Induction Step}

We can now prove the induction step for our claim from above.
Assume there exists a variable $v\not\in h(\free(Q_2))$ in a difficult structure $S$ in an extension of $Q_1$ after $t$ resolution steps. If all edges in this structure appear in $Q_1$, then by \autoref{lemma:original-apply}, the Reduction Lemma can be applied.
Otherwise, take the last extension $Q_1'$ in the sequence of resolution steps where this structure does not appear. This means that $Q_1'$ contains all edges of $S$ except one, and this missing edge comprises of the nodes of some difficult structure in $Q_1'$, where all these nodes are in $h(\free(Q_2))$.
Note that if we show that $v$ is in a difficult structure in $Q_1'$, we can use the induction hypothesis to show that the Reduction Lemma applies.
We now embark on a rigorous case distinction, and we first distinguish cases according to the type of difficult structure of $S$.

\subsubsection{Tetra}
The first case is that $S$ is a tetra of size $k\ge 4$. In this case, $S$ is introduced to $Q_1'$ by adding an edge containing $k-1$ of its variables, all of them in $h(\free(Q_2))$. Denote this edge by $\{x_1,...,x_{k-1}\}$. As $v$ is part of $S$ and $v\not\in h(\free(Q_2))$, we conclude that $v$ is the remaining variable of the tetra and that no edge in $Q_1'$ (or $Q_1$) contains $\{x_1,\ldots,x_{k-1},v\}$. Since all other tetra edges contain $v$ and $v\not\in h(\free(Q_2))$, we know that all other tetra edges already appear in $Q_1$ as they cannot be added as part of an extension. We can use the Reduction Lemma in this case according to \autoref{lemma:almost-tetra-reduction}.

\subsubsection{Free-Path}
Consider the case that $S$ is a free-path $u_1,\ldots,u_k$. Let $u_j,u_{j+1}$ be the free-path edge that is missing in $Q_1'$. This means that $u_j,u_{j+1}\in h(\free(Q_2))$, so $v\not\in\{u_j,u_{j+1}\}$, and we can assume without loss of generality that $v=u_i$ with $i<j$. 
Denote by $S'$ the difficult structure in $Q_1'$ that causes the addition of the edge $\{u_j,u_{j+1}\}$.
Observe that $S$ and $S'$ intersect exactly in the nodes $\{u_j,u_{j+1}\}$: If $S'$ would contain another node $u_x$ of $S$, then $u_x$ would also appear in the edge with $\{u_j,u_{j+1}\}$ added in the next extension step, contradicting the free-path $S$ being chordless.
We now distinguish cases according to the type of difficult structure of $S'$.
Note that, as covering tetras does not connect pairs of variables that were not neighbors before, this structure cannot be a tetra.

\paragraph{Covering a Cycle}
In this case, $u_j$ and $u_{j+1}$ appear together in a chordless cycle $S'$ in $Q_1'$. Denote the two paths between $u_j$ and $u_{j+1}$ on the cycle $S'$ by $u_j=t_1,\ldots,t_n=u_{j+1}$ and  $u_j=b_1,\ldots,b_m=u_{j+1}$.
We now distinguish the following cases:
\begin{itemize}
\item
If $u_i$ has neighbors in both $t_2,\ldots,t_n$ and $b_2,\ldots,b_m$, let $t_p$ and $b_q$ be its neighbors with largest indices $p$ and $q$.
Since $S$ is chordless, $u_{j+1}$ is not a neighbor of $u_i$, and so $p<n$ and $q<m$.
Then, $u_i,t_p\ldots,t_n=b_m,\ldots,b_q,u_i$ is a chordless cycle of length at least $4$.
\item
In the remaining case, we can assume without loss of generality that $t_2,\ldots,t_{n}$ are not neighbors of $u_i$. We distinguish:
    \begin{itemize}
    \item
    If there is an edge from a node before $u_i$ to $t_2,\ldots,t_{n}$, pick $p<i$ maximal such that $u_p$ has an edge to $t_2,\ldots,t_{n}$, and pick $1<q\le n$ minimal such that $u_p-t_q$ is an edge. Let $P$ be the chordless shortening of the path $u_p,\ldots,u_j=t_1,\ldots,t_q$. This path $P$ starts with $u_p,\ldots,u_i,u_{i+1}$ (since $S$ is a free-path and thus chordless, by maximality of $p$, and since $u_i$ has no neighbors in $t_2,\ldots,t_{n}$). It follows that $P$ together with the edge $u_p-t_q$ forms a chordless cycle of length at least 4 containing $v=u_i$. 
    \item
    If there is no edge from any of $u_1,\ldots,u_{i-1}$ to any of $t_2,\ldots,t_{n-1}$, let $P$ be the chordless shortening of the path $u_1,\ldots,u_j=t_1,\ldots,t_n=u_{j+1},\ldots,u_k$. 
    This path $P$ starts with $u_1,\ldots,u_i,u_{i+1}$ as the first $i$ variables do not have chords on the path. 
    The path $P$ starts and ends in a free variable, and $u_2,\ldots,u_{i+1}$ are not free, so by taking the prefix of $P$ that stops at the second free variable along $P$, we obtain a free-path containing $u_i$.
    \end{itemize}
\end{itemize}

In each case, we can apply the induction hypothesis to prove the inductive step.

\paragraph{Covering a Free-Path}
In this case, $u_j$ and $u_{j+1}$ appear together in a free-path $S'$ in $Q_1'$. Denote this free-path by $f_1,\ldots,f_n$ such that $u_j=f_a$ and $u_{j+1}=f_b$ for some $1\le a < b \le n$.
We distinguish the following cases:
\begin{itemize}
\item
If there is no edge from any node in $u_1,\ldots,u_{i-1}$ to any node in $f_{a+1}\ldots,f_{b-1}$, consider the path $u_1,\ldots,u_j=f_a,\ldots,f_b=u_{j+1},\ldots,u_k$.
The chordless shortening of this path starts with $u_1,\ldots,u_{i}$, since the first $i-1$ nodes do not have chords with any of the path nodes, and it consists of at least $3$ nodes because $u_1$ and $u_k$ are not neighbors, so it is a free-path containing $u_i$. Thus, we can apply the induction hypothesis and are done.
\item
If there is no edge from any node in $u_1,\ldots,u_{i-1}$ to any node in $f_1,\ldots,f_{a-1}$, consider the path $u_1,\ldots,u_j=f_a,\ldots,f_1$.
The chordless shortening of this path starts with $u_1,\ldots,u_{i}$ as the first $i-1$ nodes do not have chords with any of the path nodes, so if it consists of at least $3$ nodes, then it is a free-path containing $u_i$. The remaining case is that the chordless shortening has length 2, which can only happen if $i=1$, and the chordless shortening is $u_1,f_1$. As $i=1$, the previous case applies and shows that $u_i$ is part of a free-path. In both cases, we can apply the induction hypothesis and are done.
\item
The last case is that there is an edge from some node in $u_1,\ldots,u_{i-1}$ to some node in $f_1,\ldots,f_{a-1}$, and there is an edge from some node in $u_1,\ldots,u_{i-1}$ to some node in $f_{a+1}\ldots,f_{b-1}$.
Denote by $u_p-f_q$ an edge with $1\le p<i$ and $1\le q<a$.
Consider the cycle $u_p,\ldots,u_j=f_a,\ldots,f_q,u_p$.
This is a simple cycle because $S$ and $S'$ intersect exactly in the nodes $\{u_j,u_{j+1}\}$, as we have previously argued.
Since $S$ is chordless, we also conclude that $\{u_{i-1},u_i,u_{i+1}\}$ do not appear together in an edge.
According to \autoref{lemma:simple-cycle-implications}, either $u_i$ is part of a chordless cycle in $Q_1'$ (so we can apply the induction assumption and we are done) or it is a center of a hand-fan from $u_{i-1}$ to $u_{i+1}$.
Note that, as there cannot be edges containing $\{u_i,u_{i+1},u_{i+2}\}$ or $\{u_{i-2},u_{i-1},u_i\}$ by the chordlessness of $S$, the hand-fan path uses as intermediate nodes only nodes of $f_1,\ldots,f_{a-1}$.
By applying the same argument on $f_{a+1},\ldots,f_{b-1}$, we obtain that $u_i$ appears in a chordless cycle in $Q_1'$ or $u_i$ is the center of a hand-fan from $u_{i-1}$ to $u_{i+1}$ through nodes of $f_{a+1},\ldots,f_{b-1}$.
By assembling the two hand-fans we discovered, we obtain that $u_i$ is the center of a flower, and according to \autoref{lemma:flower-reduction} the Reduction Lemma can be applied.
\end{itemize}

\subsubsection{Cycle}
It remains to consider the case that $S$ is a chordless cycle.
Denote this cycle by $u_1,\ldots,u_k$ such that the edge containing $\{u_1,u_k\}$ does not appear in $Q_1'$. Note that $v=u_i$ for some $1<i<k$, as $v$ cannot be part of an extension edge.
Denote by $S'$ the difficult structure in $Q_1'$ that causes the addition of the edge $\{u_1,u_k\}$.
As before, observe that $S$ and $S'$ intersect exactly in the nodes $\{u_1,u_k\}$: If $S'$ would contain another node $u_x$ of $S$, then $u_x$ would also appear in the edge with $\{u_1,u_k\}$ added in the next extension step, contradicting the cycle $S$ being chordless.
We now further distinguish cases according to the type of difficult structure of $S'$.
Note that, as covering tetras does not connect pairs of nodes that were not neighbors before, this structure cannot be a tetra.

\paragraph{Covering a Cycle}
In this case, $u_1$ and $u_{k}$ appear together in a chordless cycle $S'$ in $Q_1'$.
Denote by $P_1$ and $P_2$ the two paths in $Q_1'$ remaining from the covered cycle when removing $u_1$ and $u_k$. 
Consider the cycle obtained by concatenating $P_1$ with $u_1,u_2,\ldots,u_k$. This is a simple cycle since $S$ and $S'$ intersect exactly in $\{u_1,u_k\}$.
Since $S$ is chordless, we also conclude that $\{u_{i-1},u_i,u_{i+1}\}$ do not appear together in an edge.
According to \autoref{lemma:simple-cycle-implications}, either $u_i$ is part of a chordless cycle in $Q_1'$ (so we can apply the induction assumption and we are done) or it is a center of a hand-fan from $u_{i-1}$ to $u_{i+1}$.
Note that, as there cannot be edges containing $\{u_i,u_{i+1},u_{i+2}\}$ or $\{u_{i-2},u_{i-1},u_i\}$  (in case $u_{i-2}$ and $u_{i+1}$ are defined) by the chordlessness of $S$, the hand-fan path uses as intermediate nodes only nodes of $P_1$.
By applying the same argument on $P_2$, we get that $v=u_i$ appears in a chordless cycle in $Q_1'$ (so we are done) or $u_i$ is the center of a hand-fan from $u_{i-1}$ to $u_{i+1}$ through nodes of $P_2$.
By assembling the two hand-fans we discovered, we obtain that $u_i$ is the center of a flower, and according to \autoref{lemma:flower-reduction} the Reduction Lemma can be applied.

\paragraph{Covering a Free-Path}
In this case, $u_1$ and $u_k$ appear together in a free-path $S'$ in $Q_1'$. Denote this free-path by $f_1,\ldots,f_n$ such that $u_1=f_a$ and $u_k=f_b$ for some $1\le a < b \le n$.
We distinguish the following cases:

\begin{itemize}
\item
If no edge exists from $v=u_i$ to any node in $f_{a+1},\ldots,f_{b-1}$,
consider the cycle $u_1,\ldots,u_k=f_b,\ldots,f_a=u_1$.
Note that $v$ is part of this cycle, but it is not part of any chords on this cycle.
This is a simple cycle because the intersection of $S$ and $S'$ is $\{u_1,u_k\}$ as we argued before. 
Since $S$ is chordless, we know that no edge contains $\{u_{i-1},u_i,u_{i+1}\}$. Combining these facts, by \autoref{lemma:simple-cycle-implications} we obtain that $u_i$ is part of a chordless cycle. We can thus apply the induction hypothesis and are done.
\item
If there exists an edge between $v$ and some node on the path $f_{a+1},\ldots,f_{b-1}$, denote by $x$ and $y$ the smallest and largest indices between $1$ and $n$ such that $f_x$ and $f_y$ are neighbors of $v$ (it is possible that $x=y$). 
\begin{itemize}
    \item 
    If $v\in\free(Q_1')$:
        \begin{itemize}
        \item
        In case $v$ is not a neighbor of $f_1$ or not a neighbor of $f_n$, we can assume without loss of generality that $v$ is not a neighbor of $f_n$. Then the path $v,f_{y},\ldots,f_n$ is chordless. It consists of at least $3$ nodes because its end-points are not neighbors, and so it is a free-path containing $v$. Thus, we can apply the induction hypothesis and are done.
        \item
        If $v$ is a neighbor of both $f_1$ and $f_n$,
        consider the cycle $v,f_1,\ldots,f_n,v$.
        This is a simple cycle since $S$ and $S'$ intersect in exactly $\{u_1,u_k\}$ as argued before.
        Since the free-path is chordless, $f_1$ and $f_n$ are not neighbors, and so by \autoref{lemma:simple-cycle-implications}, $v$ is part of a chordless cycle (in which case we can apply the induction hypothesis and are done) or the center of a hand-fan, which in this case is a free-hand-fan. In the latter case, the Reduction Lemma applies by \autoref{lemma:free-hand-fan-reduction}.
        \end{itemize}
    \item
    If $v\not\in\free(Q_1')$:
        \begin{itemize}
        \item
        If $v$ is a neighbor of some node in $f_{b+1},\ldots,f_n$ or some node in $f_1,\ldots,f_{a-1}$,
        since we also know that $v$ has some neighbor in $f_{a+1},\ldots,f_{b-1}$, this means that $v$ has (at least) two non-adjacent neighbors on the free-path. Hence, $f_1,\ldots,f_{x},v,f_{y},f_n$ is a free-path, and we are done.
        \item
        If $v$ has no neighbors in $f_1,\ldots,f_{a-1}$ and $f_{b+1},\ldots,f_n$:
            \begin{itemize}
            \item
            If there are no edges from any node in $u_2,\ldots,u_{i-1}$ to any node in $f_{y},\ldots,f_n$ and symmetrically there are no edges from any node in $u_{i+1},\ldots,u_{k-1}$ to any node in $f_1,\ldots,f_{x}$,
            then consider the path $f_1,\ldots,f_a=u_1,\ldots,u_k=f_b,\ldots,f_n$.
            This path contains $u_i$, and it does not contain chords that cross between its two sides, that is, there are no chords between a node before $u_i$ to a node after $u_i$ on this path. Hence, its chordless shortening is a free-path that contains $u_i$.
            \item
            In case there is an edge from some node in $u_2,\ldots,u_{i-1}$ to some node in $f_{y},\ldots,f_n$ or an edge from some node in $u_{i+1},\ldots,u_{k-1}$ to some node in $f_1,\ldots,f_{x}$,
            we can assume without loss of generality that there is an edge $u_p-f_q$ from some node in $u_2,\ldots,u_{i-1}$ to some node in $f_{y},\ldots,f_n$.
            Apply \autoref{lemma:simple-cycle-implications} on the simple cycle $u_p,\ldots,u_k=f_b,\ldots,f_q,u_p$ and $v=u_i$.
            Since the cycle $S$ is chordless, there are no edges containing $\{u_{i-1},u_i,u_{i+1}\}$.
            Since $S$ is chordless and since $u_i$ has no neighbors in $f_{b+1},\ldots,f_n$, we get that $u_i$ has no neighbors in this cycle other than $u_{i-1}$ and $u_{i+1}$ (if $u_i$ has $f_b=u_k$ as a neighbor, we have that $i=k-1$), so $u_i$ is not the center of a hand-fan as described by \autoref{lemma:simple-cycle-implications}.
            We conclude that $v=u_i$ is part of a chordless cycle.
            \end{itemize}
        \end{itemize}
    \end{itemize}
\end{itemize}

In all cases, we either showed that $v$ is in a difficult structure in $Q_1'$, so we can use the induction hypothesis to show that the Reduction Lemma applies, or we directly showed the Reduction Lemma applies using \autoref{lemma:free-hand-fan-reduction}, \autoref{lemma:flower-reduction}, or \autoref{lemma:almost-tetra-reduction}.
This concludes the proof by induction and proves \autoref{lemma:general-arity-reduction-applies}.

\subsection{A VUTD-Based Dichotomy}\label{sec:dichotomy}

In this section, we show that if we assume the \UTD{} hypothesis, we can conclude the previously known hardness results without making additional assumptions.
\autoref{thm:two-hards-intractability} states that if a union of two difficult CQs does not admit a free-connex union extension, then it is not in $\DelayClin$ assuming the \hyperclique{} and \fourclique{} hypotheses.
To conclude with a dichotomy, it remains to replace the hypotheses used in \autoref{thm:two-hards-intractability} by the new \UTD{} hypothesis.
We show that assuming the \hyperclique{} hypothesis can always be replaced with assuming the \UTD{} hypothesis.

\begin{prop}\label{prop:hyperclique-UTD}
The \UTD{} hypothesis implies the \hyperclique{} hypothesis.
\end{prop}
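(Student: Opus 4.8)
The plan is to establish the contrapositive: if the \hyperclique{} hypothesis fails, then so does the \UTD{} hypothesis. Assume that for some fixed $k\ge 3$ there is an algorithm $\mathcal{A}$ that detects a $k$-hyperclique in a $(k-1)$-uniform hypergraph with $N$ vertices in time $O(N^{k-1})$. I will use $\mathcal{A}$ to detect triangles in unbalanced tripartite graphs with $|V_1|=|V_2|=n^\alpha$ and $|V_3|=n$ in time $O(n^{1+\alpha})$ for the specific value $\alpha=\tfrac1{k-2}\in(0,1]$; this refutes \aUTD{$\alpha$} and hence \UTD{}, since \UTD{} asserts \aUTD{$\alpha$} for every $\alpha\in(0,1]$.

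The heart of the argument is an encoding of an unbalanced triangle instance $G$ as a $(k-1)$-uniform hypergraph $H$ on $\Theta(n^\alpha)$ vertices. Keep $a\in V_1$ and $b\in V_2$ as single vertices, but spread each $c\in V_3$ over $k-2$ coordinates by fixing an injection $V_3\hookrightarrow W_1\times\cdots\times W_{k-2}$ with pairwise-disjoint sets of size $|W_i|=\lceil n^\alpha\rceil$ (possible because $\prod_i|W_i|=\lceil n^\alpha\rceil^{k-2}\ge n^{\alpha(k-2)}=n$). Let $H$ have vertex set $V_1\sqcup V_2\sqcup W_1\sqcup\cdots\sqcup W_{k-2}$, so $N=\Theta(n^\alpha)$ and $H$ has exactly $k$ ``parts'', with three families of hyperedges named by the part they avoid: for every $(b,c)\in E_{2,3}$, where $c$ has coordinates $(w_1,\ldots,w_{k-2})$, add $\{b,w_1,\ldots,w_{k-2}\}$; symmetrically, for every $(a,c)\in E_{1,3}$ add $\{a,w_1,\ldots,w_{k-2}\}$; and for every $j\in\{1,\ldots,k-2\}$, every $(a,b)\in E_{1,2}$, and every tuple $(w_i)_{i\ne j}\in\prod_{i\ne j}W_i$, add $\{a,b\}\cup\{w_i:i\ne j\}$. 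No hyperedge contains two vertices of the same part, and for $k\ge 3$ any two elements of a $k$-set lie in a common $(k-1)$-subset, so a $k$-hyperclique must take exactly one vertex per part, i.e.\ has the form $\{a,b,w_1,\ldots,w_{k-2}\}$. Requiring all its $(k-1)$-subsets to be hyperedges then says precisely: $(a,b)\in E_{1,2}$ (from the subsets avoiding some $w_j$), and $(w_1,\ldots,w_{k-2})$ is the coordinate tuple of some $c\in V_3$ with $(b,c)\in E_{2,3}$ and $(a,c)\in E_{1,3}$ (from the subsets avoiding $a$ and $b$, where $V_3$-membership is also enforced). Since the injection makes $c$ unique, this is exactly ``$(a,b,c)$ is a triangle'', and the converse direction is immediate; hence $H$ has a $k$-hyperclique iff $G$ has a triangle.

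It remains to verify the running time. The two families avoiding $V_1$, respectively $V_2$, have at most $|V_2|\cdot|V_3|=O(n^{1+\alpha})$, respectively $|V_1|\cdot|V_3|=O(n^{1+\alpha})$, hyperedges and are built in that time. The family avoiding $w_j$ has $\sum_j|E_{1,2}|\cdot(n^\alpha)^{k-3}=O(n^{2\alpha}\cdot n^{(k-3)\alpha})=O(n^{(k-1)\alpha})$ hyperedges, which equals $O(n^{1+\alpha})$ precisely because $\alpha=\tfrac1{k-2}$ makes $(k-1)\alpha=1+\alpha$. Running $\mathcal{A}$ then costs $O(N^{k-1})=O(n^{\alpha(k-1)})=O(n^{1+\alpha})$, so the whole triangle-detection procedure runs in $O(n^{1+\alpha})$, the desired contradiction.

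I expect the main obstacle to be exactly this parameter balancing: the cost of $\mathcal{A}$ and the size of the ``avoid-$w_j$'' family are both of order $n^{(k-1)\alpha}$, while the budget allowed by \aUTD{$\alpha$} is $n^{1+\alpha}$, and the only way to make these match for every $k$ is to split $V_3$ into $k-2$ coordinates and take $\alpha=\tfrac1{k-2}$. Once this choice is fixed, verifying correctness of the hypergraph is a routine inspection of the $(k-1)$-subsets of a candidate hyperclique. (The extreme case $k=3$ degenerates gracefully: $\alpha=1$, $V_3$ is not split, $H$ is essentially the graph $G$ itself, and a $3$-hyperclique is a triangle.)
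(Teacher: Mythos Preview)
Your proof is correct and follows essentially the same approach as the paper: both prove the contrapositive by fixing $\alpha=\tfrac{1}{k-2}$, splitting $V_3$ into $k-2$ coordinate sets of size $\Theta(n^\alpha)$, and building a $(k-1)$-uniform hypergraph on the $k$ parts whose $k$-hypercliques correspond to triangles. Your write-up is in fact more detailed than the paper's on the correctness of the correspondence (the paper simply asserts the one-to-one correspondence), and your explicit treatment of the degenerate case $k=3$ is a nice addition.
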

\begin{proof}
If the \hyperclique{} hypothesis does not hold, there exists $k\geq 3$ such that it is possible to determine the existence of a $k$-hyperclique in a $(k-1)$-uniform hypergraph with $n$ vertices in time $O(n^{k-1})$.
Set $\alpha=\frac{1}{k-2}$.

Assume we are given a tripartite graph $G$ with vertex sets $V_1$,$V_2$,$V_3$ and edge sets $E_{1,2}$, $E_{2,3}$, $E_{1,3}$ where $|V_1|=|V_2|=\Theta(n^\alpha)$ and $|V_3|=n$.
We now construct a hyperclique instance~$G'$.
We encode the vertices of $V_3$ as $U_3\times\cdots\times U_k$ such that $|U_3|=\ldots=|U_{k}|=\Theta(n^\alpha)$.
We can do this by putting $\lceil\alpha\log{n}\rceil$ bits of the binary representation of an identifier of a vertex of $V_3$ in each of $U_3,\ldots U_{k}$.
For every edge $(v_1,v_3)\in E_{1,3}$, add an edge $\{v_1,u_3,\ldots,u_k\}$ where $u_3,\ldots,u_k$ is the representation of $v_3$.
For every edge $(v_2,v_3)\in E_{2,3}$, add an edge $\{v_2,u_3,\ldots,u_k\}$ where $u_3,\ldots,u_k$ is the representation of $v_3$.
For every edge $(v_1,v_2)\in E_{1,2}$, add an edge containing $v_1$, $v_2$ and every combination of $k-3$ vertices from distinct sets in $U_3,\ldots,U_k$.
This results in a $(k-1)$-uniform hypergraph $G'$ with $O(kn^\alpha)$ vertices, and
$k$-hypercliques in $G'$ are in one-to-one correspondence to triangles in the tripartite graph $G$.
Using the assumed algorithm, we can detect a $k$-hyperclique in $G'$ and thus a triangle in $G$ in time $O((kn^\alpha)^{k-1})=O(n^{1+\alpha})$, contradicting the \UTD{} hypothesis.
\end{proof}

The original statement of \autoref{thm:two-hards-intractability} also assumes the \BMM{} hypothesis~\cite{DBLP:conf/pods/CarmeliK19}, but this assumption is not required if the \hyperclique{} hypothesis is already assumed.

\begin{prop}\label{prop:BMM-UTD}
The \hyperclique{} hypothesis implies the \BMM{} hypothesis.
\end{prop}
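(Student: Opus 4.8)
The plan is to prove the contrapositive: assuming that the \BMM{} hypothesis fails --- that is, two Boolean $n\times n$ matrices can be multiplied in time $O(n^2)$ --- I will derive a failure of the \hyperclique{} hypothesis by refuting its $k=3$ instance, which asserts that the existence of a $3$-hyperclique in a $2$-uniform hypergraph (equivalently, a triangle in an $n$-vertex graph) cannot be decided in time $O(n^{k-1}) = O(n^2)$.

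First I would invoke the classical reduction from triangle detection to Boolean matrix multiplication (due to Itai and Rodeh). Given a graph $G$ on $n$ vertices with adjacency matrix $A\in\{0,1\}^{n\times n}$ (no self-loops), compute the Boolean product $B = A\cdot A$, where $B_{ij} = \bigvee_k (A_{ik}\wedge A_{kj})$; by the assumed algorithm this costs $O(n^2)$ time. Then $B_{ij}=1$ holds exactly when some vertex $k$ is a common neighbor of $i$ and $j$, so $G$ contains a triangle if and only if there is a pair $(i,j)$ with $A_{ij}=B_{ij}=1$. Scanning all $n^2$ pairs decides this in $O(n^2)$ time, which together with the matrix product yields an $O(n^2)$-time triangle detector, contradicting the $k=3$ case of the \hyperclique{} hypothesis.

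I do not expect a genuine obstacle here, as this is a folklore argument; the only points that need care are that the product is taken over the Boolean semiring --- so that $B_{ij}$ records the existence of a length-$2$ walk rather than a count --- and that a witnessing pair $(i,j)$ together with its intermediate vertex $k$ really consists of three pairwise-distinct vertices. The latter follows because $A$ has no self-loops (so $A_{ik}=1$ forces $k\neq i$ and $A_{kj}=1$ forces $k\neq j$) and $A_{ij}=1$ forces $i\neq j$; the converse direction is immediate, since any triangle $\{a,b,c\}$ satisfies $A_{ab}=1$ and $B_{ab}\geq A_{ac}\wedge A_{cb}=1$. This establishes that a failure of \BMM{} implies a failure of \hyperclique{}, which is the claim.
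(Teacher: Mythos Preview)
Your proposal is correct and follows essentially the same approach as the paper: both argue the contrapositive by showing that an $O(n^2)$ Boolean matrix multiplication yields an $O(n^2)$ triangle detector, refuting the $k=3$ case of \hyperclique{}. The only cosmetic difference is that the paper phrases the reduction via a tripartite graph (multiplying the $V_1\times V_2$ and $V_2\times V_3$ adjacency matrices) whereas you square the single adjacency matrix of a general graph; these are equivalent formulations of the classical Itai--Rodeh argument.
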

\begin{proof}
Boolean matrix multiplication can be used to detect triangles in a tripartite graph:
Consider the multiplication of the adjacency matrix of $V_1$ and $V_2$ with the adjacency matrix of $V_2$ and $V_3$. Every result is a path of length two, and we can check in constant time whether its end-points are neighbors. Therefore, we can find all triangles in the same time it takes to multiply the matrices.
If \BMM{} does not hold, it is possible to multiply two Boolean $n\times n$ matrices in $O(n^2)$ time, and so it is possible to find triangles in a tripartite graph with $n$ vertices in time $O(n^2)$. This contradicts the \hyperclique{} hypothesis (for $k=3$).
\end{proof}

We do not know whether the \fourclique{} hypothesis is also implied by $\UTD{}$. Instead, we can show that the case in which Carmeli and Kröll~\cite{DBLP:conf/pods/CarmeliK19} use the \fourclique{} hypothesis can be resolved directly by our reduction in \autoref{lemma:reduction}. For this purpose, we need to recall the following structural properties.

\begin{defiC}[\cite{DBLP:conf/pods/CarmeliK19}]
Let $Q=Q_1\cup Q_2$ be a union of two body-isomorphic CQs, and let $h$ be a body-isomorphism from $Q_1$ to $Q_2$.
\begin{itemize}
\item $Q_1$ is said to be \emph{free-path guarded} if for every free-path $v_1,\ldots,v_k$ in $Q_1$, we have that $h(v_i)\in\free(Q_2)$ for all $1\le i\le k$.
\item $Q_1$ is said to be \emph{bypass guarded} if for every free-path $v_1,\ldots,v_k$ in $Q_1$ and variable $u$ of $Q_1$, if $Q_1$ has atoms containing $\{v_{i-1},v_i,u\}$ and $\{v_i,v_{i+1},u\}$ for some $1<i<k$, then $h(u)\in\free(Q_2)$.
\end{itemize}
\end{defiC}

We can now show that \autoref{thm:two-hards-intractability} holds independently of additional assumptions if we assume the \UTD{} hypothesis. \autoref{thm:intractable} also specifies the structural properties required for tractability.

\begin{thm}\label{thm:intractable}
Let $Q=Q_1\cup Q_2$ be a non-redundant union of two difficult CQs.
Assuming the \UTD{} hypothesis, the following are equivalent:
\begin{itemize}
    \item $Q\in\DelayClin$.
    \item $Q$ admits a free-connex union extension.
    \item $Q_1$ and $Q_2$ are body-isomorphic, acyclic, free-path guarded, and bypass guarded.
\end{itemize}
\end{thm}

\begin{proof}
If $Q_1$ and $Q_2$ are body-isomorphic, acyclic, free-path guarded, and bypass guarded, then $Q$ admits a free-connex union extension~\cite[Lemma 32]{DBLP:conf/pods/CarmeliK19}. According to \autoref{theorem:UCQtractability}, if $Q$ admits a free-connex union extension, then $Q\in\DelayClin$. It remains to prove that if $Q\in\DelayClin$, then $Q_1$ and $Q_2$ are body-isomorphic, acyclic, free-path guarded, and bypass guarded. We will show the contrapositive.
Assume the \UTD{} hypothesis. Then by Proposition \ref{prop:hyperclique-UTD} the \hyperclique{} hypothesis holds, which by \autoref{prop:BMM-UTD} also implies the \BMM{} hypothesis. Some of the results we cite next rely on these assumptions.

If $Q_1$ and $Q_2$ are not body-isomorphic and acyclic, then $Q\not\in\DelayClin$~\cite[Theorem 21]{DBLP:conf/pods/CarmeliK19}, so assume that they are body-isomorphic and acyclic.
If one of the CQs is not free-path guarded,
then $Q\not\in\DelayClin$~\cite[Lemma 27]{DBLP:conf/pods/CarmeliK19}.
It is left to handle the case that $Q_1$ and $Q_2$ are both free-path guarded and one of the CQs is not bypass guarded.
Assume without loss of generality that $Q_1$ is not bypass guarded.
We replace the hardness proof based on the \fourclique{} hypothesis~\cite[Lemma 28]{DBLP:conf/pods/CarmeliK19} with a hardness proof using our Reduction Lemma. Let $h$ be a body-isomorphism from $Q_1$ to $Q_2$.
In this case, we know there exist variables $z_0,z_1,z_2,u$ in $Q_1$ such that the following holds~\cite[in proof of Lemma 31]{DBLP:conf/pods/CarmeliK19}:
$z_0,z_2\in\free(Q_1)$,
$z_1\not\in\free(Q_1)$,
$h(u)\not\in\free(Q_2)$,
and the CQ $Q_1$ has two atoms containing $\{z_0,z_1,u\}$ and $\{z_1,z_2,u\}$
but no atom containing $\{z_0,z_2\}$. 
We can use the Reduction Lemma with $X_1=\{z_0\}$, $X_2=\{z_2\}$, $X_3=\{z_1\}$, and $X_4=\{u\}$.
Since there is no atom containing both $z_0$ and $z_2$, Condition~\ref{cond:notall} holds.
Condition~\ref{cond:connected} trivially holds since the sets $X_i$ are of size one. Condition~\ref{cond:connectors} holds due to the atoms containing $\{z_0,z_1,u\}$ and $\{z_1,z_2,u\}$, and since $z_0,z_2\in\free(Q_1)$, where the free variables form a valid connector since $z_1\not\in\free(Q_1)$. Condition~\ref{cond:otherquery} holds since $h(u)\not\in\free(Q_2)$.
Hence, $Q\not\in\DelayClin$.
\end{proof}

Combining \autoref{theorem:UCQtractability}, \autoref{thm:general-hardness} and \autoref{thm:intractable} allows us to base the entire dichotomy on one hypothesis.

\begin{cor}\label{cor:dichotomy}
Let $Q$ be a non-redundant union of two self-join-free CQs.
Assuming the \UTD{} hypothesis, the following are equivalent:
\begin{itemize}
    \item $Q\in\DelayClin$.
    \item $Q$ admits a free-connex union extension.
    \item At least one of the following holds:\begin{itemize}
        \item Both CQs are free-connex.
        \item The CQs are body-isomorphic, acyclic, free-path guarded, and bypass guarded.
        \item One CQ is free-connex, and the resolution process of Definition~\ref{def:resolution} turns the second CQ free-connex. 
    \end{itemize}
\end{itemize}
\end{cor}

\begin{proof}
In case both CQs are free-connex, then $Q$ is a free-connex union extension of itself, and $Q\in\DelayClin$  according to \autoref{theorem:UCQtractability}. Assume next that at least one of the CQs is not free-connex.
In case both CQs are difficult, by \autoref{thm:intractable}, $Q$ is tractable and admits a free-connex union extension if and only if the CQs are body-isomorphic, acyclic, free-path guarded, and bypass guarded.
The remaining case is that one CQ is free-connex and the other is difficult. In this case, \autoref{thm:general-hardness} shows that $Q$ is tractable and admits a free-connex union extension if and only if the resolved query is free-connex.
\end{proof}

\section{Discussions}

In the previous section, we proved that union extensions capture all unions of two self-join-free CQs in $\DelayClin$, assuming the \UTD{} hypothesis. We next discuss the possibility of showing variations of this result. In \autoref{sec:super-const} we discuss the relaxation of our tractability notion by allowing logarithmic factors, and in \autoref{appendix:otherUTDhypothesis} we explain why we cannot use a different hypothesis reminiscent of \UTD{}.

\subsection{Super-Constant Delay}\label{sec:super-const}

The class $\DelayClin$ is quite restrictive in that the preprocessing time must be linear and the delay must be constant. A natural relaxation of this class allows near-linear preprocessing time and polylogarithmic delay. As it turns out, our results also apply to this relaxed class, if we replace our \UTD{} hypothesis by the following.

\begin{description}
 \item[Stronger \UTD{} (sVUTD) Hypothesis] For any constant $\alpha\in(0,1]$ there exists $\varepsilon > 0$ such that determining whether there exists a triangle in a tripartite graph with $|V_3|=n$ and $|V_1|=|V_2|=\Theta(n^\alpha)$
cannot be done in time $O(n^{1+\alpha+\varepsilon})$.
\end{description}

For $\alpha = 1$ this hypothesis postulates that the exponent of matrix multiplication is $\omega > 2$.\footnote{Note that $\omega > 2$ implies the \BMM{} hypothesis but is strictly stronger, as e.g.\ an $O(n^2 \log n)$-time algorithm for BMM would show $\omega = 2$ but does not falsify the \BMM{} hypothesis.}
The case $\alpha < 1$ is an unbalanced analog of this. Hence, by essentially the same discussion as in \autoref{sec:utd-definition}, sVUTD formalizes a computational barrier.

Our proof can be modified to use sVUTD instead of the \UTD{} hypothesis.
With the exception of \autoref{sec:dichotomy} (that cites hardness results proved previously), our hardness results use the Reduction Lemma, so it is enough to modify that lemma. The proof of the Reduction Lemma shows how to build a database instance of size $O(n^{1+\alpha})$ that allows detecting triangles in an $\alpha$-unbalanced graph by enumerating at most $O(n^{1+\alpha})$ query answers. Thus, a query answering algorithm that runs with $O(|I|^{1+\varepsilon})$ preprocessing time and $O(|I|^\varepsilon)$ delay can detect triangles in $O(n^{(1+\alpha)(1+\varepsilon)})\leq O(n^{1+\alpha+2\varepsilon})$ time, which contradicts the sVUTD hypothesis for sufficiently small $\varepsilon$.
Thus, whenever the Reduction Lemma can be applied, we obtain that the UCQ cannot be solved with near-linear preprocessing time and polylogarithmic delay assuming the sVUTD hypothesis.
Similar variations can be applied to the results of \autoref{sec:dichotomy} by showing that sVUTD implies stronger versions of the \hyperclique{} and \BMM{} hypotheses, postulating that $k$-hypercliques cannot be detected in $O(n^{k-1+\varepsilon})$ time and that Boolean matrices cannot be multiplied in $O(n^{2+\varepsilon})$ time.

With these modifications, we obtain the following:
\emph{For every UCQ $Q$ for which we have shown $Q \not\in \DelayClin$ assuming the \UTD{} hypothesis, there exists a constant $\varepsilon > 0$ such that $Q$ cannot be answered with $O(|I|^{1+\varepsilon})$ preprocessing time and $O(|I|^\varepsilon)$ delay on input database $I$,} assuming the sVUTD hypothesis.
Thus, if we assume sVUTD, free-connex union extensions also capture all unions of two self-join-free CQs that can be answered with near-linear preprocessing time and polylogarithmic delay.

\subsection{Difference From Another UTD Hypothesis}
\label{appendix:otherUTDhypothesis}
A hypothesis called Unbalanced Triangle Detection was recently formulated by Kopelowitz and Vassilevska Williams~\cite{KopelowitzW20}.
In order to differentiate, we will refer to their hypothesis as \emph{Edge-Unbalanced Triangle Detection} (EUTD). Their hypothesis states the following:

\begin{description}
\item[(Edge-)Unbalanced Triangle Detection Hypothesis (EUTD)~\cite{KopelowitzW20}]
Given any constants $0 < \alpha \le \beta \le 1$ and $\varepsilon>0$, determining whether there exists a triangle in an $m$-edge tripartite graph with $O(m^\alpha)$ edges between $V_1$ and $V_2$ and $O(m^\beta)$ edges between $V_2$ and $V_3$
 has no algorithm running in time $O(m^{2/3+(\alpha+\beta)/3 - \varepsilon})$.
\end{description}

Note that the EUTD unbalancedness property restricts the number of edges between $V_1$ and $V_2$ and between $V_2$ and $V_3$, while our \UTD{} hypothesis restricts the number of vertices $|V_1|, |V_2|$. In the following, we discuss the reasons why we cannot use EUTD instead of \UTD{} in our work.

First, there is an algorithm for edge-unbalanced triangle detection that runs in time $O(m^{2/3 + (\alpha+\beta)/3} + m)$ if the exponent of matrix multiplication is $\omega = 2$~\cite[Theorem~5]{KopelowitzW20}. This matches the EUTD hypothesis (note that the additive term $O(m)$ is necessary to read the input). Intuitively, this means that EUTD is not strong enough to imply any lower bound on matrix multiplication, since assuming both the EUTD hypothesis and $\omega=2$ does not lead to a contradiction (at least not immediately). However, our proof requires hardness of BMM (see \autoref{sec:dichotomy}), so EUTD is not sufficiently strong for our purposes. 

Second, in one case of our proof, we argue that no algorithm can list all pairs of vertices in $V_1 \times V_3$ that are connected by a 2-path in linear time in terms of the input plus output size (cf.~the case of a free-path where an end variable is not provided in the proof of \autoref{lemma:original-apply}). This is implied by our \UTD{} hypothesis since there are $O(n^{1+\alpha})$ pairs of vertices in $V_1 \times V_3$, so the input plus output size is $O(n^{1+\alpha})$, and thus any such algorithm would contradict the \UTD{} hypothesis.
However, in the setting of EUTD, the number of pairs in $V_1 \times V_3$ that are connected by a 2-path can be up to $\Omega(m^{1+\alpha})$ (for $\beta=1$). Since this is much larger than the running time lower bound postulated by EUTD, the EUTD hypothesis does not say anything about the problem of listing pairs in $V_1 \times V_3$ connected by a 2-path.

\section{Conclusion}\label{sec:conclusion}

In this paper, as a first step, we proved new conditional lower bounds for UCQ answering based on the \threesum{} conjecture, via \UTL{}. Next, we defined the \UTD{} hypothesis and showed that it is closely connected to UCQ answering: On the one hand, we presented examples of UCQs that are currently not known to be in $\DelayClin$ but would be in $\DelayClin$ if the \UTD{} hypothesis is false. On the other hand, we used this hypothesis to establish a dichotomy for a class of UCQs: if the \UTD{} hypothesis is true, a union of two self-join-free CQs is in $\DelayClin$ if and only if it has a free-connex union extension.
Moreover, we identified the structural properties of the UCQs that fall within this tractable class.
Overall, we reduced a question about many UCQs to an arguably simpler question about unbalanced triangle detection: in order to reason about whether there exist UCQs in $\DelayClin$ that do not have a free-connex union extension, we should inspect the \UTD{} hypothesis.
If we assume the \UTD{} hypothesis, then the answer is `no' when considering unions of two self-join-free CQs.
If, on the other hand, we find a linear time algorithm for \UTD{}, then the answer is `yes', and we obtain a linear preprocessing and constant delay algorithm for additional UCQs.
This paper resolved all example UCQs that were stated as open in previous work except for one example of a union of more than two CQs (\cite[Example 36]{DBLP:conf/pods/CarmeliK19}).
Natural next steps are to try and prove a dichotomy for unions of more than two CQs (which seems very challenging, as we do not currently know how to resolve the aforementioned open example), and to study the unbalanced triangle detection problem further.

\section*{Acknowledgment}
  \noindent  \emph{Karl Bringmann:} This work is part of the project TIPEA that has received funding from the European Research Council (ERC) under the European Unions Horizon 2020 research and innovation programme (grant agreement No.~850979).
 \emph{Nofar Carmeli:} This work was supported by the Google PhD Fellowship. It was also funded by the French government under management of Agence Nationale de la Recherche as part of the ``Investissements d’avenir'' program, reference ANR-19-P3IA-0001 (PRAIRIE 3IA Institute).

\bibliographystyle{alphaurl}
\bibliography{references}

\begin{thebibliography}{WXXZ24}

\bibitem[AFLG15]{AmbainisFG15}
Andris Ambainis, Yuval Filmus, and Fran{\c{c}}ois Le~Gall.
\newblock Fast matrix multiplication: Limitations of the
  {C}oppersmith-{W}inograd method.
\newblock In {\em {STOC}}, pages 585--593. {ACM}, 2015.

\bibitem[ASU13]{AlonSU13}
Noga Alon, Amir Shpilka, and Christopher Umans.
\newblock On sunflowers and matrix multiplication.
\newblock {\em Comput. Complex.}, 22(2):219--243, 2013.

\bibitem[AVW18]{AlmanVW18}
Josh Alman and Virginia Vassilevska~Williams.
\newblock Further limitations of the known approaches for matrix
  multiplication.
\newblock In {\em {ITCS}}, volume~94 of {\em LIPIcs}, pages 25:1--25:15, 2018.

\bibitem[AW23]{VWilliams19}
Josh Alman and Virginia~Vassilevska Williams.
\newblock Limits on all known (and some unknown) approaches to matrix
  multiplication.
\newblock {\em {SIAM} J. Comput.}, 52(6):S18--285, 2023.

\bibitem[BB13]{bb:thesis}
Johann Brault-Baron.
\newblock {\em De la pertinence de l’{\'e}num{\'e}ration: complexit{\'e} en
  logiques propositionnelle et du premier ordre}.
\newblock PhD thesis, Universit{\'e} de Caen, 2013.

\bibitem[BB16]{brault2016hypergraph}
Johann Brault-Baron.
\newblock Hypergraph acyclicity revisited.
\newblock {\em ACM Computing Surveys (CSUR)}, 49(3):1--26, 2016.

\bibitem[BDG07]{bdg:dichotomy}
Guillaume Bagan, Arnaud Durand, and Etienne Grandjean.
\newblock On acyclic conjunctive queries and constant delay enumeration.
\newblock In {\em International Workshop on Computer Science Logic}, pages
  208--222. Springer, 2007.

\bibitem[BGS20]{berkholz2020tutorial}
Christoph Berkholz, Fabian Gerhardt, and Nicole Schweikardt.
\newblock Constant delay enumeration for conjunctive queries: a tutorial.
\newblock {\em ACM SIGLOG News}, 7(1):4--33, 2020.

\bibitem[Bri19]{Bringmann19}
Karl Bringmann.
\newblock Fine-grained complexity theory (tutorial).
\newblock In {\em 36th International Symposium on Theoretical Aspects of
  Computer Science (STACS 2019)}. Schloss Dagstuhl-Leibniz-Zentrum fuer
  Informatik, 2019.

\bibitem[CH20]{ChanH20}
Timothy~M. Chan and Qizheng He.
\newblock Reducing {3SUM} to {Convolution-3SUM}.
\newblock In {\em {SOSA}}, pages 1--7. {SIAM}, 2020.

\bibitem[CJN18]{icdt/BerkholzKS18}
Berkholz Christoph, Keppeler Jens, and Schweikardt Nicole.
\newblock Answering ucqs under updates and in the presence of integrity
  constraints.
\newblock In {\em Proceedings of the 21st International Conference on Database
  Theory (ICDT’18)}, volume~98, pages 1--8, 2018.

\bibitem[CK21]{DBLP:conf/pods/CarmeliK19}
Nofar Carmeli and Markus Kr{\"o}ll.
\newblock On the enumeration complexity of unions of conjunctive queries.
\newblock {\em ACM Transactions on Database Systems (TODS)}, 46(2):1--41, 2021.

\bibitem[CM77]{homomorphism/redundancy}
Ashok~K Chandra and Philip~M Merlin.
\newblock Optimal implementation of conjunctive queries in relational data
  bases.
\newblock In {\em Proceedings of the ninth annual ACM symposium on Theory of
  computing}, pages 77--90, 1977.

\bibitem[CS23]{carmeli2023selfjoins}
Nofar Carmeli and Luc Segoufin.
\newblock Conjunctive queries with self-joins, towards a fine-grained
  enumeration complexity analysis.
\newblock In {\em Proceedings of the 42nd ACM SIGMOD-SIGACT-SIGAI Symposium on
  Principles of Database Systems}, pages 277--289, 2023.

\bibitem[CW90]{CoppersmithW90}
Don Coppersmith and Shmuel Winograd.
\newblock Matrix multiplication via arithmetic progressions.
\newblock {\em J. Symb. Comput.}, 9(3):251--280, 1990.
\newblock \href {https://doi.org/10.1016/S0747-7171(08)80013-2}
  {\path{doi:10.1016/S0747-7171(08)80013-2}}.

\bibitem[DS11]{csl/DurandS11}
Arnaud Durand and Yann Strozecki.
\newblock Enumeration complexity of logical query problems with second-order
  variables.
\newblock In {\em Computer Science Logic (CSL'11)-25th International
  Workshop/20th Annual Conference of the EACSL (2011)}.
  Schloss-Dagstuhl-Leibniz Zentrum f{\"u}r Informatik, 2011.

\bibitem[Dur20]{DurandTutorial20}
Arnaud Durand.
\newblock Fine-grained complexity analysis of queries: From decision to
  counting and enumeration.
\newblock In {\em Proceedings of the 39th ACM SIGMOD-SIGACT-SIGAI Symposium on
  Principles of Database Systems}, pages 331--346, 2020.

\bibitem[FKP24]{FischerK024}
Nick Fischer, Piotr Kaliciak, and Adam Polak.
\newblock Deterministic {3SUM}-hardness.
\newblock In {\em {ITCS}}, volume 287 of {\em LIPIcs}, pages 49:1--49:24.
  Schloss Dagstuhl - Leibniz-Zentrum f{\"{u}}r Informatik, 2024.

\bibitem[GO95]{gajentaan1995class}
Anka Gajentaan and Mark~H Overmars.
\newblock On a class of {$O(n^2)$} problems in computational geometry.
\newblock {\em Computational geometry}, 5(3):165--185, 1995.

\bibitem[GU18]{GallU18}
Francois~Le Gall and Florent Urrutia.
\newblock Improved rectangular matrix multiplication using powers of the
  {Coppersmith-Winograd} tensor.
\newblock In {\em Proceedings of the Twenty-Ninth Annual ACM-SIAM Symposium on
  Discrete Algorithms}, pages 1029--1046. SIAM, 2018.

\bibitem[IR77]{ItaiR78}
Alon Itai and Michael Rodeh.
\newblock Finding a minimum circuit in a graph.
\newblock In {\em Proceedings of the ninth annual ACM symposium on Theory of
  computing}, pages 1--10, 1977.

\bibitem[KPP16]{DBLP:conf/soda/KopelowitzPP16}
Tsvi Kopelowitz, Seth Pettie, and Ely Porat.
\newblock Higher lower bounds from the {3SUM} conjecture.
\newblock In {\em Proceedings of the twenty-seventh annual ACM-SIAM symposium
  on Discrete algorithms}, pages 1272--1287. SIAM, 2016.

\bibitem[KVW20]{KopelowitzW20}
Tsvi Kopelowitz and Virginia Vassilevska~Williams.
\newblock Towards optimal set-disjointness and set-intersection data
  structures.
\newblock In {\em 47th International Colloquium on Automata, Languages, and
  Programming (ICALP 2020)}. Schloss Dagstuhl-Leibniz-Zentrum f{\"u}r
  Informatik, 2020.

\bibitem[Pat10]{Patrascu10}
Mihai Patrascu.
\newblock Towards polynomial lower bounds for dynamic problems.
\newblock In {\em Proceedings of the forty-second ACM symposium on Theory of
  computing}, pages 603--610, 2010.

\bibitem[Var82]{vardi1982complexity}
Moshe~Y Vardi.
\newblock The complexity of relational query languages.
\newblock In {\em Proceedings of the 14th Annual ACM Symposium on Theory of
  Computing}, pages 137--146, 1982.

\bibitem[VWX20]{WilliamsX20}
Virginia Vassilevska~Williams and Yinzhan Xu.
\newblock Monochromatic triangles, triangle listing and {APSP}.
\newblock In {\em {FOCS}}, pages 786--797. {IEEE}, 2020.

\bibitem[Wil18]{williams2018some}
Virginia~Vassilevska Williams.
\newblock On some fine-grained questions in algorithms and complexity.
\newblock In {\em Proceedings of the international congress of mathematicians:
  Rio de janeiro 2018}, pages 3447--3487. World Scientific, 2018.

\bibitem[WXXZ24]{WilliamsXXZ24}
Virginia~Vassilevska Williams, Yinzhan Xu, Zixuan Xu, and Renfei Zhou.
\newblock New bounds for matrix multiplication: from alpha to omega.
\newblock In {\em {SODA}}, pages 3792--3835. {SIAM}, 2024.

\end{thebibliography}

\appendix
\section{Proofs for \autoref{sec:completeness}}\label{sec:binary-proof}

We first prove some lemmas needed for the proof of \autoref{lemma:binary-free-connex}.

\begin{lem}\label{lemma:orig-connected}
Let $Q=Q_1\cup Q_2$ where $Q_1$ is self-join-free, and let $Q_1^+$ be the resolved $Q_1$.
If there is a path $P^+$ between $u$ and $v$ in $Q_1^+$, then there is a chordless path between $u$ and $v$ in $Q_1$ that goes only through variables of $\var(P^+)\cup h(\free(Q_2))$, where $h$ is the unique body-homomorphism from $Q_2$ to $Q_1$.
\end{lem}

\begin{proof}
Every edge in $Q_1^+$ either: (1) is an edge of $Q_1$; or (2) contains the variables of a difficult structure with variables contained in $h(\free(Q_2))$.
Note that every difficult structure is connected.
First, we obtain a path in $Q_1$ that starts and ends in the same variables as $P^+$, by replacing every new edge of $Q_1^+$ in $P^+$ with a corresponding path through the difficult structure that it covers. Then we take a chordless path contained in this path.
\end{proof}

\begin{lem}\label{lemma:orig-free-end}
Let $Q=Q_1\cup Q_2$ where $Q_1$ is self-join-free, and let $Q_1^+$ be the resolved $Q_1$.
If there is a path $P^+$ in $Q_1^+$ from a variable $v$ to a variable in $\free(Q_1)$, then there is a chordless path $P$ in $Q_1$ from $v$ to some $u\in\free(Q_1)$ such that $\var(P)\cap\free(Q_1)=\{u\}$, and $\var(P)\subseteq\var(P^+)\cup h(\free(Q_2))$, where $h$ is the unique body-homomorphism from $Q_2$ to $Q_1$.
\end{lem}

\begin{proof}
First, take the chordless path $P'$ in $Q_1$ that is obtained from $P^+$ using \autoref{lemma:orig-connected}.
Then, take the subpath of $P'$ between $v$ and the first variable in $\free(Q_1)$. Such a variable exists because $P'$ ends in a free variable.
\end{proof}

\begin{lem}\label{lemma:binary-cycle-to-chordless}
If a vertex $v$ appears in a simple cycle in a graph, then $v$ also appears in a chordless cycle.
\end{lem}

\begin{proof}
Denote the cycle by $v,v_2,\ldots, v_m,v$.
Take a chordless path contained in $v_2,\ldots, v_m$, denote it $v_2=u_1,\ldots,u_k=v_m$.
Let $u_t$ be the first vertex after $u_1$ which is a neighbor of $v$. Such a vertex exists because $u_m$ is a neighbor.
Then, the cycle $v,u_1,\ldots,u_t,v$ is chordless.
\end{proof}

\autoref{lemma:binary-cycle-to-chordless} may seem trivial for graphs, but a similar statement does not hold for hypergraphs (\autoref{lemma:simple-cycle-implications} is the equivalent, more complicated, lemma for hypergraphs). In fact, this difference between graphs and hypergraphs is the main reason why we cannot show \autoref{lemma:binary-free-connex} for UCQs with general relations (of arity larger than $2$).
We can now prove \autoref{lemma:binary-free-connex}.

\begin{proof}[Proof of \autoref{lemma:binary-free-connex}]
Let $Q_1^+$ be the resolved $Q_1$, and assume for the sake of contradiction that $Q_1^+$ is not free-connex. Thus, it contains a difficult structure $S$.
Since $Q_2$ provides all difficult structures of $Q_1$, by construction, $Q_1^+$ has no difficult structures that also appear in $Q_1$. Thus, $S$ is a new difficult structure (that does not appear in $Q_1$).
By \autoref{prop:one-homo}, there is a single body-homomorphism $h$ from $Q_2$ to $Q_1$, and so the variables that $Q_2$ can provide to $Q_1$ are $h(\free(Q_2))$. Since $Q_1^+$ is resolved, the variables of $S$ are not contained in $h(\free(Q_2))$. We distinguish three cases according to the type of difficult structure $S$ is: a tetra of size $k>3$, a chordless cycle, or a free-path.

The first case is that $S$ is a tetra of size $k>3$.
Since some variable of the tetra is not in $h(\free(Q_2))$, all atoms of $Q_1^+$ that contain this variable appear in $Q_1$.
These atoms are therefore binary, which implies $k=3$, a contradiction to the assumption $k > 3$.

We now treat the case that $S$ is a chordless cycle. Denote the cycle by $x_1,\ldots,x_k$ such that $x_k\not\in h(\free(Q_2))$. Note that $x_{k-1},x_k,x_1$ are distinct variables.
Since $x_k$ is not provided, we know that the edges containing $\{x_{k-1}, x_k\}$ and $\{x_k, x_1\}$ appear in $Q_1$.
Due to the path $x_1,\ldots,x_{k-1}$ and since $x_k\not\in h(\free(Q_2))$, it follows from \autoref{lemma:orig-connected} that there is a simple path between $x_1$ and $x_{k-1}$ in $Q_1$ that does not go through $x_k$. This, together with the two edges $\{x_{k-1}, x_k\}$ and $\{x_k, x_1\}$, results in a simple cycle $x_1,\ldots,x_{k-1},x_k$ in $Q_1$.
By \autoref{lemma:binary-cycle-to-chordless}, since $x_k$ appears in a simple cycle in $Q_1$, it also appears in a chordless cycle in $Q_1$.
Since $x_k\not\in h(\free(Q_2))$, this contradicts our assumption that all difficult structures are provided.

Finally, we consider the case that $S$ is a free-path and denote $S=x_1,\ldots,x_k$. We have that $x_j\not\in h(\free(Q_2))$ for some $1\leq j \leq k$.
We now prove that $x_j$ appears in a difficult structure in $Q_1$.
This would mean that $x_j\in h(\free(Q_2))$, which is a contradiction.
First, assume that $x_j$ is at an end of the path; without loss of generality, $j=1$.
Since $x_1\not\in h(\free(Q_2))$, every edge containing $x_1$ in $Q_1^+$ also appears in $Q_1$, and so there is an edge $\set{x_1,x_2}$ in $Q_1$.
Since $x_2,\ldots,x_k$ is a path in $Q_1^+$ and $x_k\in\free(Q_1)$, by \autoref{lemma:orig-free-end}, there is a chordless path $x_2=t_1,\ldots,t_m$ in $Q_1$ such that $t_m$ is the only variable in $\{t_1,\ldots,t_m\}\cap\free(Q_1)$ and $\{t_1,\ldots,t_m\}\subseteq\{x_2,\ldots,x_k\}\cup h(\free(Q_2))$.
Note that this path does not contain $x_1$, and
it is a chordless path of length $1$ or more that ends with a free variable, and all other variables are not free.
If there is a neighbor $t_i$ of $x_1$ with $i>1$, take $i$ to be the minimal such index, and $x_1,t_1,\ldots,t_i,x_1$ is a chordless cycle.
Otherwise, $x_1,t_1,\ldots,t_m$ is a chordless path, and it is a free-path.

We now address the case that $1<j<k$.
Apply the same process as before (\autoref{lemma:orig-free-end}) on both sides of $S$ to obtain chordless simple paths $x_{j+1}=t_1,\ldots,t_m$ and $x_{j-1}=v_1,\ldots,v_n$ that do not contain $x_j$, where $v_n$ and $t_m$ are free, and the other variables are existential.
Note that since $x_{j-1},x_j,x_{j+1}$ is part of a chordless path in $Q_1$, these three variables are distinct.
If the two paths share a variable or neighbors, $x_j$ is part of a chordless cycle. Otherwise, $v_n,\ldots,x_{j-1},x_j,x_{j+1},\ldots,t_m$ is a free-path.
\end{proof}

\autoref{thm:binary-hardness} now easily follows.

\begin{proof}[Proof of \autoref{thm:binary-hardness}]
If the resolution process of Definition~\ref{def:resolution} turns $Q_1$ free-connex, the resolved UCQ is a free-connex union extension of $Q$, and according to \autoref{theorem:UCQtractability}, $Q\in\DelayClin$.
If the resolved $Q_1^+$ is not free-connex, by \autoref{lemma:binary-free-connex}, $Q_2$ does not provide all difficult structures in $Q_1$.
According to \autoref{lemma:original-apply}, the Reduction Lemma can be applied, and $Q\not\in\DelayClin$, assuming the \UTD{} hypothesis.
\end{proof}

\end{document}